\documentclass[11pt]{article}
\usepackage[letterpaper,margin=1in]{geometry}
\usepackage[dvipsnames]{xcolor}
\definecolor{DarkRed}{rgb}{0.65,0,0}
\usepackage{amsmath,amssymb,amsthm}
\usepackage{graphicx}
\usepackage{algorithm}
\usepackage[colorlinks=true,linkcolor=DarkRed,linktoc=page,citecolor=green!50!black,urlcolor=blue]{hyperref}
\usepackage[capitalize]{cleveref}
\newcommand{\email}[1]{\href{mailto:#1}{#1}}
\numberwithin{equation}{section}
\numberwithin{algorithm}{section}
\crefname{section}{section}{sections}
\crefname{subsection}{subsection}{subsections}
\Crefname{figure}{Figure}{Figures}
\usepackage{amssymb}
\usepackage{tcolorbox}
\usepackage{mdframed}
\usepackage{tikz}
\usetikzlibrary{arrows.meta,backgrounds,calc,decorations.pathreplacing,fit,positioning}

\usepackage{algpseudocode}
\usepackage{nicefrac}
\usepackage{booktabs}
\usepackage{thm-restate}

\theoremstyle{plain}
\newtheorem{theorem}{Theorem}[section]
\newtheorem{lemma}[theorem]{Lemma}

\theoremstyle{definition}
\newtheorem{definition}[theorem]{Definition}
\theoremstyle{remark}
\newtheorem{remark}[theorem]{Remark}
\crefname{remark}{Remark}{Remarks}

\newcommand{\R}{\mathbb{R}}
\newcommand{\MST}{\operatorname{MST}}
\newcommand{\diam}{\operatorname{diam}}

\newcommand{\EE}{\mathbb{E}}

\newcommand{\calE}{\mathcal{E}}
\newcommand{\calF}{\mathcal{F}}

\newcommand{\eps}{\varepsilon}
\newcommand{\sse}{\subseteq}
\newcommand{\nf}{\nicefrac}
\newcommand{\cost}{c}
\newcommand{\costh}{\widehat{\cost}}
\newcommand{\polylog}{\mathrm{polylog}}
\newcommand{\meet}{\mathsf{meet}}

\newcommand{\calM}{\mathcal{M}}
\newcommand{\Dhat}{\smash{\widehat{D}}}
\newcommand{\dhat}{\smash{\widehat{d}}}

\definecolor{Acol}{RGB}{214,40,40}    %
\definecolor{Bcol}{RGB}{33,102,196}   %
\definecolor{Ccol}{RGB}{31,150,84}    %
\definecolor{Dcol}{RGB}{142,68,173}   %
\definecolor{Ecol}{RGB}{0,150,160}    %
\definecolor{Fcol}{RGB}{176,122,42}   %
\definecolor{Tour}{RGB}{13,71,161}    %

\title{Approximating Minimum Spanning Trees in Metrics under Semi-Random Corruptions}
\author{Aditya Anand\thanks{University of Michigan (\email{adanand@umich.edu}).}
\and Anupam Gupta\thanks{New York University and Google Research, USA (\email{anupam.g@nyu.edu}). Supported in part by NSF awards CCF-2422926 and CCF-2608359.}
\and Rajesh Jayaram\thanks{Google Research, USA (\email{rkjayaram@google.com}).}
\and Euiwoong Lee\thanks{University of Michigan (\email{euiwoong@umich.edu}). Supported in part by NSF award CCF-2236669.}
\and Ali Vakilian\thanks{Rutgers University (\email{a.vakilian@rutgers.edu}). Supported in part by NSF CAREER Award CCF-2542925.}}
\date{}
\hypersetup{
  pdftitle={Approximating Minimum Spanning Trees in Metrics under Semi-Random Corruptions},
  pdfauthor={Aditya Anand, Anupam Gupta, Rajesh Jayaram, Euiwoong Lee, Ali Vakilian}
}

\begin{document}

\pagenumbering{roman}
\pagestyle{empty}
\maketitle
\thispagestyle{empty}

\begin{abstract}
We study the problem of finding minimum spanning trees in semi-random corrupted metrics. In this problem proposed by Bateni, Dharangutte, Jayaram, and Wang (COLT' 24), we start with a metric space $\calM = (X, d)$. Then, for each pair of points, the adversary can corrupt its distance independently with probability $p$. The algorithm only gets to observe these possibly-corrupted distances, and has to output a spanning tree whose expected cost (according to the true metric) is multiplicatively close to the cost of the minimum spanning tree in the metric space $\calM$. Prior work showed that losing a factor of $\Omega(\log n)$ is necessary, even if the algorithm may adaptively select $O(n/\log n)$ points and probe the true distances between the selected ones. However, no upper bounds were known.

In this work, we study this problem in three regimes, depending on how the adversary can change distances: i.e., whether it can, for each corrupted pair, only increase, decrease or arbitrarily change the distance values. We show the following:
\begin{enumerate}
\setlength{\itemsep}{0pt}
\setlength{\parsep}{0pt}
\item In the increase-only setting, we give a polynomial-time $O(1)$-approximation algorithm for any constant $0 \leq p < 1$.
\item In the decrease-only setting, we give an $O(\log n)$-approximation algorithm for any constant $0 \leq p < 1$. Additionally, we show an $O(\log^4 n)$-approximation algorithm that runs in quasi-polynomial time $n^{O(\log n)}$, for any constant $p \leq 0.1$. Moreover, we show that a polynomial-time algorithm computing an $n^{1/3-\eps}$-approximation is unlikely to exist.
\item In the general setting, where the adversary can both increase and decrease distances, we show an $O(\log n)$-approximation algorithm for every $0 \leq p < \nicefrac{1}{4}$. Further, we show that no $o(n)$-approximation is possible for $\nicefrac{1}{2} \leq p < 1$.
\end{enumerate}

Our algorithms answer the main open question from Bateni, Dharangutte, Jayaram, and Wang's work; moreover, a lower bound due to these authors tells us that the $O(\log n)$-approximations in the latter two settings are information-theoretically best possible.
\end{abstract}

\clearpage
\tableofcontents
\clearpage
\pagenumbering{arabic}
\pagestyle{plain}

\section{Introduction.}
\label{sec:introduction}

We study the minimum spanning tree (MST) problem when the input metric
is observed through a {\em semi-random corruption} process, in which a
random subset of the pairwise distances is corrupted, and the values
of these corrupted distances are chosen adversarially.\footnote{Note
  that we differ from the classical models of Blum and
  Spencer~\cite{blum1995coloring} and Feige and
  Kilian~\cite{feige2001heuristics}, where the randomness lies in
  instance generation and the adversary is restricted to
  solution-preserving modifications; in our setting, the adversary
  chooses the initial metric, the randomness lies in the set of
  distances to be corrupted, and then the corrupted values themselves
  are fully adversarial and can change the optimal solution. However,
  our framework falls squarely into the broad definition of
  semi-random models from \cite{Feige20}, in which ``input instances
  are generated by a process that combines random components with
  adversarial components.''} More formally, there is an unknown
ground-truth metric $(X, d)$ on $n$ points; each pairwise distance is
corrupted with some fixed constant probability $p$, and every
corrupted entry is then replaced by an adversarially chosen value,
subject to one of three regimes that constrain the {\em direction} of
the corruption: in the {\em decrease-only} model, the value of a
corrupted distance may only be adversarially decreased; in the {\em
  increase-only} model, it may only be adversarially increased; and in
the {\em general (unrestricted)} model, it may be replaced by an
arbitrary adversarially chosen value, with no restriction. Given only
the resulting corrupted distances, an $\alpha$-approximation algorithm
must output a spanning tree whose expected cost {\em in the true
  metric} over the random corruptions is at most $\alpha$ times the
cost of the minimum spanning tree (in the true metric).\footnote{We
  emphasize that the algorithm never learns which entries were
  corrupted, and probing an exact distance is assumed to be
  prohibitively expensive.}

The semi-random corruption model is motivated by the data-rich
machine-learning era, where metric spaces are commonly used to model
the similarities between points in large datasets. For instance, a
powerful technique in modern data analysis is to represent data-points
by \emph{neural
  embeddings}~\cite{mikolov2013efficient,he2016deep,devlin2019bert},
which are vectors $x \in \R^d$; the dissimilarity between data-points
is then measured by the metric Euclidean distance between
them. However, such models are not perfect, and errors in the training
of the embedding model can cause imperfect and unpredictable
similarity scores for some pairs of embeddings.  Another example
involves pretrained similarity models, which may accurately
represent similarities for the vast majority of pairs, yet on the
remainder they may be arbitrarily, even adversarially, wrong. A
further example is when the similarities one has for a given dataset
become out of date due to data drift, causing some fraction of the
similarities to become arbitrarily invalidated. This
corrupted-distance setting is precisely the weak-oracle regime of the
{\em weak-strong oracle model} introduced by Bateni, Dharangutte,
Jayaram, and Wang~\cite{bateni2024metric}, who first proposed the
model and initiated the study of the minimum spanning tree problem
within it.

Bateni et al.~\cite{bateni2024metric} gave the first bounds for MST in this model, and their results already show that the difficulty of the problem depends sharply on the type of corruption allowed. They begin with a {\em restricted corruption} model, in which the corrupted distances, though perturbed, are still guaranteed to form a metric; that is, the observed corrupted distances continue to satisfy the triangle inequality. For this metric-preserving corruption setting, they design an algorithm that returns an $O(\sqrt{\log n})$-approximate MST from the corrupted distances alone, and they show this factor to be essentially optimal: no algorithm can compute an $o(\sqrt{\log n})$-approximate MST from these corrupted distances, even if it may adaptively select $O(n/\sqrt{\log n})$ points and is allowed to read the true (uncorrupted) distances between all the selected points. 

Once the metric-preserving restriction is dropped and corrupted values
may be chosen fully adversarially (which is the general corruption
model in our setting), the problem becomes strictly harder:
\cite{bateni2024metric} prove that no algorithm can compute an
$o(\log n)$-approximate MST\footnote{Theorem 6 of \cite{bateni2024metric} states this for $p = \frac{1}{3}$, but it holds for any constant $0 < p < 1$.}, even if it may adaptively select
$O(n/\log n)$ points and probe the true distances between all selected points. In fact, their lower bound applies even to our decrease-only setting. On the algorithmic
side, however, this general adversarial model has remained wide
open. No non-trivial upper bound is known: no approximation of the
MST, nor even of its cost, has been shown achievable, unless
$\Omega(n^2)$ queries to the true distances are made. Indeed, Bateni et
al.~\cite{bateni2024metric} explicitly pose this as an open question,
and conjectured that an $O(\log n)$-approximation is achievable using
only the corrupted distances.
\begin{center}
  \emph{In the general adversarial corruption model, is an $O(\log n)$-approximation\\
    of the MST achievable using only the corrupted distances?}
\end{center}

\subsection{Our Results.}

In this work we resolve the open question above in the affirmative, giving an $O(\log n)$-approximation for the general corruption model that matches the $\Omega(\log n)$ lower bound of Bateni et al.~\cite{bateni2024metric} and confirms their conjecture.
Our results further show that in the adversarial corruption setting,
the difficulty of the problem is determined by the \emph{direction} in
which the adversary may corrupt distances: i.e., whether the corrupted
distances are all more or all less than the original distances, or if
they are allowed to be corrupted arbitrarily.

\textbf{Increase-only.} We start with the case where the corrupted
values may only be larger than the original distances (the {\em increase-only}
setting): we show this case is strictly easier than the general corruption setting, and admits a very good approximation, in polynomial time.

\begin{restatable}[Increase-only corruption]{theorem}{increaseonlythm}%
  \label{thm:increase-only}%
  There is a poly-time algorithm that, for any constant probability of
  corruption $0 \leq p < 1$, computes a spanning tree whose expected cost
  is at most $O_p(1)$ times the MST cost, in the increase-only
  corruption model.
\end{restatable}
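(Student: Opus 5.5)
The algorithm I propose is simple: take the observed distances $\Dhat$ as edge weights on the complete graph and output a minimum spanning tree $T$ of them. In the increase-only model every observed value satisfies $\Dhat(u,v)\ge d(u,v)$, and uncorrupted pairs satisfy $\Dhat(u,v)=d(u,v)$. Hence $d(T)\le \Dhat(T)$. Now let $G'$ be the random subgraph of uncorrupted pairs, which contains each pair independently with probability $1-p$. Then $\Dhat(T)\le \MST_d(G')$, the weight under the true metric of a minimum spanning tree/forest of $G'$. If $G'$ is disconnected, I complete it with corrupted edges and bound the resulting expected loss separately; for constant $p$ this event is exponentially unlikely once $n$ is large. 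So the whole theorem reduces to a purely probabilistic statement:
\[
\EE\bigl[\MST_d(G')\bigr]\le O_p(1)\cdot \MST(d).
\]
In words: a random $(1-p)$-subsample of the edges of a metric keeps the MST within a constant factor.

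To prove this statement I will use the threshold formula $\MST_w(H)=\int_0^\infty \bigl(\kappa(H_s)-1\bigr)\,ds$. Here $H_s$ is the subgraph of edges of weight at most $s$, and $\kappa$ counts connected components. Write $G_t$ for the complete graph thresholded at $t$. The goal is a comparison of $\kappa(G'_s)$ against the true counts $\kappa(G_t)$ at smaller thresholds $t$.

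Fix a true MST $T^*$. For each vertex $x$ and integer $r\ge1$, let $B_r(x)$ be the set of vertices within hop-distance $r$ of $x$ in the forest $T^*\cap G_t$. The triangle inequality gives $d(x,y)\le rt$ for all $y\in B_r(x)$.

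If $|B_r(x)|\ge r$, then $x$ is separated from all of $B_r(x)$ in $G'_{rt}$ only when all direct pairs from $x$ to $B_r(x)$ are corrupted. That happens with probability at most $p^{r}$. Each component of $G'_{rt}$ either contains a vertex that is joined to its $B_r$, or is trapped inside a small true component. I therefore expect a bound of the form
\[
\EE\,\kappa(G'_{rt}) \;\lesssim\; \kappa(G_t)\cdot O(r) \;+\; n_{\ge r}(t)\,p^{r}/r,
\]
where $n_{\ge r}(t)$ counts vertices lying in true components of size at least $r$. The $1/r$ reflects that one joined vertex per $r$-block suffices, and the block size $r$ can be tuned.

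Substituting $s=rt$ and summing the geometric series in $r$ should then charge each true MST edge of weight $t$ a total of $\sum_r r\,t\,p^{r}=O_p(t)$. Summing over the edges of $T^*$ gives $\EE[\MST_d(G')]\le O_p(1)\,\MST(d)$.

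The main obstacle is making this per-threshold component count rigorous, and in particular avoiding double counting. A component of $G'_s$ that is small must be certified as isolated using vertices of a single true component. Different thresholds $t$ must also not be charged for the same cost.

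If the threshold formulation gets messy, my fallback is a direct charging scheme. For each edge $e=(u,v)$ of $T^*$ with weight $t$, I bound the expected cost of reconnecting $u$ and $v$ in $G'$ after earlier edges are contracted, as in Kruskal's algorithm. I would walk along $T^*$ from $u$ outward, doubling the radius each time, and use the fact that the probability of every direct pair into the $k$ nearest tree-neighbours being corrupted decays like $p^{k}$. A final detail is the event that $G'$ is disconnected, which also covers small $n$. Its contribution can be absorbed into the $O_p(1)$ constant, since its probability is at most $n p^{n-1}$, and in that event the cost is at most $n\cdot\diam(X)\le n\cdot\MST(d)$.
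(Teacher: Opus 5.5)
Your reduction is correct and matches the paper's. You output the MST of $\Dhat$, use $d(T)\le\Dhat(T)\le$ the true cost of any spanning tree of the uncorrupted graph $G'$, and absorb the bad event with the trivial bound $(n-1)\Delta\le(n-1)\MST(X)$. The gap is what comes next. The claim that $G'$ contains a spanning tree of expected true cost $O_p(1)\MST(X)$ is the paper's Tree Repair Lemma and the whole content of the theorem, and you only assert it (``I expect'', ``should''). The ingredient you offer does not give your component bound. An uncorrupted pair from $x$ into $B_r(x)$ shows only that $x$ is not isolated in $G'_{rt}$. It does not put $x$ in the same component as its tree neighbours, so it gives no control on $\kappa(G'_{rt})$: if $G'_{rt}$ restricted to a true component $K$ were a perfect matching, every vertex would have such a pair, yet there would be $|K|/2$ components.

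Even if your displayed inequality were proved, it is too weak. Its additive term counts vertices, $n_{\ge r}(t)$, rather than MST weight near scale $t$, so it cannot be charged to MST edges of weight $t$. Take the dyadic ultrametric on $n=2^L$ points, with $d(x,y)=2^j$ when the smallest dyadic block containing $x,y$ has $2^j$ points. Here $\MST(X)=n\log_2 n$ and $\kappa(G_s)=n2^{-j}$ for $s\in[2^j,2^{j+1})$. For fixed $p\in(0,1)$, every choice of $(r,t)$ with $rt=s$ makes your right-hand side $\Omega_p(j\,n2^{-j})$. Integrating over $s$ gives $\Omega_p(n\log^2 n)$, a logarithmic loss.

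The missing idea, which your fallback also needs, is to certify that both endpoints of a tree edge $e=(u,v)$ get connected, not just that $u$ has some nearby uncorrupted pair. You want one vertex $x$ with both $(u,x)$ and $(v,x)$ uncorrupted, which happens independently with probability $(1-p)^2$ per candidate, and then the detour $u$--$x$--$v$. The paper does this as follows:
\begin{itemize}
\item It works on the path-shortcut $P(T^*)$, of weight at most $2\MST(X)$. This gives a linear order in which each edge has at most two edges at any hop distance. Walking outward in $T^*$ itself breaks the counting at high-degree vertices.
\item It takes $x$ to be the first such vertex to the right of $e$ (or to the left) and bounds $d(u,x)+d(v,x)\le 2d_P(u,x)$.
\item It charges $2w(f)$ to every path edge $f$ between $u$ and $x$, not to $e$. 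The detour's cost is governed by the weights of the traversed edges: a light edge hanging next to a heavy edge of weight $W$ can need a detour of cost about $W$.
\item An edge charges $f$ only if every one of the $y$ path vertices separating them fails as a witness, which has probability at most $(1-(1-p)^2)^y$. So $f$ receives expected charge $O_p(w(f))$.
\end{itemize}
With this witness your threshold formulation can also be repaired. $\kappa(G'_s)-1$ is at most the number of edges of $P(T^*)$ whose endpoints are disconnected in $G'_s$. Bounding those probabilities with the same witnesses reproduces the paper's charging.
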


\textbf{Decrease-only.} Next, we consider the case where the corrupted
values may only decrease the distances (the {\em decrease-only}
setting). Given the lower bound of~\cite{bateni2024metric}, we can
only hope to find trees whose expected cost is within a logarithmic
factor of the optimum; the next theorem shows how to achieve this
optimal result.

\begin{restatable}[Decrease-only corruption]{theorem}{decreaseonlythm}%
  \label{thm:decrease-only}%
  There is an algorithm that, for any constant probability of corruption
  $0 \leq p < 1$, computes a spanning tree whose expected cost is
  $O_p(\log n)$ times the MST cost, in the decrease-only corruption
  model.
\end{restatable}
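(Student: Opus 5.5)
The approach is a Kruskal-style algorithm driven by a robust, multi-scale notion of closeness. It rests on the identity
\[
\MST(X,d) \;=\; \Theta(1)\cdot\sum_{i} 2^i\bigl(\kappa_i - 1\bigr),
\]
where $\kappa_i$ is the number of connected components of the true threshold graph $G_i = \{(u,v) : d(u,v)\le 2^i\}$ (scales rounded to powers of two). If for every scale $i$ we can build a graph $H_i$ with two properties, the bound follows by summing scale by scale. The properties are: \emph{(a)} every edge of $H_i$ has true length $O(2^i)$, so that $H_i$'s components are true clusters of diameter at most $O(2^i)\cdot(\text{size})$; and \emph{(b)} $H_i$ has at most $O(\log n)\cdot\kappa_{i}$ components in expectation, or more weakly, merging $H_i$-components costs at most an $O(\log n)$ factor more than the true charge $2^i\kappa_i$. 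The output tree is then obtained by processing scales bottom-up and adding $H_i$ edges between current components.

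The first key step is to certify edges, because in the decrease-only model $\dhat(u,v)\le d(u,v)$ always holds, so a small observed value is meaningless on its own. I would certify $(u,v)$ at scale $r=2^i$ through common witnesses, namely $W_r(u,v) = \{w : \dhat(u,w)\le r,\ \dhat(v,w)\le r\}$. If $d(u,v)\le r$, every $w$ in the true ball $B(u,r)$ with both pairs uncorrupted belongs to $W_{2r}(u,v)$, which gives about $(1-p)^2|B(u,r)|$ witnesses. If $d(u,v)>4r$, the triangle inequality forces each witness to have at least one corrupted pair, and moreover the two true distances cannot both be at most $r$. So every witness $w$ either lies truly near one endpoint with the other pair corrupted, or has both pairs corrupted. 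A Chernoff bound, using independence of the corruption events across the pairs $(u,w)$, then bounds the number of false witnesses by roughly $p\,(|B(u,2r)|+|B(v,2r)|) + p^2 n$. The test therefore compares $|W|$ against the observed local density rather than an absolute threshold. The log factor enters because densities at consecutive scales can differ, and a doubling-ball argument loses a factor per level of a density hierarchy, giving $O(\log n)$ overall. I would additionally take a union bound over the $n^2$ pairs and $O(\log n)$ relevant scales.

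The second step is to handle points where the density test is uninformative: points $u$ with $|B(u,r)| = O(\log n)$, for which the Chernoff bound fails and the adversary can completely fake neighborhoods. For these I would argue via the MST charge: such a sparse point is essentially its own component of $G_{i}$ up to $O(\log n)$ points, so attaching it by its cheapest \emph{certified} edge at a later scale costs $O(\log n)$ times what the MST already pays for isolating it at scales up to $r$.

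The main obstacle is getting property \emph{(b)} with only an $O(\log n)$ loss when $p$ is an arbitrary constant below $1$, rather than small. For large $p$ the false-witness count $p\cdot|B(u,2r)|$ can exceed the true count $(1-p)^2|B(u,r)|$ whenever the ball grows by more than a factor of about $(1-p)^2/p$ between scales. My first attempt would be to choose, for each point, a scale $r$ at which its ball is ``$\gamma$-doubling'', meaning $|B(u,4r)|\le\gamma|B(u,r)|$ with $\gamma=O_p(1)$. By pigeonhole such scales exist within every $O(\log_\gamma n)=O(\log n)$ consecutive scales, and I would charge non-doubling scales to the MST with the $O(\log n)$ loss. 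Making this work requires estimating ball sizes from corrupted data, which I would also do by a witness count, since such counts are monotone under decreases.
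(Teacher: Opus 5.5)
The scale decomposition $\MST(X)=\Theta(1)\sum_i 2^i(\kappa_i-1)$ and the bottom-up merging framework are fine. The gap is in the certification step, and it is not the obstacle you flag. In the decrease-only model no true witness is ever lost: if $d(u,w)\le r$ then $\dhat(u,w)\le r$. So comparing $p\,|B(u,2r)|$ with $(1-p)^2|B(u,r)|$ is beside the point. What defeats the test is the term $p^2 n$ in your own false-witness bound. It counts points $w$ with both $(u,w)$ and $(v,w)$ corrupted, and it has nothing to do with the local geometry.

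Concretely, take $n/k$ clusters of size $k=n^{1/3}$ (distance $0$ inside, $1$ across), $p=\nf12$, and the adversary that lowers every corrupted pair to $0$. For $r<1$, a same-cluster pair has $|W_r(u,v)|\approx n/4+3k/4$ and a cross-cluster pair has $|W_r(u,v)|\approx n/4+k/2$. Both counts fluctuate on the order of $\sqrt n\gg k$, and every observed ball $\{w:\dhat(u,w)\le r\}$ has size $n/2\pm O(\sqrt{n\log n})$. Neither the witness counts nor the ``observed local density'' therefore carry usable information about $|B(u,r)|=k$. These points are not sparse in your sense ($k\gg\log n$), and charging them as isolated components loses a factor of about $k=n^{1/3}$, not $O(\log n)$.

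In general, a count-based test needs $|B(u,r)|\gtrsim\sqrt{n\log n}$ even against this benign adversary (and $|B(u,r)|=\Omega_p(n)$ with your bound as stated). Your fallback covers only $|B(u,r)|=O(\log n)$, so the range in between is uncovered. The doubling-scale device does not help either: in the example every ball is perfectly doubling below scale $1$.

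This is not a tuning issue. \Cref{thm:hard-decrease} uses metrics of exactly this form (clusters of size $n^{1/3-\eps}$, $p=\nf12$) to give evidence that no polynomial-time algorithm achieves even an $n^{1/3-\eps}$-approximation. Your procedure is polynomial time and built from degree-two statistics of the observed graph, so it should not be expected to succeed. The missing ingredient is a global certificate, analyzed by a union bound over all candidate trees simultaneously.

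The paper uses a minimum dominating tree, i.e.\ one with $\dhat_T(u,v)\ge\dhat(u,v)$ for all pairs. By \Cref{lemma:repair}, an all-uncorrupted tree of expected cost $O(1)\MST(X)$ exists, and it is automatically dominating. \Cref{lemma:dominatinglog} then shows that for every dominating tree, the path-shortcut has true cost $O(\log n)(\costh(T)+\MST(X))$. The reason is that $2t$ path edges charging the same edge yield $2t$ points in which, for each $i<j$, one of two prescribed pairs is corrupted. The probability $(1-(1-p)^2)^{\binom t2}$ of such a configuration beats the $n^{2t}$ choices once $t=\Omega_p(\log n)$. This argument exploits only the absence of $\Theta(\log n)$-size near-cliques in the corruption pattern, so it works for every constant $p<1$. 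The price is that finding such a tree is as hard as graph coloring (\Cref{thm:dominating-tree-hardness}), which is consistent with local statistics being insufficient.
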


However, the algorithm of \Cref{thm:decrease-only} is not efficient:
can we get efficient algorithms that admit poly-logarithmic approximations?
We make progress on this question by presenting an algorithm that
gives a $O(\log^4 n)$ approximation in $n^{O(\log n)}$ time.

\begin{restatable}[Efficient algorithm for decrease-only]{theorem}{efficientdecrease}%
  \label{thm:efficientdecrease}%
  In the decrease-only model with $p \leq 0.1$, there is an algorithm
  that finds a spanning tree whose expected cost is $O(\log^4 n)$
  times the MST, in $n^{O(\log n)}$ time.\footnote{We do not optimize to get the best possible
    $p$, but certain arguments need $p$ to be sufficiently small.}.
\end{restatable}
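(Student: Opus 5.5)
The plan is to combine a probabilistic tree embedding with a local testing procedure that certifies whether two clusters are close, using only the corrupted distances $\dhat$. Since corruptions only decrease distances, every observed value is a lower bound, $\dhat(u,v)\le d(u,v)$. The danger is therefore \emph{false short edges}: pairs whose observed distance is tiny but whose true distance is huge.

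\textbf{Robust distance proxy.} A single pair cannot be trusted, so I will aggregate over many pairs. For sets $A,B$ of points, define the proxy as a \emph{high quantile} of the multiset $\{\dhat(a,b): a\in A, b\in B\}$. Take for instance the $0.5$-quantile, which suffices because $p\le 0.1$. Each pair is corrupted independently with probability $p\le 0.1$, so with high probability (by Chernoff bounds, once $|A||B|\gtrsim \log n$) at most a $0.2$ fraction of the pairs are corrupted. Hence the median of the observed values lies between the $0.3$- and $0.7$-quantiles of the true distances between $A$ and $B$. The subtlety is that $A$ and $B$ are chosen by the algorithm after it sees the data, so I cannot simply apply Chernoff to them. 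I therefore restrict the proxy to sets from a \emph{fixed family} that is determined before the corruption, or by the true metric alone. A union bound over $n^{O(\log n)}$ candidate sets then needs deviation probability $n^{-\Omega(\log n)}$, which Chernoff gives whenever $|A||B|=\Omega(\log^2 n)$. This union bound is the source of the quasi-polynomial running time.

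\textbf{Hierarchical clustering.} I will build an MST-like structure bottom-up over $O(\log n)$ distance scales $2^i$. At scale $i$, I grow clusters greedily, merging two current clusters $C,C'$ when the median proxy over $C\times C'$, or over large subsets of them, is at most $2^i$. Small clusters (size below $\polylog n$) lack enough pairs for the median test. For these I use the guessing step: enumerate over all $n^{O(\log n)}$ choices of $O(\log n)$ "anchor" points and their assignments, and verify each guess with consistency tests (triangle-type checks on observed values at the correct scale) that corrupted values are unlikely to pass. The merges performed at scale $i$ then contribute tree edges whose true length is $O(2^i)$ up to the quantile slack. To control the quantile slack I use that the $0.7$-quantile of true distances between two clusters of diameter $D$ is at most $\dhat$-median plus $O(D)$ by the triangle inequality. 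Cluster diameters are kept $O(2^i \log n)$ by an LDD-style random-radius carving, which is the CKR/FRT-type decomposition with an $O(\log n)$ cut probability.

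\textbf{Charging.} Each true MST edge $e$ of length $\ell$ is cut at scale $i$ with probability $O(\log n\cdot \ell/2^i)$ (one $\log n$ factor). At most $O(\log n)$ scales are relevant, since below $\ell$ nothing is charged and above it the probabilities decay geometrically after summation. The median-based slack costs another $O(\log n)$, from cluster diameter relative to scale. The union bound over the guessing step, together with the need to patch small clusters using edges of length up to scale times $\log n$, costs the last $O(\log n)$. This gives $O(\log^4 n)$.

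\textbf{Main obstacle.} The hardest step is the adaptivity issue: the clusters depend on the corrupted data, yet the concentration argument needs them fixed. My first attempt is to show that every cluster the algorithm forms at scale $i$ is sandwiched between two balls of the \emph{true} metric drawn from an $n^{O(\log n)}$-size family, namely balls around a guessed center with radii from a geometric grid. The median test on any such sandwiched set is then implied by the tests on the fixed family, and I apply the union bound over that family.
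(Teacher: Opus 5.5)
Your plan has a genuine gap: it never accounts for \emph{outliers}, and outliers are what the paper's construction is built around. A test of whether a single point $p$ lies near an anchor set $A$ of size $\Theta(\log n)$ sees only the $\Theta(\log n)$ pairs between $p$ and $A$. Bad anchor sets for $p$ always exist, because a truly far $p$ has $\Theta(n)$ corrupted pairs into any large cluster, and you can take $60\%$ of the anchors among them. Since the algorithm picks anchors from $n^{\Theta(\log n)}$ data-dependent candidates, no union bound lets you assume every point behaves for every anchor set. So your claims fail as stated: that guessed anchors can be verified by tests corrupted values are ``unlikely to pass'', that cluster diameters stay $O(2^i\log n)$, and that clusters are sandwiched between nearby true balls. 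What does hold simultaneously for all anchor sets (group $\Theta(\log n)$ impostors to get $\Theta(\log^2 n)$ pairs) is that each anchor set admits at most $c_{\mathrm{out}}\log n$ impostors; this is the paper's point-seed soundness lemma. Set-to-set statistics are not the problem. The union bound you already describe, taken over \emph{all} pairs of $\Theta(\log n)$-size subsets (a corruption-independent family), handles adaptivity there, as in the distance-between-cores lemma. A family defined by the hidden metric is unnecessary.

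Once impostors exist, two issues appear that you do not address. First, bottom-up they accumulate: a scale-$2^i$ cluster built from many scale-$2^{i-1}$ clusters inherits all their impostors. The paper blocks this by testing each candidate point, at carving time, against every possible future seed $S'$ at every scale $t'\ge t$ via $\dhat(p,S')\le 10\alpha t+\dhat(S,S')+\alpha t'$; the same test is built into admissibility of fresh seeds. The chaining-seeds lemma then puts every point of a scale-$t'$ cluster within robust distance $50\alpha t'$ of its seed, so point-seed soundness caps outliers at $c_{\mathrm{out}}\log n$ per cluster. Second, the output tree must tolerate outliers. The paper picks a uniformly random leader per cluster, charges an edge with an outlier endpoint $x$ to the cluster where $x$ is an outlier, and uses $\EE[\deg_{\mathrm{hier}}(x)]=O(\log\Delta')$.

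Your charging step fails independently. A CKR/FRT cut bound of $O(\log n\cdot\ell/2^i)$ needs membership decided by random-radius balls of the \emph{true} metric, which you cannot compute. Membership decided by corrupted statistics is not Lipschitz. Two points at tiny true distance each have about a $p$-fraction of corrupted pairs to the anchors. If the adversary lowers one point's pairs and not the other's, their median (or any quantile) distances to the anchors can differ by a constant fraction of the anchors' spread, i.e.\ by $\Theta(2^i)$. The edge is then cut with constant probability. Also, even without corruption the expected charge at each scale above $\ell$ is $O(\ell\log n)$ and does not decay geometrically. The number of scales must therefore come from an aspect-ratio reduction, which you omit; the paper uses $U=\max_{u,v}\dhat(u,v)\ge\Delta/2$ w.h.p. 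Your allocation of the four logarithms is thus unsupported.

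The paper avoids padding arguments altogether. Its carving is greedy and deterministic, and it lower-bounds $\MST(X)$ by packing. Cores of distinct seeds at scale $t$ are pairwise at true distance at least $t$, by the carve and admissibility rules. Leftover singletons contain no $100\log n$-clique of $G_t$, so ball carving extracts $k_2/(100\log n)$ of them pairwise at least $t/2$ apart. This gives $2^i n_i\le O(\log n)(\MST(X)+2^i)$ per scale, hence $\cost(\mathcal H)=O(\MST(X)\log n\log\Delta')$. The random-leader rounding loses another $O(\log n\log\Delta')$, and $\Delta'=n^2$ yields $O(\log^4 n)$.
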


We also give some evidence that the quasi-polynomial time is
necessary, and a polynomial-time $n^{1/3-\eps}$-approximation is
unlikely to exist for any $\eps > 0$. It is based on distinguishing $n^{2/3+\eps}$-colorable graphs from $G(n,\nf12)$, which
is proved to be hard for {\em low-degree polynomials}~\cite{kothari2023planted}. See \Cref{thm:hard-decrease} in \Cref{sec:lower} for
exact statements and further discussion.

\textbf{Increase-decrease.}  Finally, we turn to the {\em general
  (unrestricted)} corruption model, where corrupted distances may be
replaced by arbitrary adversarially chosen values, resolving the open
question of~\cite{bateni2024metric}.
\begin{restatable}[General corruption]{theorem}{generalthm}%
  \label{thm:general}%
  There is an algorithm that for any $0 \leq p < \frac{1}{4}$, computes
  an $O_p(\log n)$ approximation for the MST in the general corruption
  model.
\end{restatable}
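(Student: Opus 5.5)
Our plan is to use a ``robust distance'' estimator based on majority voting through intermediate points, feed it into a greedy doubling-scale (Kruskal-like) procedure, and bound the resulting tree against the MST with a charging argument that loses one logarithmic factor. For a pair $u,v$ and a threshold $r$, call $(u,v)$ \emph{$r$-certified} if at least a $(1-2p-\delta)$ fraction of the points $w \in X$ have $\dhat(u,w) \le r$ and $\dhat(w,v) \le r$ in the observed distances, or, more robustly, if the sets $B_u(r) = \{w : \dhat(u,w)\le r\}$ and $B_v(r)$ overlap in a large set. The condition $p<\nf14$ is what should make such voting sound. For a fixed $u$ and fixed $w$, the entry $\dhat(u,w)$ is corrupted with probability $p$, independently across $w$. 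So for each center the observed ball $B_u(r)$ agrees with the true ball on all but roughly a $p$ fraction of its members, up to concentration. Two observed balls then share a true-witness in their intersection unless about $2p$ of the mass of each is corrupted. Since $2p < \nf12$, a majority vote over witnesses cannot be flipped.

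\textbf{Step 1 (soundness of certification).} We first show that if $(u,v)$ is $r$-certified then $d(u,v) \le O(r)$ w.h.p., and conversely. Soundness needs a witness $w$ with both $(u,w)$ and $(w,v)$ uncorrupted. Such a witness exists whenever the observed intersection is larger than the $\approx 2p$-fraction of entries that can be corrupted. The catch is that this is a fraction of the witnesses of $u$ and $v$, not of $n$. We therefore plan to certify only when $|B_u(r)\cap B_v(r)|$ exceeds $(2p+\delta)\cdot\max(|B_u(r)|,|B_v(r)|)$, using Chernoff bounds over the $n$ independent entries in a row together with a union bound over $O(n^2\log\Delta)$ triples $(u,v,r)$. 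The converse direction (a true close pair gets certified) holds only for points whose true $O(r)$-ball is not tiny, which is the usual source of the $\log n$ loss.

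\textbf{Step 2 (algorithm).} For scales $r = 2^i$, we maintain components and merge any two components containing an $r$-certified pair, adding that pair as an edge. Each added edge has true length $O(r)$.

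\textbf{Step 3 (cost bound).} We compare with the MST via the standard bound
\[
\MST \;\approx\; \sum_i 2^i \cdot \bigl(\#\text{components of the } 2^i\text{-threshold graph of } d\bigr).
\]
It suffices to show that at scale $O(2^i)$ our component count is at most $O(\log n)$ times the true count at scale $2^i$, in expectation. The hard part is \emph{sparse} points: points whose true $2^i$-ball contains too few points for voting, yet which are far from everyone else at larger scales. We intend to handle these with a charging argument. If $x$ has a true ball of size $k$ at scale $r$, then doubling the scale $O(\log n)$ times either grows the ball geometrically or the component merges. A point can be ``stuck'' through at most $\log n$ doublings before its ball has size $\Omega(\log n)$, which is enough for Chernoff concentration. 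Each stuck step is charged to an MST edge of comparable length, and each MST edge receives charge from $O(\log n)$ scales.

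We expect Step 3 to be the main obstacle: making the small-ball charging rigorous, and making sure the adversary's adaptive choices among corrupted values cannot create spurious certifications at small scales, where concentration fails. At those scales we would fall back to ignoring balls of size below $C\log n$ and pay for them by that charging.
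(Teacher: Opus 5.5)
\textbf{The gap is in Step 1: your certification test is unsound, and the failure is not confined to small balls.} Chernoff controls the corrupted fraction among the entries of $u$'s \emph{true} $r$-ball. But $u$'s row has about $pn$ corrupted entries in total, and for each of them the adversary decides whether it lands in $B_u(r)$. So whenever the true $r$-balls of $u$ and $v$ have size $o(pn)$, the adversary can make $B_u(r)$ and $B_v(r)$ almost entirely fabricated and identical. Your ratio test against $\max(|B_u(r)|,|B_v(r)|)$ cannot detect this.

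The attack works for a whole perfect matching $\{u_i,v_i\}$ at once. Write $x'$ for the partner of $x$. For a corrupted pair $\{x,y\}$ with $y\neq x'$, set its value to $0$ if all four pairs between $\{x,x'\}$ and $\{y,y'\}$ are corrupted, and set every other corrupted pair to a value above $\Delta$. This rule is symmetric in $x,y$ and touches only corrupted pairs. It gives $\dhat(u_i,w)=0\iff \dhat(v_i,w)=0$ for every $w\notin\{u_i,v_i\}$.

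Hence at a scale $r$ below all true distances, $B_{u_i}(r)$ and $B_{v_i}(r)$ agree off $\{u_i,v_i\}$ with a common set $Q_i$ of size $\approx p^4 n$. Meanwhile $|Q_i\cap Q_j|\approx p^8 n$ for $i\neq j$. So the certified pairs are exactly the matching, whatever $d(u_i,v_i)$ is.

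Now put the $u_i$ in one group and the $v_i$ in another, with distance $1/n$ inside a group and $1$ across. Then $\MST(X)<2$, yet Step 2 inserts $n/2$ edges of true length $1$. If each group's points are instead spaced $1/n$ apart along a segment, the same happens at every scale below a constant $c_p>0$, so the starting scale is irrelevant. The fabricated balls have linear size, so discarding balls of size below $C\log n$ does not catch this. Adding the test $\dhat(u,v)\le r$ does not help either: the $\approx pn/2$ matched pairs whose own entry is corrupted still get through.

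The same miscount breaks Step 3. Having $\Omega(\log n)$ points in a true ball does not make voting meaningful, because the adversary's budget is $\approx pn$ entries per row regardless of ball size. Completeness also fails well above $\log n$. Take $\sqrt n$ clusters of $\sqrt n$ points (distance $0$ inside, $1$ across) and set all corrupted entries to $0$. Then every pair has overlap ratio about $p<2p+\delta$ at every scale below $1$. Nothing merges until the top scale, and your component count is $n$ against a true count of $\sqrt n$.

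What is missing is a \emph{global} check of the direct observed distances of all pairs against the candidate tree. The paper computes a minimum \emph{almost-dominating} tree $T$: every vertex set of size at least $C_{p,\epsilon}\log n$ has at most a $p+\epsilon$ fraction of pairs with $\dhat_T(u,v)<\dhat(u,v)$.
\begin{itemize}
\item The tree-repair lemma supplies an all-uncorrupted tree of expected cost $O(\MST(X))$. By the density lemma this tree is almost-dominating, so the minimum one is at least as cheap.
\item In the path-shortcut of any almost-dominating tree, suppose $2t$ edges charge a common edge. Then the endpoints of $t$ disjoint ones contain pairs that are corrupted or non-dominating with density $\frac{t-1}{2t-1}\to\frac12>2p+2\epsilon$. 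This is impossible once $t=\Omega(\log n)$.
\end{itemize}
That charging step is exactly where $p<\frac14$ enters.
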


We end with a lower bound showing that if the corruption probability
is large, i.e., $p \in [\nf12, 1]$, then no algorithm can
compute $o(n)$-approximation.
\begin{restatable}[General corruption hardness]{theorem}{generalhardness}%
\label{thm:hard-general-search}%
Fix $p \in [1/2,1]$. For every $n \geq 2$ and every randomized algorithm $\mathcal A$ that only observes the corrupted distances $\hat{d}$, there is an $n$-point metric space $(X,d)$ and an admissible adversary in the general corruption model such that, if $T = \mathcal A(\dhat)$ is the tree returned by the algorithm, then
\[
    \EE[\cost(T)] \geq \frac{n}{4} \cdot \MST(X).
\]
\end{restatable}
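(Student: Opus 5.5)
The plan is an indistinguishability (symmetrization) argument. I take two metrics whose corrupted views have the same distribution, with MSTs that are almost disjoint, so any tree the algorithm outputs is expensive under at least one of them. Since $p \ge 1/2$, I can pair each uncorrupted ``true'' edge with a corrupted edge that imitates it.

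\textbf{The pair of metrics.} Let $X=\{1,\dots,n\}$. Pick a Hamiltonian path $P$ on $X$ uniformly at random (or a uniformly random spanning tree shape; a path suffices). The metric $d_P$ is the shortest-path metric of $P$ with unit edges, truncated at $n$: set $d_P(u,v)=1$ if $uv\in P$ and $d_P(u,v)=n$ otherwise. This is a metric because $1+1\ge$ anything is false for $n$, so instead I use the $\{1,2\}$-free version: $d_P(u,v)=\min(\text{path distance},\,n)$. Then $\MST(X)=n-1$ under $d_P$. A simpler choice that keeps the triangle inequality trivial is the line metric on a random ordering, where the distance is $|\pi(u)-\pi(v)|$; its MST is the path, of cost $n-1$.

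\textbf{The adversary.} Each pair is corrupted independently with probability $p$. On a corrupted pair the adversary reports $1$ with probability $q=\frac{1}{2p}\cdot$(suitable value) and reports the ``far'' value otherwise. The value $q$ is chosen so that the observed matrix $\dhat$ has the same distribution no matter which pairs are truly adjacent. Concretely, I want every pair to independently read ``1'' with the same probability $r$. For a true edge the probability of reading $1$ is $(1-p)+p\,q_{\mathrm{in}}$, and for a non-edge it is $p\,q_{\mathrm{out}}$. Setting $q_{\mathrm{in}}=0$ and $q_{\mathrm{out}}=(1-p)/p\le 1$, which is valid exactly because $p\ge 1/2$, gives $r=1-p$ for every pair. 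The adversary also sets every non-$1$ reading to the same value. The observed $\dhat$ is then an Erd\H{o}s--R\'enyi-type $\{1,\text{far}\}$ matrix independent of $P$. The adversary may be randomized; if the definition demands a deterministic adversary, I fix its coins by averaging.

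\textbf{Cost bound.} Since $\dhat$ is independent of $P$, the output $T$ is independent of $P$. For a fixed edge $uv$ of $T$, the probability that $uv\in P$ is $2/n$. To get a clean bound I truncate: use $d(u,v)=1$ for path edges and $d(u,v)=n/2$ (or similar) otherwise, and check the triangle inequality by choosing values in $[n/2,n]$ where needed. A line metric also works, since under a random order the expected $|\pi(u)-\pi(v)|$ is about $n/3$. Then $\EE[\cost(T)]\ge (n-1)\cdot\Omega(n)$. With the line metric, $\EE|\pi(u)-\pi(v)|=(n+1)/3$, so $\EE[\cost(T)]\ge (n-1)(n+1)/3\ge \frac n4 (n-1)$, and by Yao/averaging some fixed $P$ achieves it.

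\textbf{Main obstacle.} The difficulty is making $\dhat$ exactly independent of the hidden structure in the line metric, where distances take many values rather than two. I handle this by having the adversary resample every pair's reported value from a fixed distribution $\mu$ over $\{1,\dots,n-1\}$. True distance-$1$ pairs that are uncorrupted show $1$, so $\mu$ must put mass $1-p$ on the value $1$ as an observed marginal. Matching the law exactly across all pairs jointly, given that uncorrupted entries reveal their true values, requires corrupting enough mass. The cleanest fix is to use the two-valued metric $d(u,v)=1$ on $P$ and $2$ otherwise, which is always a metric. With the adversary above, $\dhat$ is independent of $P$. Then $\EE[\cost(T)]=\sum_{e\in T}(2-\Pr[e\in P])\ge(n-1)(2-2/n)$, which is only a constant factor. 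To reach $n/4$ I would instead use the metric with $d=1$ inside $P$-clusters and large values across clusters (many clusters, hub structure), and check the triangle inequality there carefully. I expect the main work to be designing a metric where independence holds and non-tree pairs cost $\Omega(n)\cdot$ per-edge optimum.
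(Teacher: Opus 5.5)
\textbf{There is a genuine gap: you never exhibit a metric on which your argument gives the $\tfrac n4$ gap, so the proposal does not prove the statement.} Your adversarial mechanism is the right one. On a corrupted ``far'' pair you report the ``near'' value with probability $(1-p)/p$, which is admissible exactly when $p\ge \tfrac12$, so every entry independently reads ``near'' with probability $1-p$.

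\textbf{The line/path metric does not work.} It is the only place you obtain an $\Omega(n)$ ratio, and there the independence step is false. Uncorrupted pairs reveal their true value $|\pi(u)-\pi(v)|$. So setting every non-$1$ \emph{corrupted} reading to a single value does not make $\dhat$ a $P$-independent $\{1,\text{far}\}$ matrix. In fact no adversary can make even one entry's law independent of $\pi$. The pair is uncorrupted with probability $1-p$ independently of $\pi$, and $|\pi(u)-\pi(v)|$ takes every value $k\in\{1,\dots,n-1\}$ as $\pi$ varies. Independence would therefore require a law $\mu$ with $\mu(k)\ge 1-p$ for every such $k$, which is impossible once $p<1-\frac{1}{n-1}$. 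So your bound $(n-1)(n+1)/3$ is unsupported.

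\textbf{The $\{1,2\}$ metric is too weak.} It does restore independence, but as you observe it only gives a ratio of about $2$. The optimum already consists of $n-1$ near edges, and far edges cost only twice as much.

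\textbf{The missing idea.} You need a two-valued hidden structure in which the optimum uses a \emph{single} far edge while every structure-oblivious tree uses $\Theta(n)$ of them.
\begin{itemize}
\item Take a uniformly random split $X=A\sqcup B$ with $|A|=a=\lfloor n/2\rfloor$ and $|B|=b=n-a$.
\item Set $d=0$ inside parts and $d=1$ across, so that $\MST(X)=1$. If you want a genuine metric, use $\eps$ and $1$ with $\eps\le 1/n^2$.
\item Use your adversary: corrupted same-part pairs read $1$; corrupted cross pairs read $0$ with probability $(1-p)/p$ and $1$ otherwise. Then the entries of $\dhat$ are i.i.d.\ with $\Pr[\dhat(u,v)=0]=1-p$. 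Hence $\dhat$ is independent of the split, and therefore so is $T$.
\item A fixed pair is separated with probability $\frac{2ab}{n(n-1)}$. This gives $\EE[\cost(T)]\ge \frac{2ab}{n}\ge \frac{n^2-1}{2n}\ge\frac n4$ for $n\ge 2$.
\item Averaging then fixes one partition, and if needed the adversary's coins.
\end{itemize}
This is exactly the paper's proof.

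Your closing suggestion of ``many clusters'' points the wrong way. With $k$ clusters the optimum already pays $k-1$ far edges, while an oblivious tree pays at most $n-1$. The ratio is roughly $(n-1)/k$, so $k=2$ is optimal.
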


With high probability, the algorithm of \Cref{thm:increase-only} achieves
an $O_p(\log n)$-approximation, while those of
\Cref{thm:decrease-only,thm:general} achieve an
$O_p(\log^2 n)$-approximation; see \Cref{rem:high-probability}.

For the increase-decrease model, we do not yet know efficient
algorithms, and we leave this as an intriguing open problem. Our
results are summarized in \Cref{tab:results}.

We emphasize that all of our algorithms operate on the corrupted
distances alone, with {\em no} queries to true distances. Except for our
quasipolynomial-time algorithm, our algorithms are deterministic, and
their expected-cost guarantees are over the random choice of corrupted
pairs alone. The
$\Omega(\log n)$ lower bounds in the decrease-only and general
corruption models from~\cite{bateni2024metric} are
information-theoretic: they hold against computationally unbounded
algorithms, and indeed even against algorithms that may adaptively
select $\Omega(n)$ pairs and query their true (uncorrupted) distances.
Hence, our (information-theoretic) $O(\log n)$-approximate algorithms of
\Cref{thm:decrease-only,thm:general} are optimal in a strong sense.

\textbf{Paper Outline.} After discussing some more related work in
\Cref{sec:related-work}, we give a detailed technical overview of our
ideas in \Cref{sec:overview}. The formal model definition and some
terminology is introduced in \Cref{sec:preliminaries}. We present the
crucial tree-repair lemma and our result for the increase-only setting
in \Cref{sec:tree-repair}; the decrease-only setting is then discussed
in \Cref{sec:decrease-only}, followed by the general increase-decrease
case in \Cref{sec:general-case}. The efficient algorithm for the
decrease-only setting appears in \Cref{sec:efficient-dec}. We end with
our lower bound result in \Cref{sec:lower}. %

\begin{table}[t]
\centering
\renewcommand{\arraystretch}{1.25}
\begin{tabular}{@{}llll@{}}
\toprule
\textbf{Corruption model} & \textbf{Upper bound} & \textbf{Lower bound} &
\textbf{Runtime}\\
\midrule
Increase-only
  & $O_p(1)$~[Thm.~\ref{thm:increase-only}]
  & --- & Polynomial-time\\
Decrease-only
  & $O_p(\log n)$~[Thm.~\ref{thm:decrease-only}]
  & $\Omega(\log n)$~\cite{bateni2024metric}& Inefficient\\
Decrease-only
  & $O(\log^4 n)$~[Thm.~\ref{thm:efficientdecrease}]
  & $\Omega(\log n)$~\cite{bateni2024metric} & Quasipolynomial-time\\
Decrease-only
  & ---
  & $n^{1/3-\eps}$~[Thm.~\ref{thm:hard-decrease}] & Polynomial-time\\
General, $p < \tfrac{1}{4}$
  & $O_p(\log n)$~[Thm.~\ref{thm:general}]
  & $\Omega(\log n)$~\cite{bateni2024metric}& Inefficient\\
General, $p \geq \tfrac{1}{2}$
  & ---
  & $\Omega(n)$ [Thm.~\ref{thm:hard-general-search}] & Inefficient\\
\addlinespace[2pt]
\bottomrule
\end{tabular}
\caption{Approximability of MST in each corruption model. All upper bounds are achieved using only the corrupted distances, with no queries to true distances.} 
\label{tab:results}
\end{table}

\subsection{Related Work.}
\label{sec:related-work}

The idea of pairing an accurate but expensive similarity oracle with a cheap, unreliable one was introduced in practice by Silwal et al.~\cite{silwal2023kwikbucks}, whose algorithm for budgeted correlation clustering uses a cheap-weak signal to steer queries toward an expensive-strong oracle. Their weak signal is assumed to be only boundedly noisy, whereas the model we study allows a random subset of the pairwise distances to be corrupted arbitrarily, even adversarially. This corruption model was introduced by Bateni et al.~\cite{bateni2024metric}, whose main focus, beyond the MST results discussed above, is metric $k$-clustering. The model has since been taken up in follow-up work: Braverman et al.~\cite{braverman2025relative} study fair clustering in this setting, and further recent work examines clustering under related weak-comparison and noisy-distance oracles~\cite{pradhan2026improved,raychaudhury2026metric}.

Our model also relates to work on estimation and optimization from partially corrupted input, most notably robust statistics and learning from untrusted data, where an adversary corrupts an unknown fraction of otherwise-trustworthy data and the goal is to recover the underlying signal~\cite{diakonikolas2019robust,lai2016agnostic,charikar2017learning}; a difference is that we corrupt entries of a single fixed distance matrix, with no distributional assumption on the true metric. It is also loosely related to learning-augmented algorithms (algorithms with predictions), which exploit untrusted machine-learned advice when it is accurate and remain robust when it is not~\cite{lykouris2021competitive,kumar2018improving,mitzenmacher2020algorithms}, though in our setting the corrupted distances are the only access to the input and their corruption is structured rather than a tunable prediction error.

\section{Technical Overview.}
\label{sec:overview}

In this section, we give an overview of our main algorithmic results. First, we will give an overview of the $O(1)$ approximation algorithms for increase-only model, and the tight $O(\log n)$ approximation algorithms for the decrease-only and general cases. After that, we will describe our efficient $n^{O(\log n)}$ time algorithm for the decrease-only model, which is considerably different and technically much more complicated. 

\subsection[Repairing minimum spanning trees and Increase-Only.]{Repairing minimum spanning trees and Increase-Only (\Cref{sec:tree-repair}).} Our first key observation, which forms a basis for almost all of our algorithms is the \emph{existence} of a low-cost spanning tree on the graph induced by only uncorrupted edges. Let $\MST(X)$ denote the cost of the MST in the true metric $d$ on the input point set $X$. We show that there is a high probability event $F$, such that conditional on $F$, there exists a tree $T$, such that the expected cost of the tree $T$ given $F$, is at most $O(1)\MST(X)$. This result holds for the general model of corruption.

\paragraph{Repair and charging scheme.}
Let $T^*$ be a minimum spanning tree in the metric $(X,d)$. We want to \emph{repair} $T^*$ to obtain such a tree $T$ with only uncorrupted edges. We do this as follows. First, we take an Euler tour of $T^*$, and list the vertices in the order they first appear on this path to obtain a path $P(T^*)$, which we refer to as the \emph{path-shortcut} of $T^*$. For an edge $e = (u,v) \in E(P(T^*))$, write $w(e) = d(u,v)$. Then the cost of $P(T^*)$, $\sum_{e \in P(T^*)} w(e)$ is at most $2\MST(X)$, since an Euler tour visits every edge of $E(T^*)$ twice. 

If every edge of $P(T^*)$ were uncorrupted, the path $P(T^*)$ is our desired tree. Otherwise, pick a corrupted edge $e = (u,v)$ of $P(T^*)$. Order the vertices of the path from left to right, and suppose $u$ appears before $v$. We want to replace this edge by an uncorrupted path. To do this, we simply find a vertex $x$ that is closest to $e$ on the path, either on its left or right, so that both the edges $(u,x)$ and $(v,x)$ are uncorrupted. Such an $x$ must exist with high probability (for any constant $p < 1$). Then by the triangle inequality, we can bound
$d(u,v) \leq d(u,x) + d(v,x)$.

Now let $d_P(a,b)$ denote the distance between any two vertices $a$ and $b$ measured along the path. Then by triangle inequality, we have $d(a,b) \leq d_P(a,b)$. This means we can upper bound
\begin{align*}
d(u,v) \leq d(u,x) + d(v,x) \leq d_P(u,x) + d_P(v,x).
\end{align*}

Thus we can account for the cost $d(u,v)$ by charging an amount $2w(f)$ to every edge $f$ on the path between $u$ and $x$.

\begin{figure}[t]
\centering
\begin{tikzpicture}[
    x=1.10cm,y=1.10cm,
    vertex/.style={circle,fill=black,inner sep=1.45pt},
    path edge/.style={line width=.75pt,draw=black!55},
    charged path/.style={line width=1.65pt,draw=orange!85!black},
    same dash/.style={dash pattern=on 2.2pt off 2.2pt,line width=1.1pt},
    corrupted edge/.style={same dash,draw=red!75!black},
    aux edge/.style={same dash,draw=black!55},
    good edge/.style={line width=1.12pt,draw=blue!70!black},
    every node/.style={font=\small},
    every label/.style={font=\scriptsize}
]

\coordinate (a0) at (0,0);
\coordinate (a1) at (1,0);
\coordinate (u)  at (2.20,0);
\coordinate (v)  at (3.25,0);
\coordinate (y1) at (4.45,0);
\coordinate (y2) at (5.65,0);
\coordinate (x)  at (6.85,0);
\coordinate (a7) at (8.05,0);

\draw[path edge] (a0)--(a1)--(u)--(v)--(y1)--(y2)--(x)--(a7);

\draw[charged path] (u)--(v)--(y1)--(y2)--(x);

\draw[corrupted edge] (u)--(v);
\draw[aux edge] (u) .. controls +(0,-.9) and +(0,-.9) .. (y1);
\draw[aux edge] (v) .. controls +(0,-0.9) and +(0,-0.9) .. (y2);

\draw[good edge] (u) .. controls +(0,1.62) and +(0,1.62) .. (x);
\draw[good edge] (v) .. controls +(0,1.02) and +(0,1.02) .. (x);

\foreach \P/\L in {a0/{},a1/{},u/$u$,v/$v$,y1/$y_1$,y2/$y_2$,x/$x$,a7/{}}
  \node[vertex,label={[label distance=6pt]below:\L}] at (\P) {};

\node[anchor=west] at (0,0.55) {$P(T^*)$};

\draw[good edge] (9.0,1.05) -- (10.0,1.05);
\node[anchor=west,font=\scriptsize] at (10.2,1.05)
  {uncorrupted replacement for $(u,v)$};

\draw[aux edge] (9.0,0.40) -- (10.0,0.40);
\node[anchor=west,font=\scriptsize] at (10.2,0.40)
  {corrupted};

\draw[charged path] (9.0,-0.25) -- (10.0,-0.25);
\node[anchor=west,font=\scriptsize] at (10.2,-0.25)
  {charged by $(u,v)$};

\end{tikzpicture}

\caption{Repairing a corrupted edge on the path-shortcut. For a corrupted path edge $e=(u,v)$, choose the first meeting vertex $x$ for which both $(u,x)$ and $(v,x)$ are uncorrupted. The replacement path $u-x-v$ is paid for by charging the path interval from $u$ to $x$; each edge $f$ on that interval receives charge at most $2w(f)$.}
\label{fig:tree-repair-charge}
\end{figure}

\paragraph{Analysis.}
Now let us count the expected charge on each edge $f$. If we can show that the charge is at most $O(1) w(f)$, then we would upper bound the cost of $T$ by $O(1)\MST(X)$. Since we only charge an amount $2w(f)$ every time we charge to $f$, it is enough to show that $f$ is charged at most $O(1)$ times. Indeed, this must be the case: an edge $e = (u,v)$ at distance $y$ from $f$ charges $f$ only when it cannot charge any of the closer $y - 1$ edges. This happens with probability at most $(1 - (1-p)^2)^{y}$, since for a vertex $x$, at least one of the edges $(u,x)$ and $(v,x)$ is corrupted with probability exactly $(1 - (1-p)^2)$. Adding across all integers $y \geq 1$, this leads to a geometric sum 
$$\sum_{y \geq 1} (1 - (1-p)^2)^y$$
which is $O(1)$ for every $0 \leq p < 1$. Thus $T$ has expected cost at most $O(1) \MST(X)$.

\paragraph{Algorithm for Increase-Only (\Cref{sec:increase-only}).} Given this, the algorithm for the increase-only model is very simple: just output any minimum spanning tree $T'$ on the graph of corrupted distances $\hat{d}$. With high probability, from the previous paragraph, we know that there is a tree whose edges are uncorrupted and has expected cost at most $O(1)\MST(X)$ (conditional on the high probability event). This tree has no corrupted edges, and $T'$ is a minimum spanning tree with respect to the corrupted distances. Thus $T'$ has expected cost at most $O(1) \MST(X)$ according to $\hat{d}$. But this means that $T'$ has expected true cost at most $O(1)\MST(X)$ according to $d$ as corruption cannot decrease distances. When the high probability event does not hold, we can trivially upper bound the cost of our tree by $n \cdot\MST(X)$, since each edge is of length at most $\MST(X)$.

\subsection[Algorithm for Decrease-Only.]{Algorithm for Decrease-Only (\Cref{sec:decrease-only}).} For the decrease-only setting, first observe that simply finding a minimum spanning tree on the corrupted distances may fail badly. 
\paragraph{An example and our algorithm.}
A simple example is two clusters of points, each of size $\frac{n}{2}$. Inside the cluster, the pairwise distances are all $0$. Across the clusters, every distance is $1$. Clearly, the true minimum spanning tree has cost $1$. But now corruptions with probability $p$ can force a $p$-fraction of the inter-cluster distances to be $0$. If we simply find a minimum spanning tree on the corrupted distances, then this tree, with good probability, can use many inter-cluster edges, leading to a cost of $\Omega(n)$.  

Thus we need a new object and a stronger certificate for spanning trees. It turns out that the right object is that of a \emph{dominating tree}, and this is our second key technical idea. We say that a tree $T$ is dominating, if for every edge $(u,v)$, the corrupted distance $\hat{d}(u,v)$ is at most the corrupted distance along the tree $T$, $\hat{d}_T(u,v)$. A \emph{minimum dominating tree} is a dominating tree whose cost is minimum with respect to the corrupted distances $\hat{d}$. Observe that in the two-cluster example, it removes the key obstruction. For instance, suppose we find a tree whose edges all have distance $0$ after corruption. Now consider some pair $(u,v)$ with $\hat{d}(u,v) = d(u,v) = 1$. Such a pair exists with high probability. But the distance between this pair along the tree $\hat{d}_T(u,v) = 0$, so this tree is not dominating. Observe that there exists a tree on only uncorrupted edges whose true cost is $1$. By the metric property of $d$, any tree on purely uncorrupted edges must be dominating, so the minimum dominating tree has cost $1$, and therefore uses exactly one inter-cluster edge of cost $1$. This, in fact, means that we obtain a partition into two $0$-connected components with respect to $\hat{d}$.

While these two components may not be exactly equal to the two true hidden clusters, they are quite close: formally, we claim that each component has at most $O(\log n)$ points from one of the two true clusters. Suppose there is a component that has at least $C\log n$ points from both true clusters for some large constant $C$. Then since for any pair of points, one from each cluster, that lie in a single component, the corresponding edge must be corrupted, we obtain a $(C\log n, C\log n)$ bi-clique in a random graph, which cannot happen with high probability.

Thus at most $O(\log n)$ points of each cluster are misclassified. If the tree has maximum degree $O(1)$, this means that at most $O(\log n)$ edges have true cost $1$, and hence we obtain a tree whose true cost is at most $O(\log n)$. It is easy to ensure low maximum degree: instead of directly returning a dominating tree, we return its path-shortcut.

\paragraph{Charging scheme.}
Now we show that the minimum dominating tree approach extends to any metric. For simplicity, in this overview, let us assume that the minimum dominating tree $T'$ is a path (if not, we can compute a path-shortcut as discussed above). As we proved earlier, with high probability, there exists a tree using only uncorrupted edges whose expected cost is at most $O(1) \MST(X)$. Since any tree with all uncorrupted edges is dominating, and $T'$ is a minimum dominating tree, the expected cost of $T'$ under $\hat{d}$ must also be $O(1) \MST(X)$. 

We will now show, conditioned on a high probability event, that the expected true cost of $T'$ is at most $O(\log n) \MST(X)$.

We do this via a charging argument similar to the one we previously used for tree-repair, though the analysis and the approximation guarantee will be different. For each edge $f \in E(T')$, let $w(f)$ denote the distance under $\hat{d}$, so that $\sum_{f \in E(T')} w(f)$ is at most $O(1) \MST(X)$ in expectation. Now, we will upper bound the true cost of $T'$ by charging to these weights $w(f)$.

We will upper bound the true cost of $T'$ edge by edge. Order the path edges from left to right. Fix an edge $e = (u,v) \in E(T')$, and suppose $u$ appears first from the left. If $e$ is uncorrupted, the true distance $d(u,v) = \hat{d}(u,v) = w(e)$, so we simply charge to $w(e)$.

Now suppose $e$ is corrupted. Then, we find the first vertex $x$ to the right of $v$ such that both $(u,x)$ and $(v,x)$ are uncorrupted. We can then upper bound $d(u,v) \leq d(u,x) + d(v,x)$. Again, $d(u,x) = \hat{d}(u,x) \leq \hat{d}_{T'}(u,x)$, and $d(v,x) = \hat{d}(v,x) \leq \hat{d}_{T'}(v,x)$ since $T'$ is dominating. Thus we can charge an amount $2w(f)$ to every edge $f$ on the $(u,x)$ path to account for $d(u,x) + d(v,x)$. If no such $x$ exists, let us simply bound the cost by the largest true distance between two points, which is at most $\MST(X)$. In  this case, let us simply say that we charge the value $\MST(X)$.

\paragraph{Analysis.}
Let us now track how many times an edge $f$/the value $\MST(X)$ is charged. If we prove that each edge $f$/$\MST(X)$ is charged at most $O(\log n)$ times, then that would prove that the true cost of $T'$ is at most $O(\log n)\sum_{f \in E(T')} w(f) \leq O(\log n)\MST(X)$, since every time we charge $f$, we charge at most $2w(f)$. Observe that only edges $e$ to the left of $f$ can charge $f$. Next, if two vertex disjoint edges $e_1 = (u_1, v_1)$ and $e_2 = (u_2, v_2)$ charge $f$, then note that one of the edges $(u_1, u_2)$ and $(v_1, u_2)$ has to be corrupted. Similarly, one of the edges $(u_1, v_2)$ and $(v_1, v_2)$ has to be corrupted.

In other words, if we pick a set of vertex disjoint edges $e_1, e_2 \ldots e_{t}$ that charge $f$, then the graph of corrupted edges between them is very dense (almost a clique, in the sense that the above property holds for every pair of these edges). In a random graph with corruption probability $p$, we expect the size of these cliques to be $O_p(\log n)$ with high probability. Thus we must have $t = O(\log n)$, and hence the number of edges that charge $f$ is $O(\log n)$. Following exactly the same proof, we see that the value $\MST(X)$ is charged at most $O(\log n)$ times, finishing the proof. 

\begin{figure}[t]
\centering
\begin{tikzpicture}[
    x=.72cm,y=1.00cm,
    vertex/.style={circle,fill=black,inner sep=1.35pt},
    path edge/.style={line width=.75pt,draw=black!55},
    charging edge/.style={line width=1.65pt,draw=orange!85!black},
    target edge/.style={line width=1.45pt,draw=black!80},
    same dash/.style={dash pattern=on 2.2pt off 2.2pt,line width=1.05pt},
    corrupted edge/.style={same dash,draw=black!55,shorten <=1pt,shorten >=1pt},
    good edge/.style={line width=1.12pt,draw=blue!70!black,shorten <=1pt,shorten >=1pt},
    every node/.style={font=\small},
    every label/.style={font=\scriptsize}
]

\coordinate (a0) at (-.55,0);

\coordinate (u1) at (0.00,0);
\coordinate (v1) at (0.70,0);

\coordinate (u2) at (1.50,0);
\coordinate (v2) at (2.20,0);

\coordinate (u3) at (3.00,0);
\coordinate (v3) at (3.70,0);

\coordinate (u4) at (4.50,0);
\coordinate (v4) at (5.20,0);

\coordinate (z)  at (5.90,0);
\coordinate (fl) at (6.65,0);
\coordinate (fr) at (7.35,0);

\coordinate (x4) at (8.05,0);
\coordinate (x3) at (8.75,0);
\coordinate (x2) at (9.45,0);
\coordinate (x1) at (10.15,0);
\coordinate (aR) at (10.85,0);

\draw[path edge]
  (a0)--(u1)--(v1)--(u2)--(v2)--(u3)--(v3)--(u4)--(v4)--(z)--(fl)--(fr)--(x4)--(x3)--(x2)--(x1)--(aR);

\draw[corrupted edge] (v1) to[bend right=24] (u2);
\draw[corrupted edge] (u1) to[bend right=30] (v2);

\draw[corrupted edge] (v2) to[bend right=24] (u3);
\draw[corrupted edge] (u2) to[bend right=30] (v3);

\draw[corrupted edge] (v3) to[bend right=24] (u4);
\draw[corrupted edge] (u3) to[bend right=30] (v4);

\draw[corrupted edge] (v1) to[bend right=38] (u3);
\draw[corrupted edge] (u1) to[bend right=43] (v3);

\draw[corrupted edge] (v2) to[bend right=38] (u4);
\draw[corrupted edge] (u2) to[bend right=43] (v4);

\draw[corrupted edge] (v1) to[bend right=50] (u4);
\draw[corrupted edge] (u1) to[bend right=55] (v4);

\draw[good edge] (u1) .. controls +(0,2.35) and +(0,2.35) .. (x1);
\draw[good edge] (v1) .. controls +(0,2.05) and +(0,2.05) .. (x1);

\draw[good edge] (u2) .. controls +(0,2.00) and +(0,2.00) .. (x2);
\draw[good edge] (v2) .. controls +(0,1.75) and +(0,1.75) .. (x2);

\draw[good edge] (u3) .. controls +(0,1.65) and +(0,1.65) .. (x3);
\draw[good edge] (v3) .. controls +(0,1.42) and +(0,1.42) .. (x3);

\draw[good edge] (u4) .. controls +(0,1.28) and +(0,1.28) .. (x4);
\draw[good edge] (v4) .. controls +(0,1.08) and +(0,1.08) .. (x4);

\draw[charging edge] (u1)--(v1) node[midway,yshift=7pt,font=\scriptsize] {$e_1$};
\draw[charging edge] (u2)--(v2) node[midway,yshift=7pt,font=\scriptsize] {$e_2$};
\draw[charging edge] (u3)--(v3) node[midway,yshift=7pt,font=\scriptsize] {$e_3$};
\draw[charging edge] (u4)--(v4) node[midway,yshift=7pt,font=\scriptsize] {$e_4$};

\draw[target edge] (fl)--(fr) node[midway,yshift=-8pt,font=\scriptsize] {$f$};

\foreach \P in {a0,u1,v1,u2,v2,u3,v3,u4,v4,z,fl,fr,x4,x3,x2,x1,aR}
  \node[vertex] at (\P) {};

\node[anchor=west] at (-1.35,0.55) {$P(T')$};

\draw[good edge] (11.55,1.10) -- (12.55,1.10);
\node[anchor=west,font=\scriptsize] at (12.75,1.10)
  {uncorrupted replacement edges};

\draw[corrupted edge] (11.55,0.45) -- (12.55,0.45);
\node[anchor=west,font=\scriptsize] at (12.75,0.45)
  {corrupted};

\draw[charging edge] (11.55,-0.20) -- (12.55,-0.20);
\node[anchor=west,font=\scriptsize] at (12.75,-0.20)
  {edges charging $f$};

\end{tikzpicture}
\caption{The analysis for the decrease-only algorithm. Here 
$e_1,\ldots,e_4$ all charge the same edge $f$, which implies many corrupted edges between their endpoints.}
\label{fig:decrease-only-charging}
\end{figure}

It may be instructive to compare this $O(\log n)$-approximation with the aforementioned $O(1)$-approximation for {\em repairing minimum spanning trees}, though the two analyses use almost identical charging schemes. The simple but crucial difference between the two is: the $O(1)$-approximation analyzed only one tree $T^*$ (which is the MST of the true metric) chosen before the random corruptions, but the $O(\log n)$-charging bound works with high probability for every dominating tree simultaneously. The $\Omega(\log n)$-lower bound for the decrease-only model~\cite{bateni2024metric} shows that such a conservative analysis is necessary. 

\subsection[Algorithm for General Corruptions.]{Algorithm for General Corruptions (\Cref{sec:general-case}).} One key assumption in the case of decrease-only corruptions was that any tree $T$ with only uncorrupted edges is a dominating tree. However, this assumption is false if the corruptions can increase distances: for some pair $(u,v)$, $\hat{d}(u,v) > d(u,v)$. In this case, $d(u,v) \leq d_T(u,v) = \hat{d}_T(u,v)$ by triangle inequality, but $\hat{d}(u,v)$ may be much larger, violating the domination property. Thus there may not exist any dominating tree with low cost. 

We address this challenge by resorting to \emph{almost-dominating} trees. Here the key observation is the following: if $T$ is a tree only on uncorrupted edges, then for some pair $(u,v)$, if $(u,v)$ is uncorrupted, then $(u,v)$ satisfies the domination condition, since $\hat{d}(u,v) = d(u,v) \leq d_T(u,v) = \hat{d}_T(u,v)$. While we do not know which edges are corrupted, for any set $S$ of size $C\log n$ or larger for some sufficiently large constant $C$, we expect only roughly a $p$ fraction of edges inside $S$ to be corrupted. This leads to the definition of almost-dominating trees: a tree $T$ is almost-dominating if for every subset $S \subseteq X$ of size larger than $C\log n$ for some sufficiently large $C$, at most (roughly) a $p$-fraction of the pairs $(u,v)$ inside $S$ violate the domination property. Note again that this condition can be checked by looking only at $\hat{d}$ since the domination property depends only on $\hat{d}$.

It is now not too hard to see that the previous proof for the decrease-only case now goes through for the general case, once we replace the notion of dominating trees by that of almost-dominating trees. 

\subsection[Efficient quasi-polynomial time for Decrease-Only Corruptions.]{Efficient quasi-polynomial time for Decrease-Only Corruptions (\Cref{sec:efficient-dec}).} One may wonder if given the observed distances, we can compute an approximate minimum dominating tree efficiently. In Appendix \ref{app:dominating-tree-hardness} we show that this is not the case: this problem is graph-coloring hard. Our efficient algorithm is instead based on hierarchical decompositions. In this overview, we only try to motivate the main ideas behind hierarchical decompositions instead of describing the full algorithm: \Cref{sec:efficient-dec} provides an overview of the algorithm itself. 

\paragraph{Clique partition for lower bounding $\MST(X)$.}
First, we define a clique partition, and show how it can be useful to lower bound the cost of an MST. Given some number $t$, let $G_t$ and $\hat{G}_t$ denote the original and corrupted \emph{threshold graphs}, obtained by restricting to distances of value at most $t$. Let $N_t$ be the size of the minimum clique partition in $\hat{G}_t$: the minimum number of disjoint cliques that cover the whole graph $\hat{G}_t$. Then a key lower bound is that the cost of the minimum spanning tree must be at least $\frac{t}{4}(N_t - 1)$.

This can be proved as follows. Consider the true metric, and apply repeatedly the following greedy carving procedure. Pick an uncovered vertex $u$, pick the ball of radius $\frac{t}{2}$ around $u$ in the true metric and mark each point in the ball as covered (if not covered already). Suppose this process gives rise to $x$ many balls. Then the cost of any spanning tree $T$ must be at least $\frac{(x-1)t}{4}$. Why? We take an Euler tour of $T$, and look at the order in which the vertices first appear. In this path, the distance between any two centers of the balls must be at least $\frac{t}{2}$ by construction. Thus the cost of this path, which is at most twice the cost of the tree $T$, must be at least $(x-1)\frac{t}{2}$, giving us the desired bound. 

How does this connect to clique partitions? Since the distance between any two points in a ball is at most $t$, each ball forms a clique in $G_t$. But a clique in $G_t$ is also a clique in $\hat{G}_t$ as the corruptions are decrease-only. Thus if the greedy procedure produced $x$ balls, we obtain a partition of the point set into $x$ cliques of $\hat{G}_t$. Since $N_t$ is the number of cliques in a \emph{minimum clique partition}, we have $N_t \leq x$. Thus the cost of any minimum spanning tree is at least $\frac{t}{4}(N_t - 1)$\footnote{This procedure is similar to the ball-carving introduced in~\cite{bateni2024metric}}.  

\paragraph{Hierarchical clique partition for computing a tree.}

Now suppose that every distance is between $1$ and $\Delta$, and suppose we know the number $\Delta$. We construct a \emph{hierarchical clique partition} (see \Cref{fig:cluster-tree}). On a high level, this is simply a sequence of partitions $\mathcal{C}_0, \mathcal{C}_1, \ldots \mathcal{C}_\ell$ of the point set $X$ at scales $t = 2^i$, $i = 0,1 \ldots \ell = \log \Delta$. Each $\mathcal{C}_{i-1}$ is a refinement of $\mathcal{C}_i$, $\mathcal{C}_0$ is a partition into singletons, and $\mathcal{C}_{\ell}$ is the whole point set. Each $\mathcal{C}_i$ is a partition of the whole point set $X$ into cliques in $\hat{G}_{2^i}$. Similar to the one level minimum clique partition, if we assume that our hierarchical partition is minimum in the sense that it minimizes $\sum_i 2^i |\mathcal{C}_i|$ (where $|\mathcal{C}_i|$ is the number of cliques in the partition of $X$ induced by $\mathcal{C}_i$), we can show via a similar greedy carving procedure that $\sum_i 2^i |\mathcal{C}_i| \leq \MST(X) \log \Delta$.

Now consider each cluster $C \in \mathcal{C}_i$ at some level $i$ of the partition. Suppose we pick one \emph{leader} vertex from each such cluster $C$, and form a hierarchical tree by connecting the leaders in the natural way: the leader of cluster $C \in \mathcal{C}_i$ connects to the leader of the unique cluster $C' \in \mathcal{C}_{i + 1}$ where $C \subseteq C'$ (\Cref{fig:cluster-tree-leader}). 

What is the cost of this hierarchical tree? Suppose for a moment that $C$ and $C'$ were cliques not just in $\hat{G}_{2^{i}}$ and $\hat{G}_{2^{i + 1}}$, but also in $G_{2^i}$ and $G_{2^{i + 1}}$ respectively. Then the distance between the two leaders of $C$ and $C'$ can be bounded by $O(2^i)$. We charge this cost to the child cluster $C$. Thus each parent-child edge of the hierarchical tree charges its child cluster in this manner. The total charge on elements in $\mathcal{C}_i$ would be $O(2^i |\mathcal{C}_i|)$, where $|\mathcal{C}_i|$ is the number of cliques at scale $2^i$. So the total cost of our tree is at most 

$$\sum_{i} 2^i |\mathcal{C}_i| \leq O(\MST(X) \log \Delta).$$

\paragraph{Building a good hierarchical clique partition.}
However, we need to remove multiple assumptions that we made in the above analysis. In particular, first, this scheme is not efficient as it requires finding a minimum cost hierarchical clique partition. Second, we assumed that the clusters $C \in \mathcal{C}_i$ are cliques in $G_{2^i}$, whereas in reality they are only cliques in $\hat{G}_{2^i}$, thus there may be points that are outliers, which are in reality, according to the true distances $d$, very far away from other points of the observed clique. These outliers affect the analysis of the tree cost, as the bounds are no longer correct if we pick an outlier as a leader.

The key idea is that if there are at most $O(\log n)$ outliers in every cluster $C \in \mathcal{C}_i$ for every $i$, then randomly sampling a leader uniformly at random from $C$ is likely to miss the outliers with good probability. We then show via a charging scheme that when we do sample an outlier as a leader, we can bound the costs effectively, again as long as every cluster $C \in \mathcal{C}_i$ has at most $O(\log n)$ outliers. 

To make the algorithm efficient, we employ a greedy carving algorithm similar to the procedure we described earlier. Given a scale $t$, to carve one cluster, we start by finding a small clique of size $100\log n$ in $\hat{G}_t$. This can be done by simply enumerating all sets of size $100\log n$ in quasi-polynomial time. Next, whenever we find a clique in $\hat{G}_t$, it is possible to show that all except at most $20 \log n$ points (outliers) are distance at most $2t$ from each other (see \Cref{lemma:coreoutliereff}), thus this small set of size $100\log n$ is actually close to a true clique. Then, with this set of $100\log n$ points as a reference seed set, we find points that are at distance at most $O(t)$ from at least $80 \log n$ points from this reference seed set, and carve them out. This procedure essentially simulates on $\hat{d}$ the ideal greedy carving procedure we described earlier, which works on the true metric $d$. 

Finally, the fact that our decomposition is hierarchical makes it harder to maintain that the number of outliers is at most $O(\log n)$ in every cluster of every $\mathcal{C}_i$ for all $i$. We build the hierarchy bottom up, so outliers from lower levels can accumulate, leading to a much larger number of outliers as we move up the tree. To resolve this, our actual carving rule also tests candidate points against all possible higher-scale seed sets, enforcing a triangle-inequality-like consistency condition which prevents outliers collected at lower levels from accumulating when clusters merge higher in the hierarchy.

\subsection{Preliminaries.}
\label{sec:preliminaries}

Let $(X, d)$ be a metric space with $n$ points. The diameter of a set
$S \sse X$ is $\diam(S) := \max_{u,v\in S} d(u,v)$. We use the shorthand $\Delta$ for $\diam(X)$. Let $D$ be the $n \times n$ (symmetric)
distance matrix with $D_{i,j} = d(i,j)$ for $1 \leq i, j \leq n$.
\begin{definition}[Semi-Random Corruptions Model]%
  \label{def:corrupt}%
  The distance matrix is \emph{corrupted} semi-randomly as follows:
  \begin{enumerate}
  \item Each unordered pair of indices, given by $\{i,j\}$ with
    $i < j$, is sampled independently with probability $p \in
    [0,1]$. Let $S \subset \binom{n}{2}$ be the set of sampled pairs;
    we call $S$ the \emph{corrupted set}.
  \item An adversary, who knows both $D$ and $S$, can arbitrarily
    choose some value $\dhat(i,j)$ for each pair $\{i,j\} \in S$. Now
    set $\Dhat_{i,j} = \Dhat_{j,i} = \dhat(i,j)$ for all
    $\{i,j\} \in S$, and $\Dhat_{i,j} = \Dhat_{j,i} = D_{i,j}$ for all
    $i,j \not\in S$. This (symmetric) matrix $\Dhat$ is called the
    \emph{corrupted distance matrix}.
  \end{enumerate}
\end{definition}
Note that $\Dhat$ need not satisfy the triangle inequality, and
hence may not be a metric.

\begin{definition}[Uncorrupted/Corrupted Tree Cost]%
  Given a tree $T$, let $\cost(T) := \sum_{\{u,v\} \in E(T)} {d}(u,v)$
  to be the true cost of the tree, and
  $\costh(T) := \sum_{\{u,v\} \in E(T)} \dhat(u,v)$ to be the
  corrupted cost.
\end{definition}

Define $\MST(X)$ to be the (true) \emph{cost} of the minimum spanning
tree of the metric $(X,d)$, i.e., $\MST(X) := \min_{T} \cost(T)$,
where the minimum is taken over all spanning trees of the metric. The
main problem of interest in this paper is the following: 
\begin{definition}[Corrupted MST Problem: $C$-Approximation]%
  An algorithm is a $C$-approximation to the corrupted MST problem if,
  for any corrupted distance matrix $\Dhat$ given as input,
  it outputs a tree $T = (X,E)$ on the metric space $(X,d)$, such that
  \[ \EE[\cost(T)] \leq C \cdot \MST(X). \]
  The expectation above is taken over the randomness of the corruption
  process, as well as any internal randomness of the algorithm itself. 
\end{definition}
We consider two natural special cases of this problem.
\begin{definition}[Increase-Only/Decrease-Only  Model]%
  \label{def:inc-only}%
  The \emph{increase-only} (resp.\ \emph{decrease-only}) corruptions
  model is a restricted version of \Cref{def:corrupt} where the
  adversary is only allowed to increase (resp.\ decrease) the values
  of distances for pairs in the corrupted set $S$. That is, for all
  $\{i,j\} \in S$, we have $\dhat({i,j}) \geq d(i,j)$ (resp.\
  $\dhat({i,j}) \leq d(i,j)$). %
\end{definition}
We sometimes
refer to the general problem (where both increases and decreases are
permitted) as the \emph{increase-decrease} model, when it helps make
the distinction clearer.

We use standard graph-theoretic terminology: for a graph $G = (V,E)$,
we also use $V(G), E(G)$ to denote the vertex and edge sets of $G$. We
often view the metric $(X,d)$ as a complete weighted graph
$(X, \binom{X}{2})$. By the metric property, the cost of any minimum
spanning tree is at least $\Delta$.  The following threshold graph is
a subgraph of this complete graph.

\begin{definition}[Threshold Graphs]%
  Given $t \geq 0$, we denote by $G_{t}$ the graph obtained by
  restricting the true metric $d$ to the pairs $(u,v)$ with distance
  $d(u,v)$ at most $t$. Similarly, we denote by $\widehat{G}_{t}$ the
  graph obtained by restricting the corrupted distances $\dhat$ to
  only those pairs $(u,v)$ whose corrupted distance $\dhat(u,v)$ is at
  most $t$.
\end{definition}

Finally, two conventions: we hide the dependency on the probability of
corruption $p$ inside the big-O notation for our approximation
factors. Moreover, we use the phrase ``with high probability'' to
denote probability at least $1 - \frac{1}{n^{c}}$ for some constant
$c > 0$ where $n$ is the number of points, and again hide the constant
$c$ in the big-O notation. For simplicity, in some of our statements,
we assume that $n$ is large enough.

\section{The Tree-Repair Lemma and the Increase-Only Algorithm.}
\label{sec:tree-repair}

In this section, we show a key tree-repair lemma which works in the
general increase-decrease setting. Given any true minimum spanning
tree $T^*$ for the metric $(X,d)$, we show that there is a high
probability event $\calF$, such that conditioned on $\calF$, one can
obtain a tree $T$ by modifying $T^*$, such that $T$ has expected cost
at most $O(\MST(X))$. As we will see, this immediately leads to an
algorithm for the increase-only case. Recall that, given a tree
$T = (X,E)$, the uncorrupted and corrupted costs are given by
$\cost(T)$ and $\costh(T)$ respectively.  We need the following useful
definition.

\begin{definition}[Path shortcut]%
  \label{def:path-shortcut}%
  Given a tree $T$, the \emph{path-shortcut} $P(T)$ is obtained by
  taking an Euler tour of $T$, and listing vertices in the order in
  which they first appear in the Euler tour.
\end{definition}

\begin{lemma}[Tree Repair Lemma]%
  \label{lemma:repair}%
  Consider the increase-decrease model of corruption with corruption
  probability $p \in [0,1)$. Then there is an event $\calF$ such that
  $\Pr[\calF] \geq 1 - \nicefrac{1}{n^{10}}$, so that (a) conditioned on
  $\calF$, there exists a tree $T$, such that every edge of $T$ is
  uncorrupted, and moreover,
  \begin{gather}
    \EE[\,\cost(T) \mid \calF\,] = \EE[\,\costh(T) \mid \calF\,] \leq O(1)\cdot \MST(X).
  \end{gather}
\end{lemma}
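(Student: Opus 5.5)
We follow the repair-and-charge scheme sketched in the overview. Fix a true MST $T^*$ before the corruptions are drawn. Let $P = P(T^*) = (x_1, x_2, \ldots, x_n)$ be its path-shortcut, and set $w(e) = d(x_i,x_{i+1})$ for $e = (x_i,x_{i+1})$. Since the Euler tour uses every edge twice, $\sum_{e \in P} w(e) \leq 2\MST(X)$. Write $d_P$ for distance along $P$; the triangle inequality gives $d(a,b) \le d_P(a,b)$. Crucially, $P$ is determined by $d$ alone and is independent of the corrupted set $S$. The tree $T$ will be built from $P$ by keeping every uncorrupted path edge and replacing each corrupted path edge $(x_i,x_{i+1})$ by a two-edge detour through a ``meeting vertex''.

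\textbf{Event and construction.} For a corrupted edge $e=(x_i,x_{i+1})$, scan outward in the path order and take the nearest vertex $x \notin \{x_i,x_{i+1}\}$ (say, alternating right/left, or just rightward and then leftward) such that both $(x_i,x)$ and $(x_{i+1},x)$ are uncorrupted. Let $\calF$ be the event that for every $i$, among the $n-2$ other vertices, the first $C\log n$ in this scanning order contain such an $x$. For a fixed $i$ these events are independent across $x$, each with success probability $(1-p)^2$. So failure has probability at most $(1-(1-p)^2)^{C\log n} \le n^{-11}$ for $C = C(p)$ large. A union bound over $i$ gives $\Pr[\calF]\ge 1-n^{-10}$ (for $n$ large; small $n$ is absorbed into constants).

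On $\calF$, the edge set consisting of the uncorrupted path edges together with $\{(x_i,x),(x_{i+1},x)\}$ for each corrupted $e$ is connected. Indeed, each corrupted path edge's endpoints are joined through $x$, and all these edges are uncorrupted. Take $T$ to be any spanning tree inside this set. Then $\costh(T)=\cost(T)$, and $\cost(T)$ is at most the total length of the set.

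\textbf{Charging.} For a corrupted $e$ with meeting vertex $x$,
\[ d(x_i,x)+d(x_{i+1},x) \le d_P(x_i,x)+d_P(x_{i+1},x) \le 2\sum_{f \in P[e,x]} w(f), \]
where $P[e,x]$ is the subpath spanning $e$ and $x$. So $e$ charges $2w(f)$ to each $f$ in that interval. Fix $f$ at path distance $y$ from $e$. Then $e$ charges $f$ only if every one of the roughly $y-1$ candidate vertices strictly between them (on that side) failed. Since distinct candidate $x$ use distinct pairs, these failures are independent, and the probability is at most $q^{y-1}$ with $q = 1-(1-p)^2<1$. If we scan both sides, $e$ reaches $f$ only after also exhausting about as many vertices on the other side, which only helps. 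Summing over the at most two edges $e$ at each distance $y$ gives an expected number of charges to $f$ of at most $2\sum_{y\ge0} q^{y} = O_p(1)$. Hence
\[ \EE[\cost(T)] \le \sum_f w(f) + 2\sum_f w(f)\cdot O_p(1) = O(\MST(X)), \]
where the first term accounts for the kept uncorrupted path edges.

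\textbf{Main obstacle: conditioning on $\calF$.} The charging bound above is unconditional. To handle conditioning, define $T$ arbitrarily (say, cost at most $n\Delta \le n\MST(X)$) off $\calF$, or simply bound
\[ \EE[\cost(T)\mathbf 1_{\calF}]\le O(\MST(X)), \qquad \EE[\cost(T)\mid\calF] \le \frac{\EE[\cost(T)\mathbf 1_\calF]}{\Pr[\calF]} \le (1+O(n^{-10}))\, O(\MST(X)). \]
One detail must be checked: the ``fails on $y-1$ vertices'' events must be well defined even off $\calF$. They are, because the scanning rule is defined for every outcome, with the charge set to zero if no $x$ exists. So the unconditional expectation of the charge restricted to $\calF$ is bounded by the same geometric sum.
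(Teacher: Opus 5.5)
Your proposal is correct and follows essentially the same route as the paper. Both take the path-shortcut of a fixed true MST, replace each corrupted path edge by a detour through the nearest vertex joined to both endpoints by uncorrupted pairs, and bound the expected number of charges to each path edge by a geometric series in $1-(1-p)^2$; the differences are cosmetic. You truncate $\calF$ to the first $C\log n$ candidates, which the paper uses only in its high-probability remark, and you handle the conditioning explicitly via $\EE[\cost(T)\mathbf{1}_{\calF}]/\Pr[\calF]$, a slightly cleaner version of the paper's final step.
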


\begin{proof} 
  Let $T^*$ be a minimum spanning tree in the uncorrupted metric, so that $\cost(T^*) = \MST(X)$, and let $P^* = P(T^*)$ denote the path-shortcut of $T^*$ (as defined in \Cref{def:path-shortcut} and illustrated in \Cref{fig:path-shortcut}).

\begin{figure}[t]
\centering
\begin{tikzpicture}[
    >=Stealth,
    A/.style={circle,draw=Acol,fill=Acol!16,line width=0.9pt,minimum size=7mm,font=\small\bfseries},
    B/.style={circle,draw=Bcol,fill=Bcol!16,line width=0.9pt,minimum size=7mm,font=\small\bfseries},
    C/.style={circle,draw=Ccol,fill=Ccol!16,line width=0.9pt,minimum size=7mm,font=\small\bfseries},
    D/.style={circle,draw=Dcol,fill=Dcol!16,line width=0.9pt,minimum size=7mm,font=\small\bfseries},
    E/.style={circle,draw=Ecol,fill=Ecol!16,line width=0.9pt,minimum size=7mm,font=\small\bfseries},
    F/.style={circle,draw=Fcol,fill=Fcol!16,line width=0.9pt,minimum size=7mm,font=\small\bfseries},
    tedge/.style={line width=1pt, black!70},
    ledge/.style={line width=1pt, black!20},
    tourf/.style={Tour, dashed, -{Stealth[length=2.2mm]}, line width=0.85pt,
                  shorten >=2pt, shorten <=2pt},
    tourb/.style={Tour!40, dashed, -{Stealth[length=2.2mm]}, line width=0.7pt,
                  shorten >=2pt, shorten <=2pt},
    num/.style={inner sep=1pt, fill=white, text=Tour, font=\tiny\bfseries},
    panel/.style={font=\small\itshape, black!75}
  ]

  \begin{scope}
    \node[A] (a) at (1.4,3.4)  {a};
    \node[B] (b) at (0,1.7)    {b};
    \node[C] (c) at (2.8,1.7)  {c};
    \node[D] (d) at (-1.2,0)   {d};
    \node[E] (e) at (1.2,0)    {e};
    \node[F] (f) at (2.8,0)    {f};
    \draw[tedge] (a)--(b); \draw[tedge] (a)--(c);
    \draw[tedge] (b)--(d); \draw[tedge] (b)--(e); \draw[tedge] (c)--(f);
    \node[panel] at (0.8,-1.05) {tree $T$};
  \end{scope}

  \node[black!40] at (3.9,1.7) {\huge$\Rightarrow$};

  \begin{scope}[xshift=5.6cm]
    \node[A] (a2) at (1.4,3.4)  {a};
    \node[B] (b2) at (0,1.7)    {b};
    \node[C] (c2) at (2.8,1.7)  {c};
    \node[D] (d2) at (-1.2,0)   {d};
    \node[E] (e2) at (1.2,0)    {e};
    \node[F] (f2) at (2.8,0)    {f};
    \draw[ledge] (a2)--(b2); \draw[ledge] (a2)--(c2);
    \draw[ledge] (b2)--(d2); \draw[ledge] (b2)--(e2); \draw[ledge] (c2)--(f2);

    \draw[tourf] (a2) to[bend left=26] node[num,pos=.5]{1}  (b2);  %
    \draw[tourf] (b2) to[bend left=26] node[num,pos=.5]{2}  (d2);  %
    \draw[tourb] (d2) to[bend left=26] node[num,pos=.5]{3}  (b2);  %
    \draw[tourf] (b2) to[bend left=26] node[num,pos=.5]{4}  (e2);  %
    \draw[tourb] (e2) to[bend left=26] node[num,pos=.5]{5}  (b2);  %
    \draw[tourb] (b2) to[bend left=26] node[num,pos=.5]{6}  (a2);  %
    \draw[tourf] (a2) to[bend left=26] node[num,pos=.5]{7}  (c2);  %
    \draw[tourf] (c2) to[bend left=26] node[num,pos=.5]{8}  (f2);  %
    \draw[tourb] (f2) to[bend left=26] node[num,pos=.5]{9}  (c2);  %
    \draw[tourb] (c2) to[bend left=26] node[num,pos=.5]{10} (a2);  %

    \node[panel] at (0.8,-1.05) {Euler tour};
    \node[font=\footnotesize] at (0.8,-1.62)
      {$\textcolor{Acol}{a}\!\to\!\textcolor{Bcol}{b}\!\to\!\textcolor{Dcol}{d}\!\to\!
        \textcolor{Bcol}{b}\!\to\!\textcolor{Ecol}{e}\!\to\!\textcolor{Bcol}{b}\!\to\!
        \textcolor{Acol}{a}\!\to\!\textcolor{Ccol}{c}\!\to\!\textcolor{Fcol}{f}\!\to\!
        \textcolor{Ccol}{c}\!\to\!\textcolor{Acol}{a}$};
  \end{scope}

  \node[black!40] at (9.3,1.7) {\huge$\Rightarrow$};

  \begin{scope}[xshift=10.5cm, yshift=1.7cm]
    \foreach \i/\lbl/\st in {0/a/A,1/b/B,2/d/D,3/e/E,4/c/C,5/f/F}{
      \node[\st] (p\i) at (0.85*\i,0) {\lbl};
    }
    \draw[tedge] (p0)--(p1)--(p2)--(p3)--(p4)--(p5);
    \node[panel] at (1.8,-1.05) {path-shortcut $P(T)$};
  \end{scope}

\end{tikzpicture}
\caption{The path-shortcut construction of \Cref{def:path-shortcut}. Starting from a tree $T$ (left), an Euler tour visits the vertices in the numbered order $1,\dots,10$ (middle); keeping each vertex at its first occurrence yields the path-shortcut $P(T)$ (right).}
\label{fig:path-shortcut}
\end{figure}

  For each edge $e = \{u,v\} \in E(P^*)$, let $w(e)$ be the sum of
  the true distances of the tree edges corresponding to the sub-path
  between $u$ and $v$ on the Euler tour used to construct
  $P^*$. Notice that $w(P^*) = \sum_e w(e) \leq 2 \cost(T^*)$,
  since an Euler tour traverses every edge of the tree at most
  twice. Given any two vertices $u,v \in V(P^*)$, define $d_P(u,v)$
  to be the total weight of the sub-path between $u$ and $v$. Observe
  that $d(u,v) \leq d_P(u,v)$ by the triangle inequality applied to
  $d(.,.)$.

  Next, we modify the path $P^*$ into a tree $T$. For each corrupted
  edge $e = (u,v)$ on the path where $u$ appears before $v$ when
  scanned left to right, let $\textsf{meet}_r(e)$ be the first vertex
  $x$ on the path to the right of $v$ such that both $(u,x)$ and
  $(v,x)$ are uncorrupted. Similarly, let $\textsf{meet}_l(e)$ be the
  first vertex $x$ on the path to the left of $u$ such that both
  $(u,x)$ and $(v,x)$ are uncorrupted.  The high probability event
  $\calF$ is defined as follows: for every corrupted edge $e$, either
  $\textsf{meet}_r(e)$ or $\textsf{meet}_l(e)$ exists. It is easy to
  see that this holds with high probability as follows. Let
  $r = (1-p)^2$. For every fixed vertex $x \neq u,v$, both the edges
  $(u,x)$ and $(v,x)$ exist with probability $r$. Thus the probability
  no such vertex exists is $(1-r)^{n-2}$. Applying a union bound over
  the $n - 1$ edges $(u,v)$ of $P^*$, this event $\calF$ does not
  happen with probability bigger than $\nicefrac{1}{n^{10}}$ for large
  enough $n$ for every $p < 1$.

  Finally, we obtain $T$ as follows. For each uncorrupted edge of
  $P^*$ keep it as such. For every corrupted edge $e = (u,v)$, if
  $\textsf{meet}_r(e)$ exists, then replace $u,v$ by the path
  $u-\textsf{meet}_r(e)-v$. Else if $\textsf{meet}_l(e)$ exists,
  replace $u,v$ by the path $u-\textsf{meet}_l(e)-v$. Finally,
  arbitrarily delete any additional edges if there are cycles to make
  the resulting graph a tree.

  For the analysis, we first bound the expected cost of $T$ (without
  normalizing by $\Pr[\calF]$ for conditional expectation, which we
  will do at the end). The total cost of uncorrupted edges of $P^*$ is
  clearly $O(\MST(X))$. Every time we replace an edge $e = (u,v)$ by
  the path $u - x - v$, we can bound the cost by charging at most
  twice to the edges on the sub-path on $P^*$ between $u,v$ and
  $x$. More formally, the cost of the path $u-x-v$ is at most
  $d(u,x) + d(x,v) \leq d_{P}(u,x) + d_{P}(x,v)$. We can further
  charge $d_{P}(u,x)$ and $d_{P}(v,x)$ to the weight $w(f)$ of every
  edge $f$ on the unique path between $u$ and $x$, and $v$ and $x$
  respectively in $P^*$. Thus it is enough to show that in this
  charging scheme, an edge is charged at most a constant number of
  times in expectation. If this happens, since the total weight
  $\sum_{f \in P^*} w(f)$ is at most $O(1) \MST(X)$, the total cost of
  $T$ is at most $O(1) \MST(X)$.

  Let us fix an edge $f$ with weight $w(f)$, and bound how many times
  it is charged by other edges $e = (u,v)$. First, consider edges $e$
  on the left of $f$ which charge $f$. If the edge $e$ is located $y$
  hops away from the edge $f$, then this edge $e$ charges $f$ only
  with probability at most $(1-r)^y$. Thus the expected charge from
  all the edges on the left of $f$ is at most
  $$ \sum_{y = 1}^\infty  {(1-r)^y} = \frac{1-r}{r} = O(1)$$
  whenever $p < 1$, since $r = (1-p)^2$. Symmetrically, repeating the
  argument on the right, we conclude that an edge is charged at most
  $O(1)$ times.

  Since we condition on a high probability event $\calF$, the expected
  cost given $\calF$, $\mathbb{E}[\costh(T)\mid\calF]$ remains bounded
  by $O(1) \MST(X)$, as
  $\Pr[\calF] \geq 1 - \nicefrac{1}{n^{10}} \geq 1 - o(1)$.
\end{proof}

\begin{remark}[High-probability guarantees]\label{rem:high-probability}
  In the construction above, a geometric-tail bound and a union bound
  over the edges of $P^*$ show that, with high probability, every repair
  spans $O_p(\log n)$ path edges. Hence each path edge is charged at most
  $O_p(\log n)$ times, giving an uncorrupted spanning tree of cost
  $O_p(\log n)\MST(X)$. In the increase-only model, the observed MST has
  true cost at most its observed cost, which is at most the cost of this
  uncorrupted comparison tree, giving an $O_p(\log n)$-approximation with
  high probability. Using this comparison tree in the decrease-only
  and general algorithms, together with
  \Cref{lemma:dominatinglog,lemma:density,lem:any-almost}, gives an
  $O_p(\log^2 n)$-approximation with high probability in the respective
  parameter ranges.
\end{remark}

\subsection{Finding Spanning Trees in the Increase-Only Model.}
\label{sec:increase-only}

Given the tree-repair lemma, the $O(1)$-approximation algorithm for
the increase-only model is remarkably clean:
\begin{quote}
  \emph{Return a minimum spanning tree \(T\) with respect to the
    corrupted distances \(\widehat d\).}
\end{quote}
\begin{theorem}[Increase-Only Theorem]%
  \label{thm:increase-only-theorem}%
  Let $T$ be the minimum spanning tree with respect to the corrupted
  distances $\dhat$. Then $\EE[\cost(T)] \leq O(1) \cdot \MST(X)$.
\end{theorem}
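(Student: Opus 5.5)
The plan is to split the expectation according to the high-probability event $\calF$ of \Cref{lemma:repair}:
\[
\EE[\cost(T)] = \EE[\cost(T)\mid\calF]\Pr[\calF] + \EE[\cost(T)\mid\bar\calF]\Pr[\bar\calF].
\]
We bound the two terms separately.

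\textbf{The good event.} On $\calF$, \Cref{lemma:repair} gives a spanning tree $T_\calF$ all of whose edges are uncorrupted, with $\EE[\costh(T_\calF)\mid\calF]\le O(1)\MST(X)$. Three observations chain together pointwise on $\calF$:
\begin{enumerate}
\item In the increase-only model, $\dhat(u,v)\ge d(u,v)$ for every pair, so $\cost(T)\le\costh(T)$.
\item Since $T$ is a minimum spanning tree with respect to $\dhat$, we have $\costh(T)\le\costh(T_\calF)$.
\end{enumerate}
Hence $\cost(T)\le\costh(T_\calF)$ on $\calF$. Taking conditional expectations gives $\EE[\cost(T)\mid\calF]\le O(1)\MST(X)$. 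Multiplying by $\Pr[\calF]\le 1$ keeps this bound.

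\textbf{The bad event.} On the complement $\bar\calF$, $T$ may have very large observed cost, because the adversary can inflate distances arbitrarily. However, we only need its true cost. $T$ has $n-1$ edges, and each edge has true length at most $\Delta=\diam(X)\le\MST(X)$. So $\cost(T)\le (n-1)\MST(X)$ deterministically. Since $\Pr[\bar\calF]\le n^{-10}$, this term contributes at most $n^{-9}\MST(X)=o(\MST(X))$.

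\textbf{Conclusion and main subtlety.} Summing the two terms gives $\EE[\cost(T)]\le O(1)\cdot\MST(X)$, with the constant depending only on $p$. The only real care needed is the conditioning. The comparison tree $T_\calF$ depends on the corruption pattern, and its existence is only guaranteed on $\calF$. Therefore the inequality $\cost(T)\le\costh(T_\calF)$ should be used pointwise on $\calF$ before taking the conditional expectation, rather than being compared unconditionally. The bad event is handled entirely by the deterministic true-cost bound, which is valid because increases affect only $\dhat$ and never the true distances. No further obstacle is expected.
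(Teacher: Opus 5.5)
Your proposal is correct and follows the paper's proof essentially step for step: on $\calF$ you chain $\cost(T)\le\costh(T)\le\costh(T_\calF)$ using the increase-only property and the $\dhat$-minimality of $T$, then apply \Cref{lemma:repair}; on $\bar\calF$ you use the deterministic bound $\cost(T)\le(n-1)\Delta\le(n-1)\MST(X)$, which contributes only $o(\MST(X))$. The only blemish is cosmetic: you announce three observations but list two.
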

\begin{proof}
  By~\Cref{lemma:repair}, with probability at least \(1-n^{-10}\),
  there exists a tree \(T'\) consisting only of uncorrupted edges such
  that
  \(\EE[\cost(T')\mid \calF]=\EE[\costh(T')\mid \calF]\le
  O(1)\MST(X)\). Since \(T\) is a minimum spanning tree under the
  corrupted distances \(\widehat d\), we have
  \(\costh(T)\le \costh(T')\). Also, in the increase-only model, the
  observed distances are at least the original distances, so
  \(\cost(T)\le \costh(T)\). Therefore, conditioned on \(\calF\), the
  expected true cost of \(T\) is at most \(O(1)\MST(X)\).

  If event \(\calF\) does not occur, we use the trivial bound
  \(\cost(T)\le (n-1)\Delta\le (n-1)\MST(X)\), where \(\Delta\) is the
  diameter of the metric. Since event \(\calF\) happens with probability at
  least \(1- n^{-10}\), this contributes only \(o(\MST(X))\) to the
  expectation. Hence the expected true cost of the tree returned by
  the algorithm is \(O(1)\MST(X)\).
\end{proof}

\section{Finding Spanning Trees in the Decrease-Only Model.}
\label{sec:decrease-only}

In this section, we present our algorithm for the case where corrupted
distances can only decrease. Since many distances could now have been
set to zero, we cannot do something simplistic like computing an
minimum spanning tree of the corrupted metric. The new idea is to
define the notion of a dominating tree in the corrupted metric, which
we do next.

\subsection{Dominating Trees and the Decrease-Only Algorithm.}
\label{sec:domin-trees}

The crucial definition of the section is that of a dominating
(spanning) tree. Intuitively, this is a spanning tree on the input
point set $X$ which satisfies the triangle inequality \emph{after} the
corruption. Given a tree $T$ and points $u,v \in X$, let
$\langle y_0 = u, y_1, y_2, \ldots, y_{k} = v\rangle$ be the unique
path between $u$ and $v$ in this tree. The (corrupted or observed)
\emph{tree distance} is defined to be $\dhat_T(u,v)$ as
$\sum_{i = 0}^{k-1} \dhat(y_i, y_{i + 1})$.

\begin{definition}[Dominating tree]%
  \label{sec:find-spann-trees}%
  A spanning tree $T$ is \emph{dominating} if
  $\dhat_T(u,v) \geq \dhat(u,v)$ for all $u,v \in X$ with $u \neq v$.
\end{definition}

Note that, for each pair of nodes, the comparison is between the tree
distance between them, and the observed direct distance between them. A
\emph{minimum dominating tree} is a dominating tree $T$ that minimizes
$\costh(T) = \sum_{\{u,v\} \in E(T)} \dhat(u,v)$.  Our algorithm is
very simple once we have the notion of a dominating tree at hand.

\begin{quote}
  \emph{Find any minimum dominating tree $T$, if it exists. If such a tree
  exists, return the path-shortcut $P(T)$. If no dominating
  tree $T$ exists, return any Hamiltonian path $P$ on the point set
  $X$.}
\end{quote}

\subsection{The Analysis.}

\begin{theorem}[Decrease-Only Theorem]%
  \label{thm:decrease-only-theorem}%
  Let $T$ be the minimum-cost dominating tree (or an arbitrary
  Hamiltonian path, if no such tree exists). Then
  $\EE[\,\cost(P(T))\,] \leq O(\log n) \cdot \MST(X)$.
\end{theorem}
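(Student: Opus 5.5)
We follow the charging argument from the overview. We split on a good event $\calE = \calF \cap \calE'$, where $\calF$ is the event of \Cref{lemma:repair} and $\calE'$ is a random-graph event defined below; both will hold with probability $1 - n^{-\Omega(1)}$. Off $\calE$ we use the trivial bound $\cost(P(T)) \le (n-1)\Delta \le (n-1)\MST(X)$, which contributes $o(\MST(X))$ to the expectation. On $\calF$, the tree given by \Cref{lemma:repair} uses only uncorrupted edges. In the decrease-only model every such tree is dominating: for any pair $u,v$,
\[
\dhat(u,v) \le d(u,v) \le d_{T'}(u,v) = \dhat_{T'}(u,v).
\]
Hence a minimum dominating tree $T$ exists and satisfies $\EE[\costh(T)\,\mathbf 1_\calF] \le O(1)\MST(X)$. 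Set $P = P(T)$. By dominance, $\dhat_P(a,b) \le 2\costh(T)$ along any subpath, so the edge weights $w(f) := \dhat(f)$ for $f \in P$ obey $\dhat(a,b) \le \dhat_T(a,b) \le$ (the $w$-length of the $P$-subpath from $a$ to $b$), and $\sum_{f\in P} w(f) \le 2\costh(T)$. The Euler-tour weights are inherited exactly as in \Cref{lemma:repair}.

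Next we bound $\cost(P)$ deterministically on $\calE'$ by $O(\log n)\bigl(\sum_f w(f) + \MST(X)\bigr)$. Fix an edge $e=(u,v)$ of $P$. If $e$ is uncorrupted, $d(e) = w(e)$. If $e$ is corrupted, let $x$ be the first vertex to the right of $v$ such that $(u,x)$ and $(v,x)$ are both uncorrupted. Then
\[
d(u,v) \le d(u,x) + d(v,x) = \dhat(u,x) + \dhat(v,x) \le 2\sum_{f \in P[u,x]} w(f).
\]
We charge $2w(f)$ to each $f$ on $P[u,x]$. If no such $x$ exists, we charge $\Delta \le \MST(X)$ to a global budget. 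It remains to show that each $f$, and the global budget, are charged $O(\log n)$ times.

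This is the main obstacle, because $P$ depends on the corruptions, so we need a bound that holds uniformly over all paths. Suppose edges $e_1,\dots,e_k$ all charge $f$. Then they all lie to the left of $f$, and every $x$ strictly between $e_i$ and its meeting point fails for $e_i$. Greedily pass to a vertex-disjoint subfamily of size at least $k/3$, ordered left to right. For $i<j$, the endpoint $u_j$ lies between $e_i$ and $f$, hence before $\meet(e_i)$. Unless $u_j$ equals $\meet(e_i)$ itself, which happens for at most one $j$ per $i$ and is handled by discarding half the family, at least one of $(u_i,u_j),(v_i,u_j)$ is corrupted, and likewise for $v_j$. The global-budget case is identical, with every $x$ to the right failing.

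Define $\calE'$ as the event that for every $k = C\log n$ and every choice of $k$ disjoint pairs $\{u_i,v_i\}$, it is \emph{not} the case that for all $i<j$ and each $y\in\{u_j,v_j\}$ some edge from $y$ to $\{u_i,v_i\}$ is corrupted. Fix the pairs. The requirements for distinct $(i,j,y)$ involve disjoint edge sets, so each holds independently with probability $1-(1-p)^2 =: q < 1$. The failure probability is therefore at most $q^{k(k-1)}$. Union bounding over the at most $n^{2k}$ choices gives $n^{2k} q^{k^2/2} \le n^{-\Omega(k)}$ once $C$ is large in terms of $p$. So on $\calE'$ each $f$ is charged $O(\log n)$ times.

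Summing, on $\calE$ we get $\cost(P) \le O(\log n)\bigl(\costh(T) + \MST(X)\bigr)$. Taking expectations and using the bound on $\EE[\costh(T)\mathbf 1_\calF]$ from \Cref{lemma:repair} gives $\EE[\cost(P(T))] \le O(\log n)\MST(X)$.
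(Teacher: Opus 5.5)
Your proof follows the paper's route. It merges \Cref{lemma:mindominating} and \Cref{lemma:dominatinglog}, uses the same first-meeting-vertex-to-the-right charging, and relies on the same uniform event over ordered families of disjoint pairs. You use both endpoints of $e_j$, which gives $q^{k(k-1)}$ instead of the paper's $q^{\binom{t}{2}}$, and your indicator bookkeeping with $\EE[\costh(T)\mathbf{1}_\calF]$ is a little cleaner than the paper's conditioning.

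However, one step is wrong as written, and the charging depends on it: the choice of weights $w(f) := \dhat(f)$. A shortcut edge $f$ of $P(T)$ is in general not an edge of $T$. Domination of $T$ only gives $\dhat(f) \le \dhat_T(f)$, which is the wrong direction for what you need. For instance, let $T$ be a star centered at $c$. Every edge of $P(T)$ after the first joins two leaves, and the adversary may corrupt all these leaf--leaf pairs down to $0$ without affecting domination, since tree distances between leaves use only center--leaf edges. Meanwhile an uncorrupted leaf--leaf pair $(u,x)$ has $d(u,x) = \dhat(u,x) > 0$. So the claimed inequality $\dhat_T(a,b) \le \sum_{f \in P[a,b]} w(f)$ fails, and the charge $d(u,x) + d(v,x)$ is not covered.

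The fix is to use the Euler-tour weights of \Cref{lemma:dominatinglog}. Let $w(f)$ be the total $\dhat$-length of the tree edges the Euler tour traverses between the first occurrences of the endpoints of $f$. Then $\sum_f w(f) \le 2\costh(T)$. Moreover, for $a$ before $b$ on $P$, the $w$-length of $P[a,b]$ is the $\dhat$-length of a walk from $a$ to $b$ in $T$, hence at least $\dhat_T(a,b)$. This is presumably what your sentence about inheriting the Euler-tour weights intends. With it, the rest of your argument goes through; the uncorrupted case becomes $d(e) = \dhat(e) \le \dhat_T(e) \le w(e)$ rather than an equality.

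A small point in the disjointness step: $u_j$ is always strictly before $\meet(e_i)$. The only possible coincidence is $v_j = \meet(e_i)$, which forces $e_j = f$, so it suffices to drop $f$ itself from the family.
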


We prove \Cref{thm:decrease-only-theorem} using two key lemmas. The
first shows that, conditioned on a high probability event $\calF$
defined below, there exists a dominating tree with small cost with
respect to the corrupted distances. The following lemma shows that for
\emph{the path-shortcut of any dominating tree}, the true cost is at
most $O(\log n)$ times the corrupted cost plus $O(\log n)$ times MST
with high probability.

\begin{lemma}\label{lemma:mindominating}
  There is a high probability event $\calF$ which happens with
  probability at least $ 1- n^{-10}$, so that
  \[ \EE[\,\costh(T) \mid \calF\,] \leq O(1) \MST(X). \]
\end{lemma}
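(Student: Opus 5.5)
Here $T$ is the minimum dominating tree that the algorithm computes. The plan is to compare it against the uncorrupted tree from \Cref{lemma:repair}, so I take $\calF$ to be exactly the event from that lemma.

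\textbf{Step 1: the repaired tree is dominating.} On $\calF$, let $T_0$ be the tree given by \Cref{lemma:repair}. Every edge of $T_0$ is uncorrupted and $\EE[\costh(T_0)\mid\calF]\le O(1)\MST(X)$. Fix any pair $u\neq v$ and let $\langle y_0=u,\dots,y_k=v\rangle$ be its path in $T_0$. Since every path edge is uncorrupted, $\dhat_{T_0}(u,v)=\sum_i d(y_i,y_{i+1})$. The triangle inequality for $d$ gives $\sum_i d(y_i,y_{i+1})\ge d(u,v)$. In the decrease-only model every pair satisfies $\dhat(u,v)\le d(u,v)$, whether or not it is corrupted. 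Chaining these, $\dhat_{T_0}(u,v)\ge \dhat(u,v)$, so $T_0$ is dominating.

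\textbf{Step 2: compare with the minimum.} On $\calF$ a dominating tree exists, so the algorithm returns a minimum dominating tree $T$. By minimality, $\costh(T)\le\costh(T_0)$ pointwise on $\calF$. Taking conditional expectations gives $\EE[\costh(T)\mid\calF]\le\EE[\costh(T_0)\mid\calF]\le O(1)\MST(X)$, which is the claim.

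\textbf{Main obstacle.} The only subtle point is that $T_0$ is a random object defined inside the proof of \Cref{lemma:repair}. The comparison must therefore be pointwise on each outcome in $\calF$, and I should only take expectations after that. The lemma's bound is already stated conditional on $\calF$, so this works directly with no renormalization.

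One caution: corrupted distances could be negative, but the model's decrease-only clause does not forbid that. This does not affect the argument, because Step 1 uses only $\dhat\le d$ on all pairs together with exact equality on the edges of $T_0$.
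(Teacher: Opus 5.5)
Your proposal is correct and follows the paper's proof essentially verbatim. You take $\calF$ from \Cref{lemma:repair} and use the chain $\dhat(u,v)\le d(u,v)\le d_{T_0}(u,v)=\dhat_{T_0}(u,v)$ to show the uncorrupted repaired tree is dominating, then conclude from minimality, $\costh(T)\le\costh(T_0)$ pointwise on $\calF$, before taking conditional expectations.
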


\begin{proof}
  First, using~\Cref{lemma:repair}, we obtain that there is a
  high-probability event $\calF$, and a tree $T'$, such that each
  edge in $T'$ is uncorrupted; moreover,
  $\EE[\,\costh(T') \mid \calF\,] \leq O(1) \MST(X)$.

  Next, observe that since every tree edge of $T'$ is uncorrupted, it
  must be dominating. This follows from the fact that distances only
  decrease after corruption, through the following chain of
  inequalities. Fix a pair $u,v$ of vertices. Then we have
  \[
    \dhat(u,v) \leq d(u,v) \leq d_{T'}(u,v) = \dhat_{T'}(u,v),
  \]
  where the first inequality follows from the decrease-only corruption
  model, the second follows from triangle inequality of the true
  distances, and the last inequality from the fact that the tree $T'$
  does not have any corrupted edges. Finally, $T$ is a minimum
  dominating tree. Thus, by definition,
  $$\EE[\,\costh(T) \mid \calF\,] \leq \EE[\,\costh(T') \mid \calF\,] = O(1) \MST(X)$$
  as desired.
\end{proof}

\begin{lemma}\label{lemma:dominatinglog}
  With probability at least $1 - n^{-10}$ over the
  corruptions, for every dominating tree $T'$, the true cost of $P(T')$
  is at most $O(\log n)\,\costh(T') + O(\log n \cdot \MST(X))$.
\end{lemma}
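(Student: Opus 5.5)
The plan is to run the charging argument from the overview deterministically on \emph{every} Hamiltonian path, and to put all the randomness into a single event that depends only on the corrupted set $S$ (not on $\dhat$ or on $T'$). Let $\calE$ be the event that there is no ``bad configuration'' of size $t = C\log n$. A bad configuration is a sequence of $t$ pairwise vertex-disjoint pairs $e_1=\{u_1,v_1\},\dots,e_t=\{u_t,v_t\}$ with the following property: for every $i<j$ and every endpoint $z\in\{u_j,v_j\}$, at least one of $\{u_i,z\},\{v_i,z\}$ lies in $S$.

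First I bound the probability of a bad configuration. Fix such a sequence; there are at most $n^{2t}$ of them. For distinct triples $(i,j,z)$ the edge sets $\{\{u_i,z\},\{v_i,z\}\}$ are pairwise disjoint, so the corresponding $2\binom{t}{2}=t(t-1)$ events are independent. Each has probability $q := 1-(1-p)^2<1$. A union bound then gives $\Pr[\neg\calE]\le n^{2t}q^{t(t-1)}$, which is at most $n^{-10}$ once $C$ is a large enough constant depending on $p$. Since $\calE$ is defined purely in terms of $S$, it is the same event for all trees at once, and this is what makes the bound uniform over all dominating trees.

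Next, on $\calE$, I fix any dominating tree $T'$ and let $P=P(T')$, ordered left to right. For a path edge $f=\{a,b\}$ I set $w(f) := \dhat_{T'}(a,b)$, the observed tree distance, which equals the $\dhat$-weight of the Euler-tour segment between $a$ and $b$. Then $\sum_f w(f)\le 2\costh(T')$. For any $a,b$, the triangle inequality for sums along the tree gives that $\dhat_{T'}(a,b)$ is at most the total $w$-weight of the $P$-segment between them. Combining this with domination, $\dhat(a,b)\le \dhat_{T'}(a,b)\le w(P[a,b])$.

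Now I charge each edge $e=\{u,v\}$ of $P$, with $u$ left of $v$:
\begin{itemize}
\item If $e\notin S$, then $d(u,v)=\dhat(u,v)\le w(e)$, which gives a total of $2\costh(T')$.
\item If $e\in S$, let $x$ be the first vertex to the right of $v$ with $\{u,x\},\{v,x\}\notin S$. Then $d(u,v)\le \dhat(u,x)+\dhat(v,x)\le 2\,w(P[u,x])$, and I charge $2w(f)$ to every $f$ on $P[u,x]$.
\item If no such $x$ exists, I charge $\Delta\le\MST(X)$ to a virtual edge $f_\infty$ placed at the right end.
\end{itemize}
It remains to show that on $\calE$ every $f$, including $f_\infty$, is charged by fewer than $2t$ edges. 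Then the true cost is at most $2\costh(T') + 4t\cdot 2\costh(T') + 2t\,\MST(X) = O(\log n)\costh(T') + O(\log n)\MST(X)$, as claimed.

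For the counting step, consider the corrupted edges to the left of $f$ that charge $f$, with $f$ itself handled the same way. Along a path consecutive edges share a vertex, so taking every other one in left-to-right order yields at least half of them as pairwise vertex-disjoint edges $e_1,\dots,e_s$. For $i<j$, the meeting vertex of $e_i$ lies at or beyond the right end of $f$ (or does not exist), hence strictly beyond both endpoints of $e_j$. So neither endpoint $z$ of $e_j$ qualified as a meeting vertex for $e_i$, meaning one of $\{u_i,z\},\{v_i,z\}$ is corrupted. This is exactly a bad configuration of size $s$, so $s<t$. I expect the main obstacle to be making this step fully rigorous in two respects:
\begin{itemize}
\item the union bound must range over all orderings, not one fixed path, since $T'$ depends on $\dhat$;
\item the edges $f$ itself and edges adjacent to $f$ need care in the ``strictly beyond'' claim. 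If needed I would shrink the family by $O(1)$ edges to keep the claim clean.
\end{itemize}
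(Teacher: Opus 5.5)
Your proposal is correct and follows essentially the same route as the paper. Both use the same meet-vertex charging on the path-shortcut, extract $t$ vertex-disjoint charging edges, and take a union bound over all $n^{2t}$ ordered endpoint sequences of an event that depends only on the corrupted set. The only difference there is that the paper uses just the right endpoint $v_j$ of each later edge, giving exponent $\binom{t}{2}$ instead of your $t(t-1)$. Your point about discarding $f$ itself is a legitimate small fix: the right endpoint of $f$ can coincide with $\meet(e_i)$, a case the paper's proof passes over silently.
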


\begin{proof}
  For brevity, let $P' := P(T')$. For each edge
  $e = \{u,v\} \in E(P')$, let $w(e)$ be the total corrupted weight of
  the tree edges corresponding to the sub-path between $u$ and $v$ on
  the Euler tour used to construct $P'$. Notice that
  $w(P') = \sum_e w(e) \leq 2 \,\costh(T')$ by construction. Also
  observe that for any edge $e = (u,v)$, we have $w(e) \geq \dhat_{T'}(u,v)$, as the length along an Euler tour is at least the shortest path distance in the tree.

  Now we show that the true cost of $P'$ is at most $O(\log n) w(P') + O(\log n) \MST(X)$. To do this, we will upper bound the true distance of each edge, $d(u,v)$, $e = (u,v) \in E(P')$. 

  \textbf{Upper bounding true distance for each edge $e = (u,v)$ of the path $P'$.}
  If $e$ is uncorrupted, observe that $d(u,v) = \dhat(u,v)$. Next, $\dhat(u,v) \leq \dhat_{T'}(u,v) \leq w(e)$ since $T'$ is dominating. 
  Thus, we will use this trivial upper bound $w(e)$. 

  Henceforth, we assume $e$ is corrupted. Assume that $u$ appears before $v$ in the path. Let $\meet(e)$ be the first vertex $x$ to the right of $v$ such that both $u$ and $v$ have an uncorrupted edge incident to $x$. If no such vertex $x$ exists, let $\meet(e) = \perp$.

  Define the upper bound for $d(u,v)$ as $ub(u,v ) = d(u, \meet(e)) + d(v, \meet(e))$ if $\meet(e) \neq \perp$ , and $ub(u,v) = \Delta$ otherwise.\footnote{Recall that $\Delta$ is the maximum (true) distance between two vertices.} %

  First, we show that this is a valid upper bound. If $\meet(e) \neq \perp$, then the upper bound for $d(u,v)$ follows from two facts: first, $(u, \meet(e))$ and $(v, \meet(e))$ are uncorrupted edges; and second, the true metric $d$ satisfies the triangle inequality.

  Otherwise, if $\meet(e) = \perp$, $\Delta$ is simply a valid upper bound for the distance between any pair of vertices. 

  \textbf{Charging upper bounds.} 
  We will now charge the upper bounds from the previous paragraph for each edge of $P'$ to the weights $w(f)$ for some edges $f$ in $P'$. 

  Again, fix the edge $e = (u,v)$. Observe that if for some vertex $x$, the edge $(u,x)$ is an uncorrupted edge, then since $T'$ is dominating, $d(u,x) = \dhat(u,x) \leq \dhat_{T'}(u,x) \leq \sum_{\text{edges $f$ on the $(u,x)$ sub-path in $P'$}} w(f)$. 

  If $e = (u,v)$ is uncorrupted, we simply charge an amount $w(e)$ to the edge $e$. Otherwise, if  $\meet(e)$ exists, by definition, the edge $(u, \meet(e))$ is uncorrupted. We will then charge an amount $2w(f)$ to every edge $f$ on the path between $u$ and $\meet(e)$ in $P'$. Then the total charge will account for $ub(u,v) = d(u,\meet(e)) + d(v,\meet(e))$. Otherwise, simply charge $ub(u,v) = \Delta \leq \MST(X)$ to the $\MST(X)$ cost.

  \textbf{Each edge/MST is not charged too many times.} Fix an edge $f$ on the path, and let us count how many times this edge can be charged. We will do this by bounding the number of edges $e$ that charge $f$. Let us count the number of edges $e$ that lie on the left of $f$ that can charge $f$ (observe that only the edges on the left of $f$ can charge $f$).

  Suppose for some number $t$ that there are at least $2t$ edges of $P'$ to the left of $f$ charging $f$. Let $e_1, e_2 \ldots e_{2t}$ be the edges charging $f$. Of these, pick a set $e_1, e_2 \ldots e_t$ of vertex disjoint edges, and suppose that they appear in this order on $P'$. Note that this is always possible - in a path, among any set of $2t$ edges, there is a matching of size $t$.

  Pick any two of these disjoint edges $e_i = (u_i, v_i)$ and
  $e_j = (u_j, v_j)$ with $i < j$. Since $\meet(e_i)$ was not equal to
  one of the endpoints of $e_j$, in particular, both edges
  $(u_i, v_j)$ and $(v_i, v_j)$ cannot be uncorrupted.

  Consider the event $\calE_{\text{large}}$ which is a superset of this event: there is a set of $2t$ ordered points $y_1, y_2 \ldots y_{2t}$ such that for any $i < j \leq t$, at least one of the edges $(y_{2i - 1}, y_{2j})$ and $(y_{2i}, y_{2j})$ is corrupted. Once we fix the $2t$ points and their ordering, the probability of this is at most $(1 - (1-p)^2)^{t \choose 2}$. There are at most $ {n \choose 2t} (2t!) \leq n^{2t}$ choices for the set of endpoints of these $t$ edges and their ordering. Applying a union bound, the probability that $\calE_{\text{large}}$ holds is at most 

  $$(1 - (1-p)^2)^{t \choose 2} n^{2t} \leq e^{2t \log n - (1 - p)^2{t \choose 2}}.$$

  If $t \geq \frac{30\log n}{(1-p)^2}$, $(1 -p)^2 {t \choose 2} \geq 10t \log n$, and the right hand side is at most $e^{-10\log n} \leq n^{-10}$.

  It follows that with probability at least $1 - n^{-10}$, we must have $t \leq \frac{30 \log n}{(1-p)^2}$. There are at most $2t$ edges that charge $f$, and the charge assigned to $f$ is at most $2w(f)$. Thus the total charge to $f$ is at most $\frac{120\log n}{(1-p)^2}w(f)$.

  Finally, we show that in the case when $\meet(e)$ is not defined, MST is not charged too many times. The proof is exactly identical. Let $e_1, e_2 \ldots e_{2t}$ be the set of edges $e$ for which $\meet(e)$ is not defined. Again, the same argument from the previous paragraph applies for every pair of these edges, showing that $t \leq \frac{30\log n}{(1-p)^2}$ with probability $1 - n^{-10}$.

  Thus the overall cost is at most 
  $$\frac{120\log n}{(1-p)^2} (w(P') + \MST(X))$$
  with high probability, finishing the proof of~\Cref{lemma:dominatinglog}.
\end{proof}

Equipped with these two lemmas, we can now finish the proof of
\Cref{thm:decrease-only-theorem}, that the output tree has true cost
at most $O(\log n) \MST(X)$.

\begin{proof}[Proof of \Cref{thm:decrease-only-theorem}]
  Recall that our algorithm simply outputs a path shortcut of the
  minimum dominating tree if one exists, and otherwise, an arbitrary
  Hamiltonian path. First, we condition on the high probability events
  $\calF$ and $\calE_{\text{large}}$. If either $\calF$ or $\calE_{\text{large}}$
  does not happen, let us trivially bound the true cost of the tree we
  output by $(n - 1)\,\Delta \leq (n-1)\,\MST(X)$. Otherwise, we know
  that there is a minimum dominating tree $T'$ with
  $\EE[\,\costh(T') \mid \calF\,]$ at most $O(1) \MST(X)$.

  Now, let \(X=\costh(T')\). Since \(X\ge 0\),
  \[
    \EE[\,X\mid \calF\cap \calE_{\text{large}}\,] \le \frac{\EE[\,X\mid
      \calF\,]}{\Pr[\,\calE_{\text{large}}\mid \calF\,]}.
  \]
  Moreover, since both \(\calF\) and \(\calE_{\text{large}}\) hold with high
  probability, \(\Pr[\,\calE_{\text{large}}\mid \calF\,]=1-o(1)\). Hence
  \[
    \EE[\,X\mid \calF\cap \calE_{\text{large}}\,] \le O(1)\,\MST(X).
  \]
  Next, by \Cref{lemma:dominatinglog}, the true cost of the path
  shortcut minimum dominating tree $T'$ is at most
  $O(\log n)\, \costh(T') + O(\log n)\,\MST(X)$. Combining and noting
  that $\calE_{\text{large}}$ and $\calF$ together fail with probability at
  most $\nicefrac{1}{n^5}$, we obtain that our output tree (also a path)
  has expected true cost at most $O(\log n)\,\MST(X)$.
\end{proof}

\section{Finding Spanning Trees in the General Setting.}
\label{sec:general-case}

The algorithm for the general case is similar to the decrease-only
case. However, we need to replace the strong requirement of dominating
trees by a weaker requirement, which we call \emph{almost-domination}. The key idea is that while one cannot find a dominating tree where every pair of vertices $u,v$ satisfies the domination property like in the decrease-only case, we can ensure that the domination property holds for most pairs inside \emph{any subset of size at least $C \log n$} for a large enough constant $C$.

\subsection{Dominating Pairs and the Algorithm.}
\label{sec:general-algo}

\begin{definition}[Dominating pair]%
  Given a spanning tree $T$, a pair $(u,v)$ of distinct vertices is a
  dominating pair if $\dhat_T(u,v) \geq \dhat(u,v)$.
\end{definition}

An \emph{almost-dominating tree} is a tree where for any set of
$O(\log n)$ vertices, most pairs inside the set are dominating
pairs. We give the formal definition below.

\begin{definition}[Almost-dominating trees]%
  Suppose $0 \leq p < \frac{1}{4}$ and $\epsilon > 0$ is such that
  $2p + 2\epsilon < \frac{1}{2}$. A spanning tree $T$ is
  almost-dominating if there exists a constant $C_{p,\epsilon}$ so
  that for every set $S$ of vertices with size at least
  $C_{p,\epsilon} \log n$, at least a $1 - p - \epsilon$ fraction of
  the ${S \choose 2}$ pairs are dominating pairs.
\end{definition}

Our algorithm is similar to that for the decrease-only case.

\begin{quote}
  \emph{Compute a minimum almost-dominating tree and return its
    path-shortcut. If no such tree exists, output an arbitrary
    Hamiltonian path.}
\end{quote}

\subsection{The Analysis.}

\begin{theorem}[General Increase-Decrease Theorem]%
  \label{thm:gen-theorem}%
  Let $T$ be the minimum almost dominating tree (or an arbitrary
  Hamiltonian path, if no such tree exists). Then
  $\EE[\,\cost(P(T))\,] \leq O(\log n) \cdot \MST(X)$.
\end{theorem}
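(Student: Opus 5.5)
We mirror the proof of \Cref{thm:decrease-only-theorem}, proving two lemmas: one showing that a cheap almost-dominating tree exists, and one showing that the path-shortcut of \emph{every} almost-dominating tree has true cost $O(\log n)$ times its corrupted cost plus $O(\log n)\MST(X)$.

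\textbf{Existence.} Let $T'$ be the all-uncorrupted tree from \Cref{lemma:repair}, with $\EE[\costh(T')\mid\calF]\le O(1)\MST(X)$. For an uncorrupted pair $(u,v)$ we have $\dhat(u,v)=d(u,v)\le d_{T'}(u,v)=\dhat_{T'}(u,v)$, so every uncorrupted pair is dominating. It therefore suffices to show that, with high probability, every set $S$ with $|S|\ge C_{p,\eps}\log n$ has at most a $(p+\eps)$ fraction of corrupted pairs. For a fixed $S$ of size $s$, Chernoff gives failure probability $\exp(-\Omega(\eps^2 s^2))$. A union bound over the at most $n^s=e^{s\log n}$ sets of size $s$ succeeds once $s\ge C\log n$ with $C=C_{p,\eps}$ large, and we then sum over $s$. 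Call this event $\calE_{\text{dense}}$. On $\calF\cap\calE_{\text{dense}}$, the tree $T'$ is almost-dominating, so the minimum almost-dominating tree $T$ satisfies $\costh(T)\le\costh(T')$. This gives $\EE[\costh(T)\mid\calF\cap\calE_{\text{dense}}]=O(1)\MST(X)$, by the same conditioning argument as before.

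\textbf{Charging.} Fix any almost-dominating $T$, let $P'=P(T)$, and define $w(e)$ as in \Cref{lemma:dominatinglog}. We cannot simply reuse $\meet(e)$: an uncorrupted pair $(u,x)$ need not be dominating in $T$, so $d(u,x)=\dhat(u,x)$ is no longer bounded by $\dhat_T(u,x)$. We therefore redefine $\meet(e)$, for $e=(u,v)$ with $u$ before $v$, as the first $x$ to the right of $v$ such that $(u,x)$ and $(v,x)$ are \emph{both uncorrupted and dominating}. Whenever such an $x$ exists,
\[
d(u,v)\le d(u,x)+d(v,x)=\dhat(u,x)+\dhat(v,x)\le \dhat_T(u,x)+\dhat_T(v,x),
\]
and this is charged at most $2w(f)$ to each edge $f$ on the $(u,x)$ subpath. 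If $e$ itself is uncorrupted and dominating, we charge $w(e)$ to $e$. Every other edge, namely corrupted, or uncorrupted but non-dominating, must route through $\meet$; if $\meet(e)=\perp$ we charge $\Delta\le\MST(X)$ instead.

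\textbf{Main obstacle: bounding the number of charges.} Suppose $2t$ edges to the left of $f$ charge $f$ (or have $\meet=\perp$). As before, extract $t$ vertex-disjoint ones $e_1,\dots,e_t$ and let $S$ be their $2t$ endpoints. For each $i<j$, one of $(u_i,v_j)$, $(v_i,v_j)$ is corrupted or non-dominating, so at least $\binom{t}{2}$ of the $\binom{2t}{2}\approx 2t^2$ pairs in $S$ are bad, a fraction of about $1/4$. On $\calE_{\text{dense}}$, at most a $(p+\eps)$ fraction of the pairs in $S$ are corrupted. By almost-domination, at most a $(p+\eps)$ fraction are non-dominating. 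Hence the bad fraction is at most $2p+2\eps<1/2$.

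This is where I expect the difficulty: the crude count gives a bad fraction of $1/4$, which does not contradict $2p+2\eps<1/2$ once $p\ge 1/8$. So the counting must be sharpened to match the hypothesis $2p+2\eps<\frac12$. The first thing I would try is to count bad pairs only between the set $\{v_j\}$ and the set of all endpoints of earlier edges. For each $j$ and each $i<j$, at least one of the two pairs to $v_j$ is bad, so at least half of the $2(j-1)$ pairs from $v_j$ to earlier endpoints are bad. I would then apply the density and almost-domination conditions to a suitably chosen subset, for example the endpoint set $S$ restricted so that the pairs counted above make up a constant fraction of $\binom{S}{2}$, and choose that subset so the forced bad fraction exceeds $1/2$. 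If the fraction still falls short, I would fall back on a random-graph union bound over the pattern: once the non-dominating pairs are fixed, the event that, among the remaining pairs, enough are corrupted has probability $\exp(-\Omega(t^2))$ against $n^{2t}$ choices, which forces $t=O_p(\log n)$.

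Once $t=O(\log n)$ is established, the true cost of $P'$ is $O(\log n)\bigl(w(P')+\MST(X)\bigr)$ with high probability. Combining this with the existence bound, and applying the trivial bound $(n-1)\MST(X)$ on the failure events, finishes the proof exactly as in \Cref{thm:decrease-only-theorem}.
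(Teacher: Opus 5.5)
\textbf{There is a genuine gap: the counting step you flag as the main obstacle is left unresolved, and it is the crux of the proof.} Your existence step and your redefinition of $\meet(e)$ via pairs that are both uncorrupted and dominating match the paper.

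The missing observation concerns vertex-disjoint charging edges $e_i=(u_i,v_i)$ and $e_j=(u_j,v_j)$ with $i<j$. \emph{Both} endpoints of $e_j$ lie strictly to the right of $v_i$ and strictly to the left of $\meet(e_i)$, which sits at or beyond the right end of $f$ (or is $\perp$). So $\meet(e_i)$ skipped both of them: one of $(u_i,u_j),(v_i,u_j)$ is bad \emph{and} one of $(u_i,v_j),(v_i,v_j)$ is bad. These are distinct pairs, so the $2t$ endpoints span at least $2\binom{t}{2}$ bad pairs. That is a density of $2\binom{t}{2}/\binom{2t}{2}=\frac{t-1}{2t-1}\to\frac12$. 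It exceeds $2p+2\epsilon$ once $t$ is larger than a constant depending on $p,\epsilon$, provided $2t\ge C_{p,\epsilon}\log n$ so that the density event and almost-domination both apply to $S$. This is exactly how the paper uses $2p+2\epsilon<\frac12$. Note that you only need to beat $2p+2\epsilon$, not $\frac12$ as your plan aims for.

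Neither of your proposed repairs substitutes for this.
\begin{itemize}
\item \emph{Choosing a subset.} From the one-sided fact about $v_j$ alone, no subset works for all $p<\frac14$. Choose, independently for each $i<j$, the bad pair to be $(v_i,v_j)$ with probability $\frac13$ and $(u_i,v_j)$ otherwise. This gives configurations satisfying your constraint in which every subset of $S$ of size $\Omega(\log n)$ has bad density at most about $\frac49$. (The best subset is one third $u$'s, all preceding its $v$'s.) So there is no contradiction when $2p+2\epsilon>\frac49$.
\item \emph{The union-bound fallback.} It treats the non-dominating pairs as fixed, but they are determined by $T$, which depends on $\dhat$ and hence on the corruptions. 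Union bounding over their possible configurations inside $S$ costs $\exp(\Theta(t^2))$, the same order as the $\exp(-\Omega(t^2))$ you gain. Moreover, non-dominating pairs may cover all but about $(\frac12-2p-2\epsilon)t^2$ of your $\binom{t}{2}$ requirements, so the constant in your gain vanishes as $p\to\frac14$.
\end{itemize}
So, as written, the bound $t=O(\log n)$ is not established for the full range $p<\frac14$.
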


The proof first shows that conditional on a high-probability event, there is a tree using only uncorrupted edges that not only has small expected cost as in \Cref{lemma:repair}, but is also almost-dominating. This would then show that the minimum almost-dominating has a small expected corrupted cost. Then, as in the decrease-only case, we prove that this in turn means that the true cost of the path-shortcut of the minimum almost-dominating tree is at most $O(\log n)\MST(X)$ in expectation. 

The following lemma captures a key high-probability event.
\begin{lemma}\label{lemma:density}
  Given $0 \leq p < \frac{1}{4}$ and $\epsilon > 0$, let $\calE$ be the
  event that for every subset $S \subseteq X$ of points of size
  $s \geq C_{p,\epsilon} \log n$, at most a $p + \epsilon$ fraction of
  the ${s \choose 2}$ edges inside $S$ are corrupted. Then there is a
  (large enough) constant $C_{p,\epsilon}$ such that $\calE$ holds
  with probability at least $1 - \frac{1}{n^{10}}$.
\end{lemma}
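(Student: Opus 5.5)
The plan is a Chernoff bound for each fixed subset, followed by a union bound over all subsets of each size, organized by subset size $s$. Fix $s \geq C_{p,\epsilon}\log n$ and a set $S$ with $|S| = s$. Let $m = \binom{s}{2}$ and let $Z_S$ be the number of corrupted pairs inside $S$. By \Cref{def:corrupt}, the corruption indicators of distinct pairs are independent Bernoulli$(p)$ variables, so $Z_S$ is a sum of $m$ independent Bernoulli$(p)$ variables with mean $pm$. The additive Chernoff--Hoeffding bound then gives
\[
  \Pr\bigl[Z_S > (p+\epsilon)m\bigr] \leq \exp\bigl(-2\epsilon^2 m\bigr).
\]
I use the additive Hoeffding form rather than a multiplicative Chernoff bound because it also covers $p = 0$ without any case split; for $p=0$ the event is in fact trivial.

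Next, take a union bound over all $\binom{n}{s} \leq n^{s} = e^{s\ln n}$ sets of size $s$. The failure probability at size $s$ is then at most
\[
  \exp\bigl(s\ln n - \epsilon^2 s(s-1)\bigr) = \exp\bigl(-s\,(\epsilon^2(s-1) - \ln n)\bigr).
\]
Choose $C_{p,\epsilon}$ large enough, for instance $C_{p,\epsilon} \geq 20/\epsilon^2$ and also large enough to absorb the $-1$ in $s-1$ for $n \geq 2$. Then for $s \geq C_{p,\epsilon}\log n$ we get $\epsilon^2(s-1) - \ln n \geq 12\ln n$, and the bound at size $s$ becomes at most $e^{-12 s \ln n} \leq n^{-12}$.

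Finally, sum over the at most $n$ admissible values of $s$. This gives a total failure probability of at most $n^{-11} \leq n^{-10}$, which proves the lemma. The constant depends only on $\epsilon$; the hypothesis $p<\tfrac14$ is not needed here and is used only later, where the lemma is applied.

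The main obstacle is the competition between the union bound and the concentration. The number of sets is $e^{\Theta(s\log n)}$, while the concentration exponent is only $\Theta(\epsilon^2 s^2)$. The argument works because the number of independent pairs inside $S$ grows quadratically in $s$, and the threshold $s \geq C\log n$ makes $\epsilon^2 s^2$ dominate $s\log n$. This is also why the lemma cannot hold for sets of size $o(\log n)$: some sets of that size are entirely corrupted with high probability.
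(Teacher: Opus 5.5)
Your proof is correct and follows the same route as the paper: a per-set concentration bound for the $\binom{s}{2}$ independent corruption indicators, followed by a union bound over the at most $n^s$ sets of each size and over the at most $n$ possible sizes. The only difference is that you use the additive Hoeffding bound $\exp(-2\epsilon^2 m)$, whereas the paper uses a multiplicative Chernoff bound with deviation $\epsilon p\binom{s}{2}$ and treats $p=0$ separately; as a result, your constant $C_{p,\epsilon}$ depends only on $\epsilon$.
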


\begin{proof}
  The case $p=0$ is immediate; assume $p>0$.
  For any set $S$, the expected number of corrupted edges inside $S$
  is at most a $p$ fraction. By a Chernoff bound, the number is within
  $\epsilon p {|S| \choose 2}$ of this value with probability
  $1 - e^{-\epsilon^2 p{|S| \choose 2}/3}$.

  Thus for a sufficiently large value of $C_{p,\epsilon}$, if
  $|S| \geq C_{p,\epsilon} \log n$, the above probability is at least
  $1 - \frac{1}{n^{100|S|}}$. Applying a union bound over all sets
  $S$, we note that there are at most $n^{s}$ subsets of size $s$, and
  there are at most $n$ possible values of $s$. Thus, the event
  $\calE$ occurs with probability at least $1 - \frac{1}{n^{10}}$.
\end{proof}

\begin{lemma}
  \label{lem:cost-min-almost}
  Let $T$ be a minimum almost-dominating tree. Then conditional on a
  high probability event $\calF$,
  $\EE[\,\costh(T) \mid\calF\,] \leq O(1) \MST(X)$.
\end{lemma}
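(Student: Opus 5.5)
The plan mirrors the proof of \Cref{lemma:mindominating}. The idea is to exhibit a comparison tree that is (i) almost-dominating and (ii) of small expected corrupted cost, and then invoke the minimality of $T$.

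\textbf{Choice of event.} Take $\calF$ to be the intersection of the event from \Cref{lemma:repair} (call it $\calF_1$) with the event $\calE$ of \Cref{lemma:density}. The constant $C_{p,\epsilon}$ in the latter is taken to be the constant in the definition of almost-dominating trees. Each event fails with probability at most $n^{-10}$, so $\Pr[\calF]\ge 1-2n^{-10}$.

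\textbf{The comparison tree.} On $\calF_1$, \Cref{lemma:repair} gives a tree $T'$ all of whose edges are uncorrupted, with $\EE[\costh(T')\mid\calF_1]\le O(1)\MST(X)$.

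\textbf{$T'$ is almost-dominating on $\calF$.} Fix any pair $(u,v)$ that is uncorrupted. Then
\[
\dhat(u,v)=d(u,v)\le d_{T'}(u,v)=\dhat_{T'}(u,v).
\]
The middle inequality is the triangle inequality in the true metric, and the final equality holds because every edge of $T'$ is uncorrupted. So every uncorrupted pair is a dominating pair for $T'$, which is the observation from the overview. Now take any $S$ with $|S|\ge C_{p,\epsilon}\log n$. On $\calE$, at most a $p+\epsilon$ fraction of the pairs in $\binom{S}{2}$ are corrupted, so at least a $1-p-\epsilon$ fraction are dominating. Hence $T'$ is almost-dominating, and in particular a minimum almost-dominating tree exists on $\calF$. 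By minimality, $\costh(T)\le\costh(T')$ pointwise on $\calF$.

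\textbf{Transferring the conditional expectation (the main, mild, obstacle).} We only know the bound on $\EE[\costh(T')\mid\calF_1]$, not conditioned on $\calF$. Since $\costh(T')\ge 0$,
\[
\EE[\costh(T')\mid\calF]\le \frac{\EE[\costh(T')\mid\calF_1]}{\Pr[\calE\mid\calF_1]}.
\]
Also $\Pr[\calE\mid\calF_1]\ge 1-\Pr[\neg\calE]/\Pr[\calF_1]=1-o(1)$. This is the same manipulation used in the proof of \Cref{thm:decrease-only-theorem}. Combining these gives
\[
\EE[\costh(T)\mid\calF]\le\EE[\costh(T')\mid\calF]\le O(1)\MST(X).
\]

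One point needs care: the constant $C_{p,\epsilon}$ must be fixed consistently, namely large enough for \Cref{lemma:density}, and the almost-dominating definition is then read with that same constant. The algorithm's minimization is over that fixed class of trees.
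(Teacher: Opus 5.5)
Your proposal is correct and follows the paper's proof essentially step for step. You intersect the tree-repair event with the density event of \Cref{lemma:density}, observe that the all-uncorrupted repair tree is almost-dominating because every uncorrupted pair is dominating, move the conditional expectation from $\calF_1$ to $\calF_1\cap\calE$ at a $1+o(1)$ cost, and conclude by minimality; your explicit treatment of the conditioning step and of fixing $C_{p,\epsilon}$ consistently just spells out what the paper states briefly.
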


\begin{proof}
  First, we apply~\Cref{lemma:repair} to show that there exists a high
  probability event $\calF_1$, so that if $\calF_1$ occurs, there
  exists a tree $T_0$ with only uncorrupted edges that satisfies
  $\EE[\,\costh(T_0) \mid \calF_1\,] \leq O(1) \MST(X)$.

  Now, let $\calF_2$ be the event that for any set $S$ of size at
  least $C_{p,\epsilon} \log n$ at most a $p + \epsilon$ fraction of
  the internal ${S \choose 2}$ edges are
  corrupted. By~\Cref{lemma:density}, $\calF_2$ holds with probability
  at least $1 - \frac{1}{n^{10}}$.

  This in turn means that
  $\EE[\,\costh(T_0) \mid\calF_1 \cap \calF_2\,] \leq (1 + o(1)) \EE[\,\costh(T_0)
  \mid\calF_1\,] \leq O(1) \MST(X)$. Define $\calF$ to be the
  intersection of events $\calF_1$ and $\calF_2$.
  Since every edge of $T_0$ is uncorrupted, for every uncorrupted pair
  $(u,v)$, we have
  \[
    \dhat_{T_0}(u,v) = d_{T_0}(u,v) \geq d(u,v) = \dhat(u,v).
  \]
  Thus every uncorrupted pair $(u,v)$ is a dominating pair, and since
  in every set of size at least $C_{p,\epsilon} \log n$, at most
  $p + \epsilon$ fraction of the pairs are corrupted, it follows that
  $T_0$ is almost-dominating.

  Finally, since the minimum almost-dominating tree $T$ has
  $\costh(T) \leq \costh(T_0)$ by definition, we have
  \[
    \EE[\,\costh(T) \mid \calF\,] \leq \EE[\,\costh(T_0) \mid \calF\,] \leq O(1) \MST(X).
  \]
\end{proof}

\begin{lemma}
  \label{lem:any-almost}
  There is a high probability event $\calE$ such that, if $\calE$
  happens, then for every almost-dominating tree $T'$, the true cost of
  $P(T')$ is at most $O(\log n)\,\costh(T') + O(\log n \cdot \MST(X))$.
\end{lemma}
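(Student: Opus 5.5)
We follow the proof of \Cref{lemma:dominatinglog} almost verbatim; the only change is the definition of the ``good'' witness vertex. Fix an almost-dominating tree $T'$, let $P' = P(T')$, and for each edge $e=(u,v)$ of $P'$ let $w(e)$ be the corrupted Euler-tour weight, so that $w(P') \le 2\costh(T')$ and $w(e) \ge \dhat_{T'}(u,v)$.

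Call a pair $(a,b)$ \emph{good} if it is both uncorrupted and dominating. For a good pair whose endpoints lie on $P'$ we have
\[
d(a,b) = \dhat(a,b) \le \dhat_{T'}(a,b) \le \sum_{f \text{ between } a,b \text{ on } P'} w(f).
\]
Consider an edge $e=(u,v)$ of $P'$ with $u$ to the left of $v$.
\begin{itemize}
\item If $e$ itself is good, we charge $w(e)$ to $e$.
\item Otherwise, let $\meet(e)$ be the first vertex $x$ to the right of $v$ such that both $(u,x)$ and $(v,x)$ are good. We bound $d(u,v) \le d(u,x)+d(v,x)$ and charge $2w(f)$ to every edge $f$ on the $u$--$x$ subpath.
\item If no such $x$ exists, we charge $\Delta \le \MST(X)$ to the $\MST$ budget.
\end{itemize}
All of these are valid upper bounds by the displayed inequality and the triangle inequality for $d$.

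It remains to show that every $f$, and the $\MST$ budget, is charged only $O(\log n)$ times. Suppose $2t$ edges to the left of $f$ charge $f$ (or charge $\MST$). As before, extract vertex-disjoint edges $e_1,\dots,e_t$ with $e_i=(u_i,v_i)$, in path order. For $i<j$, $\meet(e_i)\ne v_j$, so one of $(u_i,v_j)$ or $(v_i,v_j)$ is \emph{bad}, meaning corrupted or non-dominating.

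Let $S=\{u_1,v_1,\dots,u_t,v_t\}$, so $|S|=2t$. The pairs $\{(u_i,v_j),(v_i,v_j)\}$ over $i<j$ form $\binom{t}{2}$ disjoint pair-groups, each containing a bad pair. Hence $S$ contains at least $\binom{t}{2}$ bad pairs, which is roughly a $\tfrac14$ fraction of $\binom{2t}{2}$.

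We now use the assumption $p<\tfrac14$ together with the condition $2p+2\epsilon<\tfrac12$. Condition on the event of \Cref{lemma:density}: if $2t \ge C_{p,\epsilon}\log n$, at most $(p+\epsilon)\binom{2t}{2}$ pairs in $S$ are corrupted. Since $T'$ is almost-dominating, at most $(p+\epsilon)\binom{2t}{2}$ pairs in $S$ are non-dominating. So there are at most $(2p+2\epsilon)\binom{2t}{2}$ bad pairs. For large $t$ this is less than $\binom{t}{2} \approx \tfrac14\binom{2t}{2}$, because
\[
(2p+2\epsilon)\binom{2t}{2} \approx (2p+2\epsilon)\cdot 2t^2 < t^2/2
\]
exactly when $2p+2\epsilon<\tfrac14$.

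This comparison is the main obstacle: the constants must be checked against the definition's $2p+2\epsilon<\tfrac12$. The slack may come from a sharper count, for instance by also using the pairs $(u_i,u_j)$, or by excluding the endpoint pairs within each edge, to obtain a bad fraction close to $\tfrac12$ rather than $\tfrac14$. The first thing to try is counting, for each $j$, the bad pairs from $v_j$ to all $u_i,v_i$ with $i<j$, and symmetrically using the left-meet analogue. If this fails, we need to strengthen $\epsilon$ and $C_{p,\epsilon}$ appropriately.

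Once the density comparison is settled, it contradicts the assumption $2t \ge C_{p,\epsilon}\log n$. Hence $t=O(\log n)$ simultaneously for all almost-dominating trees and all $f$, on the event $\calE$ of \Cref{lemma:density}, which holds with probability $1-n^{-10}$. No union bound over trees is needed, because the argument only uses the deterministic density property of the corruption pattern together with the almost-domination property of $T'$. Summing the charges gives
\[
\cost(P') \le O(\log n)\bigl(w(P')+\MST(X)\bigr),
\]
which proves the lemma.
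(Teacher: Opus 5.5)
\textbf{Gap in the counting step.} Your setup matches the paper's: good pairs, $\meet(e)$ defined through good pairs, the three-way charging, and the final density contradiction against \Cref{lemma:density} plus almost-domination. But the counting step, as you leave it, does not prove the lemma for the stated range $p<\tfrac14$. With only one bad pair per $i<j$ (coming from $v_j$), the bad density in $S$ is $\binom{t}{2}/\binom{2t}{2}=\frac{t-1}{2(2t-1)}\to\frac14$. So the contradiction needs $2p+2\epsilon<\frac14$, i.e.\ $p<\frac18$.

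Your fallback of strengthening $\epsilon$ and $C_{p,\epsilon}$ cannot help. For $p\in[\frac18,\frac14)$ we have $2p+2\epsilon\ge 2p\ge\frac14$ no matter how small $\epsilon$ is. A left-meet analogue is not part of the charging scheme either, so it yields no bad pairs here.

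\textbf{The fix.} Your first suggestion, also using the pairs $(u_i,u_j)$, is the right one; the missing observation is that $u_j$ is also a rejected candidate for $\meet(e_i)$.
\begin{itemize}
\item Since $e_i$ and $e_j$ are disjoint with $i<j$, $u_j$ lies strictly to the right of $v_i$.
\item Since $e_j$ is not to the right of $f$, while $f$ lies on the $u_i$--$\meet(e_i)$ subpath (or $\meet(e_i)=\perp$), $u_j$ lies strictly before $\meet(e_i)$.
\end{itemize}
Hence at least one of $(u_i,u_j),(v_i,u_j)$ is bad, in addition to one of $(u_i,v_j),(v_i,v_j)$. These two groups of pairs are disjoint.

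This gives $2\binom{t}{2}$ bad pairs in $S$, a density of $\frac{t-1}{2t-1}\to\frac12$. That exceeds $2p+2\epsilon$ once $t$ is larger than a constant $D_{p,\epsilon}$, which is exactly why the definition of almost-dominating trees imposes $2p+2\epsilon<\frac12$. With this count (and $2t\ge C_{p,\epsilon}\log n$), the rest of your argument goes through as in the paper.
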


\begin{proof}
  Let us write $P'$ as a shorthand for $P(T')$. For each edge $e = \{u,v\} \in E(P')$, let $w(e)$ be the sum of
  the corrupted distances of the tree edges corresponding to the
  sub-path between $u$ and $v$ on the Euler tour used to construct
  $P'$. Notice that $w(P') = \sum_e w(e) \leq 2 \costh(T')$ by
  construction.

  It now suffices to show that the true cost of $P'$ is at most
  $O(\log n) w(P') + O(\log n) \MST(X)$. To do this, we will upper
  bound the true distance of each edge, $d(u,v)$,
  $e = (u,v) \in E(P')$. Call a pair $(x, y)$ good if it is both
  uncorrupted and a dominating pair. Equivalently, a good pair is a
  pair for which
  \[
    d(x,y) = \dhat(x,y) \leq \dhat_{T'}(x,y).
  \]
  Moreover, an edge $e \in P'$ is good if the pair of endpoints is a
  good pair.

  Fix a set $S$ of size $s \geq C_{p,\epsilon} \log n$. Let $\calE$ be
  the event that for every set $S$ of size $s$, at most $p + \epsilon$
  fraction of the ${s \choose 2}$ pairs inside $S$ are
  corrupted. By~\Cref{lemma:density}, $\calE$ holds with probability
  at least $1 - \frac{1}{n^{10}}$. Henceforth assume that $\calE$
  happens. Next, by the definition of domination, at most a
  $p + \epsilon$ fraction of the pairs are not dominating. Thus given
  that $\calE$ happens, at most a $2p + 2\epsilon$ fraction of the
  pairs are bad inside $S$.

  If $e$ is good, a trivial upper bound is $d(u,v) \leq
  w(e)$. Henceforth, we assume $e$ is bad.

  Assume that $u$ appears before $v$ in the path. Let $\meet(e)$ be
  the first vertex $x$ to the right of $v$ such that both pairs
  $(u,x)$ and $(v,x)$ are good. If no such vertex $x$ exists, let
  $\meet(e) = \perp$.

  \textbf{Upper bounding true distance for each edge of the path.}
  Define the upper bound for $d(u,v)$ as
  $ub(u,v ) = d(u, \meet(e)) + d(v, \meet(e))$ if $\meet(e)$ exists,
  and $ub(u,v) = \Delta$, where $\Delta$ is the diameter (maximum true
  distance between two vertices), otherwise.

  First, we show that the inequality $d(u,v) \leq ub(u,v)$ always
  holds. Clearly, if $\meet(e)$ exists, then the upper bound for
  $d(u,v)$ follows from the fact that $(u, \meet(e))$ and
  $(v, \meet(e))$ are good pairs, and the triangle inequality of the
  true metric $d$.

  Second, by definition, $\Delta$ is an upper bound for any
  distance. Hence $d(u,v) \leq \Delta$ holds for any $u,v \in X$. 

  \textbf{Charging upper bounds.} We will now charge the upper bounds
  from the previous paragraph for each edge of $P'$ to the weights
  $w(f)$ for some edges $f$ in $P'$.

  Again, fix the edge $e = (u,v)$. Observe that if for some vertex
  $x$, the edge $(u,x)$ is a good pair,
  $$d(u,x) = \dhat(u,x) \leq \dhat_{T'}(u,x) \leq \sum_{\text{edges $f$
      on the $(u,x)$ sub-path in $P'$}} w(f).$$

  If $e = (u,v)$ is a good pair, we simply charge an amount $w(e)$ to
  the edge $e$. Otherwise, if $\meet(e)$ exists, by definition, the
  pairs $(u, \meet(e))$ and $(v, \meet(e))$ are good. We will then
  charge an amount $2w(f)$ to every edge $f$ on the path between $u$
  and $\meet(e)$ in $P'$. Then the total charge will account for
  $ub(u,v) = d(u,\meet(e)) + d(v,\meet(e))$. Otherwise, simply charge
  $ub(u,v) = \Delta \leq \MST(X)$ to the $\MST(X)$ cost.

  \textbf{Each edge/MST is not charged too many times.} Fix an edge
  $f$, and let us count how many times this edge can be charged. Note
  that $f$ is only charged by edges $e \in E(P')$ to the left of
  $f$. Suppose for some number $t \geq D_{p,\epsilon} \log n$, where
  $D_{p,\epsilon} \geq C_{p,\epsilon}$ is some constant that we will
  set via the analysis, that there are at least $2t$ edges charging
  $f$ viewed in order of the path. Let $e_1, e_2 \ldots e_{2t}$ be the
  edges charging $f$. Of these, suppose without loss of generality by
  reordering that $e_1, e_2 \ldots e_{t}$ is a set of disjoint edges.

  Pick any two of these disjoint edges
  $e_i = (u_i, v_i), e_j = (u_j, v_j), i < j$. Since $\meet(e_i)$ was
  not equal to one of the endpoints of $e_j$, in particular, both
  pairs $(u_i, u_j)$ and $(v_i, u_j)$ cannot be good. Similarly, both
  pairs $(u_i, v_j)$ and $(v_i, v_j)$ cannot be good. Thus for every
  $i < j$ we obtain two bad pairs between the endpoints of the two
  edges $e_i, e_j$. Let $S$ denote the set of the $2t$ endpoints of
  these $t$ edges $e_1, e_2 \ldots e_t$.

  This means that the number of bad pairs inside $S$ is at least
  $2{t \choose 2}$. Normalizing by dividing by ${|S| \choose 2}$, the
  bad pair density is at least
  $$ \frac{2{t \choose 2}}{{2t \choose 2}} = \frac{t-1}{2t-1}.$$

  Recall that $2p + 2\epsilon < \frac{1}{2}$, and observe that the
  above quantity goes to $\frac{1}{2}$ in the limit when $t$ goes to
  infinity. Thus the above quantity is at least $2p + 2\epsilon$ for
  $t \geq D_{p,\epsilon}$, for some constant $D_{p,\epsilon}$
  depending only on $p$ and $\epsilon$. But we already established
  that the high probability event $\calE$, which guarantees that for
  \emph{every} such subset $S$, at most a $2p + 2\epsilon$ fraction of
  the $|S| \choose 2$ pairs inside $S$ are bad. Thus if
  $t \geq C_{p,\epsilon} \log n$, $\calE$ does not happen. Thus,
  $t \leq D_{p,\epsilon} \log n$ with high probability, and given
  this, edge $f$ is charged by at most $O(\log n)$ edges.

  Finally, we show that in the case when $\meet(e)$ is not defined,
  MST is not charged too many times. The proof is exactly
  identical. Suppose there are at least $2t$ such edges. Let
  $e_1, e_2 \ldots e_{2t}$ be the set of edges $e$ for which
  $\meet(e)$ is not defined. Again, extract $t$ disjoint edges in the
  path order. Then the identical argument from the previous paragraph
  shows the same density lower bound of $\frac{t-1}{2t-1}$ for bad
  pairs, which forces $t \leq D_{p,\epsilon} \log n$ with high
  probability. Combining, the overall cost is at most
  $$O(\log n)(w(P') + \MST(X))$$
  finishing the proof of \Cref{lem:any-almost}.
\end{proof}

\begin{proof}[Proof of \Cref{thm:gen-theorem}]
  Let $T$ be the minimum almost-dominating tree computed by the
  algorithm, if one exists. We condition on the high probability event
  $\calF$ from the first lemma and on the high probability event
  $\calE$ from~\Cref{lemma:density}. If either event fails, we use the
  trivial bound $(n-1)\,\Delta\le (n-1)\,\MST(X)$.

  On the good event, the first lemma gives
  \[
    \EE[\,\costh(T)\mid \calF\,]\le O(1)\,\MST(X).
  \]
  Conditioning further on $\calE$ changes this by only a $1+o(1)$
  factor, since $\calE$ holds with high probability. Therefore
  \[
    \EE[\,\costh(T)\mid \calF \cap \calE\,] \le O(1)\,\MST(X).
  \]
  By the second lemma, given $\calF$ and $\calE$ both happen,
  \[
    \cost(P(T))\le O(\log n)\,\costh(T)+O(\log n)\,\MST(X).
  \]
  Taking conditional expectations, we obtain
  \begin{align*}
    \EE[\,\cost(P(T))\mid\calF\cap\calE\,]
      &\le O(\log n)\,\EE[\,\costh(T)\mid\calF\cap\calE\,]
          +O(\log n)\,\MST(X) \\
      &\le O(\log n)\,\MST(X).
  \end{align*}
  Combining these inequalities and using the trivial bound of
  $(n-1)\,\Delta \leq (n-1)\, \MST(X)$ when the good events do not
  occur (which happens with probability at most $\frac{2}{n^{10}}$),
  we obtain
  \[
    \EE[\,\cost(P(T))\,]\le O(\log n)\,\MST(X),
  \]
  as claimed.
\end{proof}

\section{Efficiently Finding Trees in the Decrease-Only  Model.}
\label{sec:efficient-dec}

 In this section, we give an algorithm running in time $n^{O(\log n)}$ that finds a tree $T$ whose expected cost is at most $O(\log^4 n)\MST(X)$ in the decrease-only adversarial model for $p \leq 0.1$. We first assume that the input metric satisfies the property that the distance between any two points is at least $1$ and at most $\Delta$. We also assume that we already know some number $\Delta \leq \Delta' \leq 2\Delta$. In this setting, we give an $O(\log^2 n \log^2 \Delta')$ approximation. Later, at the end of this section we show how to reduce to the case where $\Delta = n^2$, and also remove the two assumptions - the minimum distance being at least $1$, and the algorithm knowing $\Delta'$.

 \begin{figure}[!htbp]
\centering
\begin{tikzpicture}[scale=0.9, every node/.style={transform shape}]

\tikzset{
  c3/.style={draw, rounded corners=8pt, fill=violet!20, minimum height=1cm},
  c2/.style={draw, rounded corners=8pt, fill=blue!15, minimum height=1cm},
  c1/.style={draw, rounded corners=8pt, fill=green!20, minimum height=1cm},
  leaf/.style={draw, rounded corners=6pt, minimum width=.7cm, minimum height=.7cm},
  bdot/.style={circle, fill=black, inner sep=2.5pt},
  rdot/.style={circle, fill=black, inner sep=2.5pt}
}

\path[use as bounding box] (-4.7,-1.8) rectangle (10.5,6.25);

\node at (-4,5.15) {$\mathcal{C}_3$};
\node at (-4,3.10) {$\mathcal{C}_2$};
\node at (-4,1.00) {$\mathcal{C}_1$};
\node at (-4,-1.15) {$\mathcal{C}_0$};

\draw[c3] (0,4.7) rectangle (7,5.7);
\draw (2.8,4.7) -- (2.8,5.7);
\node at (4.0,6.0) {$\mathcal{C}_3^{(1)}$};

\node[bdot] at (0.6,5.2) {};
\node[bdot] at (1.5,5.2) {};
\node[bdot] at (2.3,5.2) {};

\coordinate (rtop) at (3.4,5.2);
\node[rdot] at (rtop) {};
\node[bdot] at (4.2,5.2) {};
\node[bdot] at (5.0,5.2) {};
\node[bdot] at (5.8,5.2) {};
\node[bdot] at (6.6,5.2) {};

\draw[c2] (-2.7,2.7) rectangle (1.0,3.7);
\draw (-1.0,2.7) -- (-1.0,3.7);
\node at (-0.7,4.0) {$\mathcal{C}_2^{(1)}$};
\node[bdot] at (-2.0,3.2) {};
\coordinate (r21) at (-0.45,3.2);
\node[rdot] at (r21) {};
\node[bdot] at (0.35,3.2) {};

\draw[c2] (4.2,2.7) rectangle (8.6,3.7);
\draw (6.15,2.7) -- (6.15,3.7);
\node at (6.95,4.0) {$\mathcal{C}_2^{(2)}$};
\node[bdot] at (4.8,3.2) {};
\node[bdot] at (5.6,3.2) {};
\coordinate (r22) at (6.65,3.2);
\node[rdot] at (r22) {};
\node[bdot] at (7.45,3.2) {};
\node[bdot] at (8.1,3.2) {};

\draw[c1] (-3.35,0.55) rectangle (-2.05,1.55);
\node at (-2.70,1.9) {$\mathcal{C}_1^{(1)}$};
\coordinate (r11) at (-2.70,1.05);
\node[rdot] at (r11) {};

\draw[c1] (-0.15,0.55) rectangle (2.0,1.55);
\draw (0.9,0.55) -- (0.9,1.55);
\node at (1.1,1.9) {$\mathcal{C}_1^{(2)}$};
\node[bdot] at (0.35,1.05) {};
\coordinate (r12) at (1.55,1.05);
\node[rdot] at (r12) {};

\draw[c1] (4.05,0.55) rectangle (5.95,1.55);
\draw (5.0,0.55) -- (5.0,1.55);
\node at (4.95,1.9) {$\mathcal{C}_1^{(3)}$};
\coordinate (r13) at (4.55,1.05);
\node[rdot] at (r13) {};
\node[bdot] at (5.5,1.05) {};

\draw[c1] (6.8,0.55) rectangle (10.1,1.55);
\draw (7.9,0.55) -- (7.9,1.55);
\draw (9.0,0.55) -- (9.0,1.55);
\node at (8.45,1.9) {$\mathcal{C}_1^{(4)}$};
\node[bdot] at (7.35,1.05) {};
\coordinate (r14) at (8.45,1.05);
\node[rdot] at (r14) {};
\node[bdot] at (9.55,1.05) {};

\foreach \x/\lab/\name in {
  -2.70/1/b1,
   0.00/2/b2,
   1.75/3/b3,
   4.30/4/b4,
   5.50/5/b5,
   7.20/6/b6,
   8.45/7/b7,
   9.70/8/b8}
{
  \draw[leaf] (\x-0.33,-1.05) rectangle (\x+0.33,-0.35);
  \coordinate (\name) at (\x,-0.70);
  \node[rdot] at (\name) {};
  \node at (\x,-1.42) {$x_{\lab}$};
}

\draw (2.40,4.70) -- (-0.60,3.70);
\draw (3.90,4.70) -- (6.20,3.70);

\draw (-1.55,2.70) -- (-2.70,1.55);
\draw (-0.10,2.70) -- (0.90,1.55);

\draw (6.00,2.70) -- (5.00,1.55);
\draw (7.25,2.70) -- (8.45,1.55);

\draw (-2.70,0.55) -- (-2.70,-0.35);

\draw (0.75,0.55) -- (0.00,-0.35);
\draw (1.35,0.55) -- (1.75,-0.35);

\draw (4.75,0.55) -- (4.30,-0.35);
\draw (5.25,0.55) -- (5.50,-0.35);

\draw (7.90,0.55) -- (7.20,-0.35);
\draw (8.45,0.55) -- (8.45,-0.35);
\draw (9.00,0.55) -- (9.70,-0.35);

\end{tikzpicture}
\caption{The structure of a good hierarchical partition on 8 points. }
\label{fig:cluster-tree}
\end{figure}

 \paragraph{Roadmap.} This section is divided into three subsections, defined below. The main object of this section is a \emph{good hierarchical partition}. In the first subsection, we show that once we compute a good hierarchical partition, we can obtain our efficient approximation algorithm. In the second subsection, we then show how to compute a good hierarchical partition efficiently. In the final subsection, we remove the assumptions stated above. 

\begin{definition}[Good Hierarchical Partition]%
A sequence of partitions $\mathcal{H} = (\mathcal{C}_0, \mathcal{C}_1, \mathcal{C}_2 \ldots \mathcal{C}_{\ell})$, $\ell := \lceil \log_2 \Delta' \rceil$ is a good hierarchical partition if it satisfies the following properties. 

\begin{itemize}

\item Each $\mathcal{C}_{i-1}$ is a refinement of $\mathcal{C}_i$, $\mathcal{C}_{0}$ consists only of singletons, and $\mathcal{C}_{\ell} = \{X\}$, the set of all points.

\item There exist absolute constants $c_{out}$ and $\beta$, such that for each \(C\in\mathcal C_i\), called a cluster, either cluster \(C\) is small, meaning
\[
|C|\le 3c_{\mathrm{out}}\log n,
\]
or there is a set \(\mathrm{Out}(C)\subseteq C\) of size at most
\(c_{\mathrm{out}}\log n\) such that
\[
\mathrm{diam}(C\setminus \mathrm{Out}(C))\le \beta 2^i.
\]
We refer to \(C\setminus \mathrm{Out}(C)\) as \(\mathrm{core}(C)\).  For a
small non-root cluster, we set \(\mathrm{core}(C)=\emptyset\) and
\(\mathrm{Out}(C)=C\).  For the root cluster \(X\in\mathcal C_\ell\), we set
\(\mathrm{core}(X)=X\).

\item The cost of the hierarchical partition $\cost(\mathcal{H})$ defined as $\sum_{i = 0}^\ell 2^i n_i$, where $n_i = |\mathcal{C}_i|$, is at most $O(MST \log n \log \Delta')$.  

\end{itemize}

\end{definition}

 \subsection{Approximation algorithm given a good hierarchical partition.}
\begin{theorem}\label{thm:rounding}
Given a good hierarchical partition $\mathcal{H}$, there is a polynomial time randomized algorithm that, with high probability over the corruptions, computes a spanning tree $T'$, whose expected true cost is at most $O(\log n \log \Delta' \cdot \cost(\mathcal{H}))$.\footnote{Note that the expectation is over the randomness of the algorithm, while high probability is over the corruptions.} 
\end{theorem}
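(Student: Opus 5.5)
The plan is to build the tree from leaders. Every cluster $C \in \mathcal{C}_i$ gets a leader $r(C)$ drawn uniformly at random from $C$, with the random choices made so that they nest. Concretely, we sample $r(C)$ uniformly from $C$ at every level, independently across clusters. Then, for each non-root cluster $C \in \mathcal{C}_i$ with parent $C' \in \mathcal{C}_{i+1}$, we add the edge $(r(C), r(C'))$. The result is a connected graph on all points, because the singletons of $\mathcal{C}_0$ are their own leaders. We take any spanning tree of it, which can only lower the cost.

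The edge charged to child $C$ at level $i$ costs $d(r(C), r(C'))$. Summing over all clusters, the target is to show that this cost is $O(\log n \log\Delta') \cdot 2^{i}$ in expectation, amortized suitably. I will split into cases according to whether each leader is in the core or is an outlier.

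\textbf{Case 1: both leaders are in cores.} Suppose $r(C) \in \mathrm{core}(C)$ and $r(C') \in \mathrm{core}(C')$.
\begin{itemize}
\item If $C$ is large, then $\mathrm{core}(C) \subseteq C'$. For $n$ large, $\mathrm{core}(C)$ has at least $2c_{out}\log n$ points, while $\mathrm{Out}(C')$ has at most $c_{out}\log n$. So some point $z \in \mathrm{core}(C) \cap \mathrm{core}(C')$ exists.
\item The triangle inequality through $z$ then gives $d(r(C), r(C')) \le \beta 2^i + \beta 2^{i+1} = O(2^i)$.
\end{itemize}
Summed over all clusters at all levels, these edges contribute $O(\sum_i 2^i n_i) = O(\cost(\mathcal{H}))$.

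\textbf{Case 2: a leader is an outlier.} This happens when $C$ is small, when $C'$ is small, or when a leader lands in $\mathrm{Out}$. A naive bound of $\Delta \le \MST$ per event is too weak.

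Instead, I bound $d(r(C), r(C'))$ by a telescoping path. Starting at $r(C)$, climb the hierarchy: at each level $j \ge i$, move to a point of $\mathrm{core}$ of the ancestor cluster $A_j$. Stop at the first level $j^*$ at which $r(C)$ itself lies in $\mathrm{core}(A_{j^*})$. Such a level exists because $\mathrm{core}(X) = X$. Consecutive ancestor cores intersect by the counting argument of Case 1, so the climb costs $O(\sum_{j \le j^*} 2^j) = O(2^{j^*})$. The same climb from $r(C')$ costs $O(2^{j'^*})$.

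So the cost of the edge is $O(2^{\max(j^*, j'^*)})$. This is where the main obstacle lies: charging $2^{j^*}$ to something.

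\textbf{Charging the outlier case.} The point $r(C)$ is an outlier of $A_{j^*-1}$. Each point $x$ is an outlier in at most one cluster per level. I charge the cost $2^{j}$ to the pair (point $x$, ancestor cluster $A_j$ in which $x$ is an outlier or the cluster is small).

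The cost would be paid only if $x$ is selected as the leader of some descendant cluster $C$ of $A_j$ whose parent is also below $A_j$. The hierarchy has $\ell+1$ levels, so $x$ can be a leader at most $\ell+1 = O(\log\Delta')$ times along its chain.

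For the expectation, I use the uniform sampling of leaders. Each ancestor cluster $A$ at level $j$ has at most $3c_{out}\log n$ points that are "bad" for it: either its outliers, or all its points if $A$ is small. The expected number of times a bad point of $A$ is chosen as a leader of some descendant cluster, summed over descendants, is at most $\sum_{\text{descendants } D} |\mathrm{bad}(A) \cap D|/|D|$. For each level below $j$, this sum is at most $|\mathrm{bad}(A)| = O(\log n)$.

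So the expected total charge to $A$ is $O(\log n \cdot \ell) \cdot 2^{j}$. The extra factor 2 from the climb also covers the parent side. Summing over all clusters $A$ gives $O(\log n \log\Delta') \sum_j 2^j n_j$.

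\textbf{Remaining steps.} The "with high probability over the corruptions" qualifier enters only through $\mathcal{H}$ being good. In this argument I treat that as given and simply take expectation over the algorithm's choices. Finally, I still need to double-check the core-intersection claim, namely that $|\mathrm{core}(C)| > |\mathrm{Out}(C')|$ for large $C$. This is the one place the threshold $3c_{out}\log n$ is needed.
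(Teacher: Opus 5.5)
Your architecture is the paper's: uniform random leaders, the bound $d(r(C),r(C'))=O(2^{k})$ where $k$ is the larger of the two endpoints' first ``in an ancestor's core'' levels, and charging the edge to the level-$(k-1)$ ancestor in which the responsible endpoint is an outlier (or which is small). The gap is in the expectation step.

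\textbf{The gap.} You count how many times a bad point $x$ of $A$ is \emph{chosen as a leader}, and then assert that ``the extra factor 2 also covers the parent side.'' A single leadership event does not produce a single charged edge. If $x=r(D)$, then besides the edge from $D$ up to its parent, every edge from $D$ down to a child is incident to $x$. Whenever $x$ is the endpoint with the larger promotion level, each of these edges is charged $O(2^j)$ to $(x,A)$.

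For example, let $D$ have $m$ children whose leaders lie in their own cores, and let $x$ be a far outlier of $D$ that only enters a core at level $j+1$. Then the single event $x=r(D)$ can charge up to $m\cdot O(2^j)$ to $A$, not $2\cdot O(2^j)$. So neither your deterministic remark that $x$ leads at most $\ell+1$ clusters, nor the sum $\sum_D |\mathrm{bad}(A)\cap D|/|D|$, bounds the total charge.

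\textbf{The fix.} The missing observation, which is exactly what the paper uses, is that the children of $D$ partition $D$, so $\#\mathrm{children}(D)\le |D|$. Hence
\[
\Pr[x=r(D)]\cdot\bigl(1+\#\mathrm{children}(D)\bigr)\le \frac{1+|D|}{|D|}\le 2 .
\]
This means the expected number of leader-tree edges incident to $x$ is $O(1)$ per level and $O(\ell)$ in total. Replace your per-level sum by $\sum_D |\mathrm{bad}(A)\cap D|\,(1+\#\mathrm{children}(D))/|D|\le 2|\mathrm{bad}(A)|$. This repairs the argument and gives the $O(2^j\log n\log\Delta')$ bound per cluster.

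\textbf{A smaller point.} Your ``climb'' should not step to a core point at every level below $j^*$. Those ancestors may be small and have empty cores, and the first hop from an outlier $r(C)$ into $\mathrm{core}(C)$ is unbounded. Instead, use $r(C)\in\mathrm{core}(A_{j^*})$ directly. Then add a single core intersection between $\mathrm{core}(A_{j^*})$ and the core of the common ancestor at level $\max(j^*,j'^*)$. This intersection is valid because every ancestor of a non-small cluster is non-small.
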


\begin{proof}

The following algorithm shows how to compute the tree $T'$.
\begin{algorithm}
\caption{Computing a tree from a good hierarchical partition.}
\label{alg:rounding}
\begin{algorithmic}
\State Given the hierarchical partition $\mathcal{C}_0,\mathcal{C}_1, \ldots \mathcal{C}_{\ell}$, for each $\mathcal{C}_i$ and each part $C \in \mathcal{C}_i$, choose a leader $p_C$ uniformly at random from $C$.
\State Connect them naturally. There is an edge between $p_C$ and $p_{C'
}$ if $C' \subseteq C$, and $C' \in \mathcal{C}_{i}$ and $C \in \mathcal{C}_{i+1}$. Let the resulting tree be $T$.

\State Let $T' = P(T)$ be the path-shortcut of $T$, obtained by listing the vertices in the order in which they first appear in an Euler tour.

\State Output $T'$.

\end{algorithmic}

\end{algorithm}

\begin{figure}[t]
\centering
\begin{tikzpicture}[scale=0.9, every node/.style={transform shape}]

\tikzset{
  c3/.style={draw, rounded corners=8pt, fill=violet!20, minimum height=1cm},
  c2/.style={draw, rounded corners=8pt, fill=blue!15, minimum height=1cm},
  c1/.style={draw, rounded corners=8pt, fill=green!20, minimum height=1cm},
  leaf/.style={draw, rounded corners=6pt, minimum width=.7cm, minimum height=.7cm},
  bdot/.style={circle, fill=black, inner sep=2.5pt},
  rdot/.style={circle, fill=red, draw=white, inner sep=3pt}
}

\path[use as bounding box] (-4.7,-1.8) rectangle (10.5,6.25);

\node at (-4,5.15) {$\mathcal{C}_3$};
\node at (-4,3.10) {$\mathcal{C}_2$};
\node at (-4,1.00) {$\mathcal{C}_1$};
\node at (-4,-1.15) {$\mathcal{C}_0$};

\draw[c3] (0,4.7) rectangle (7,5.7);
\draw (2.8,4.7) -- (2.8,5.7);
\node at (4.0,6.0) {$\mathcal{C}_3^{(1)}$};

\node[bdot] at (0.6,5.2) {};
\node[bdot] at (1.5,5.2) {};
\node[bdot] at (2.3,5.2) {};

\coordinate (rtop) at (3.4,5.2);
\node[rdot] at (rtop) {};
\node[bdot] at (4.2,5.2) {};
\node[bdot] at (5.0,5.2) {};
\node[bdot] at (5.8,5.2) {};
\node[bdot] at (6.6,5.2) {};

\draw[c2] (-2.7,2.7) rectangle (1.0,3.7);
\draw (-1.0,2.7) -- (-1.0,3.7);
\node at (-0.7,4.0) {$\mathcal{C}_2^{(1)}$};
\node[bdot] at (-2.0,3.2) {};
\coordinate (r21) at (-0.45,3.2);
\node[rdot] at (r21) {};
\node[bdot] at (0.35,3.2) {};

\draw[c2] (4.2,2.7) rectangle (8.6,3.7);
\draw (6.15,2.7) -- (6.15,3.7);
\node at (6.95,4.0) {$\mathcal{C}_2^{(2)}$};
\node[bdot] at (4.8,3.2) {};
\node[bdot] at (5.6,3.2) {};
\coordinate (r22) at (6.65,3.2);
\node[rdot] at (r22) {};
\node[bdot] at (7.45,3.2) {};
\node[bdot] at (8.1,3.2) {};

\draw[c1] (-3.35,0.55) rectangle (-2.05,1.55);
\node at (-2.70,1.9) {$\mathcal{C}_1^{(1)}$};
\coordinate (r11) at (-2.70,1.05);
\node[rdot] at (r11) {};

\draw[c1] (-0.15,0.55) rectangle (2.0,1.55);
\draw (0.9,0.55) -- (0.9,1.55);
\node at (1.1,1.9) {$\mathcal{C}_1^{(2)}$};
\node[bdot] at (0.35,1.05) {};
\coordinate (r12) at (1.55,1.05);
\node[rdot] at (r12) {};

\draw[c1] (4.05,0.55) rectangle (5.95,1.55);
\draw (5.0,0.55) -- (5.0,1.55);
\node at (4.95,1.9) {$\mathcal{C}_1^{(3)}$};
\coordinate (r13) at (4.55,1.05);
\node[rdot] at (r13) {};
\node[bdot] at (5.5,1.05) {};

\draw[c1] (6.8,0.55) rectangle (10.1,1.55);
\draw (7.9,0.55) -- (7.9,1.55);
\draw (9.0,0.55) -- (9.0,1.55);
\node at (8.45,1.9) {$\mathcal{C}_1^{(4)}$};
\node[bdot] at (7.35,1.05) {};
\coordinate (r14) at (8.45,1.05);
\node[rdot] at (r14) {};
\node[bdot] at (9.55,1.05) {};

\foreach \x/\lab/\name in {
  -2.70/1/b1,
   0.00/2/b2,
   1.75/3/b3,
   4.30/4/b4,
   5.50/5/b5,
   7.20/6/b6,
   8.45/7/b7,
   9.70/8/b8}
{
  \draw[leaf] (\x-0.33,-1.05) rectangle (\x+0.33,-0.35);
  \coordinate (\name) at (\x,-0.70);
  \node[rdot] at (\name) {};
  \node at (\x,-1.42) {$x_{\lab}$};
}

\draw (rtop) -- (r21);
\draw (rtop) -- (r22);

\draw (r21) -- (r11);
\draw (r21) -- (r12);

\draw (r22) -- (r13);
\draw (r22) -- (r14);

\draw (r11) -- (b1);

\draw (r12) -- (b2);
\draw (r12) -- (b3);

\draw (r13) -- (b4);
\draw (r13) -- (b5);

\draw (r14) -- (b6);
\draw (r14) -- (b7);
\draw (r14) -- (b8);

\end{tikzpicture}
\caption{The hierarchical leader tree $T$ constructed by \Cref{alg:rounding}, obtained by sampling a point from each cluster and connecting naturally along the hierarchy.}
\label{fig:cluster-tree-leader}
\end{figure}

Note that $T'$ is spanning as the set of leaves of $T$ is exactly the entire point set $X$. Now let us prove that the tree $T'$ has small cost in expectation. It is sufficient to bound the cost of $T$, as the last step loses at most a factor of $2$.

\noindent\textbf{Charging scheme.} Recall that the cost of scale $t = 2^i$ appearing in $cost(\mathcal{H})$ is $2^i n_i$, where $n_i$ is the number of clusters at scale $2^i$. Let us view this by associating with each cluster at this scale a cost of $t$. The total cost of this level is the sum of these associated costs. 
We will upper bound the total cost of $T$ by charging each edge of the hierarchical leader-tree
to a \emph{single} cluster in a partition on a suitable level, and then showing that each cluster is charged only
$O(\log n \log \Delta)$ times its own cost in expectation.

Consider an edge $(u,v)\in T$ where $u$ is the leader of a parent cluster at level $i+1$ and
$v$ is the leader of a child cluster at level $i$ (so $C_v\subseteq C_u$ where
$C_u\in\mathcal C_{i+1}$ and $C_v\in\mathcal C_i$).

For a point $x\in X$ and an index $k$, let $C_k(x)\in\mathcal C_k$ denote the unique cluster
containing $x$ at level $k$.

For \(x\in\{u,v\}\), define the promotion level of \(x\) relative to the edge $e = (u,v)$ as follows.
\[
\tau_e(x):=\min\{k\in\{i+1,i+2,\ldots,\ell\}:x\in \mathrm{core}(C_k(x))\}.
\]

This is well-defined because \(C_\ell(x)=X\) and \(\mathrm{core}(X)=X\).  Let
\[
k^\star:=\max\{\tau_e(u),\tau_e(v)\}.
\]

\noindent\textbf{Distance bound for an edge.} Let \(D:=C_{k^\star}(u)=C_{k^\star}(v)\). Note that $u$ and $v$ must be contained in the same cluster at level $k^\star$, since they are contained in the same cluster at level $i + 1$, and $k^\star \geq i + 1$.

Let
\[
A_u:=C_{\tau_e(u)}(u),\qquad A_v:=C_{\tau_e(v)}(v).
\]
By definition, \(u\in \mathrm{core}(A_u)\) and
\(v\in \mathrm{core}(A_v)\), with \(A_u,A_v\subseteq D\).
We claim that $d(u,v)\le O(2^{k^\star})$. Consider \(x\in\{u,v\}\), and
let \(A_x=C_{\tau_e(x)}(x)\).  Since \(x\in \mathrm{core}(A_x)\), and
\(A_x\subseteq D\), it is enough to find a point in
\(\mathrm{core}(A_x)\cap \mathrm{core}(D)\). We can then use the fact that cores have small diameter. First observe that neither $A_x$ nor $D$ is a small cluster, since by definition, small clusters have empty cores. Now \(D\) has at most \(c_{\mathrm{out}}\log n\) outliers, while
\(|\mathrm{core}(A_x)|>c_{\mathrm{out}}\log n\), and $A_x \subseteq D$.  Hence
\[
\mathrm{core}(A_x)\cap \mathrm{core}(D)\neq \emptyset.
\]
Pick \(w_x\in \mathrm{core}(A_x)\cap \mathrm{core}(D)\).
Applying this to \(u\) and \(v\), we get points \(w_u,w_v\in \mathrm{core}(D)\)
with \(w_u\in\mathrm{core}(A_u)\) and \(w_v\in\mathrm{core}(A_v)\). Therefore
\[
d(u,v)
\le d(u,w_u)+d(w_u,w_v)+d(w_v,v)
\le \beta 2^{\tau_e(u)}+\beta 2^{k^\star}+\beta 2^{\tau_e(v)}
=O(\beta 2^{k^\star}).
\]

\medskip
\noindent\textbf{Where we charge.} 
If \(k^\star=i + 1\), then both endpoints lie in the core of the parent cluster,
so
\[
d(u,v)\le \beta 2^{i+1}.
\]
In this case, charge the edge \((u,v)\) to the child cluster \(C_v\in
\mathcal C_i\).

Otherwise \(k^\star>i + 1\).  Choose an endpoint \(x\in\{u,v\}\) with
\(\tau_e(x)=k^\star\), breaking ties arbitrarily, and charge the edge to
$C_{k^\star-1}(x)$. Since \(x\notin \mathrm{core}(C_{k^\star-1}(x))\) but
\(x\in \mathrm{core}(C_{k^\star}(x))\), the endpoint \(x\) is an outlier of
the charged cluster.  Moreover the charge is \(O(2^{k^\star})\), which is only
a constant factor larger than the scale \(2^{k^\star-1}\) of the charged
cluster.

\medskip
\noindent\textbf{Bounding expected charge to one cluster.}
Fix any cluster $C\in\mathcal C_j$.
Under the above rule, every edge charged to $C$ has charge at most $O(2^{j+1})=O(2^j)$.

Moreover, $C$ can receive charge in only two ways:
(i) a single edge  corresponding to the unique parent--child link in the cluster-tree, where one endpoint is in $C$ and the other endpoint is in its parent cluster. There is at most one such edge per cluster.
(ii) edges where one endpoint is an \emph{outlier vertex of $C$}.
Thus,
\[
\text{(total charge to $C$)}
\ \le\ O(2^j)\;+\;\sum_{x\in \mathrm{Out}(C)} O(2^j)\cdot N_x(C),
\]
where $N_x(C)$ is the number of hierarchical-tree edges incident to $x$ that end up being
charged to $C$ (note each such edge is incident to $x$ by construction).

We now bound $\EE[N_x(C)]$ by a crude but sufficient upper bound: $N_x(C)\ \le\ \deg_{\mathrm{hier}}(x)$,
where $\deg_{\mathrm{hier}}(x)$ is the total number of edges in the hierarchical leader-tree
incident to $x$ (across all levels).
For each level $r$, $x$ is chosen as the leader of its unique cluster $C_r(x)$ with probability
$1/|C_r(x)|$; conditioned on this event, the number of incident edges contributed at that level
is at most $1 + \#\{\text{children of }C_r(x)\}\le 1+|C_r(x)|$.
Hence the expected number of incident edges contributed by level $r$ is $O(1)$, and summing over
$r=0,1,\dots,\ell$ gives
$\EE[\deg_{\mathrm{hier}}(x)]\ \le\ O(\ell)\ =\ O(\log \Delta)$. Since $|\mathrm{Out}(C)|\le 3c_{out}\log n$, we conclude
\[
\EE[\text{charge}(C)]
\ \le\ O(2^j)\;+\;|\mathrm{Out}(C)|\cdot O(2^j)\cdot O(\log\Delta)
\ =\ O(2^j\log n\log\Delta).
\]
Since the cost we assigned to $C$ is $2^j$, this is $O(\log n\log\Delta')$ times the cluster's cost.
Summing over all clusters and all levels yields
$\EE[\mathrm{cost}(T)] \le O(\log n\log\Delta')\cdot \mathrm{cost}(\mathcal H)$.
\end{proof}

\subsection{Computing a good hierarchical partition.}
Next, we describe the algorithm to compute such a good hierarchical partition with high probability. We start with a key lemma and then give some important definitions.

\begin{lemma}[Core-Outlier lemma]\label{lemma:coreoutliereff}
With high probability over the corruption, for any set $S$ of $100\log n$ points that forms a clique in $\widehat{G}_t$ for some scale $t$, there exists a core $Y$ of at least $80\log n$ vertices, with $Y \subseteq S$, such that $Y$ has diameter at most $\alpha t$ where $\alpha = 2$.
\end{lemma}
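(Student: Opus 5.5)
The plan is to take $Y$ to be a single vertex of $S$ together with its \emph{uncorrupted} neighbours inside $S$; the triangle inequality through that centre gives diameter at most $2t$. It then suffices to show that, with high probability, every set of $100\log n$ points has a vertex with many uncorrupted edges into the set. This is a purely random-graph statement about the corrupted set, so it holds simultaneously for all $S$ and all scales $t$. We work throughout in the decrease-only model with $p \le 0.1$, as in this section.

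\textbf{Step 1 (deterministic reduction).} Fix a scale $t$ and a set $S$ with $|S| = s = 100\log n$ that is a clique in $\widehat{G}_t$. For any pair $\{x,y\} \subseteq S$ that is uncorrupted we have $d(x,y) = \dhat(x,y) \le t$. Suppose some $x \in S$ has at least $80\log n - 1$ uncorrupted partners in $S$. Let $Y$ consist of $x$ and these partners. Then $|Y| \ge 80\log n$, and for any $y,y' \in Y$,
\[
d(y,y') \le d(y,x) + d(x,y') \le 2t,
\]
so $\diam(Y) \le \alpha t$ with $\alpha = 2$. This step uses only the triangle inequality of $d$ and the fact that uncorrupted pairs are read correctly. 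It therefore suffices to rule out the bad event $\calE_{\mathrm{bad}}$: there exists a set $S$ of size $s$ in which every vertex is incident to more than $20\log n$ corrupted pairs inside $S$. This event does not depend on $t$ or on the adversary.

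\textbf{Step 2 (counting corrupted pairs).} If $S$ witnesses $\calE_{\mathrm{bad}}$, summing corrupted degrees over $S$ shows that $S$ contains at least $\tfrac12 \cdot s \cdot 20\log n = 1000\log^2 n$ corrupted pairs. For a fixed $S$, the number $Z_S$ of corrupted pairs inside $S$ is a sum of $\binom{s}{2}$ independent Bernoulli$(p)$ variables. The tail $\Pr[Z_S \ge k]$ is monotone in $p$, so we may take $p = 0.1$. Then $\mu = \EE[Z_S] \le 0.1\binom{s}{2} \le 500\log^2 n$, so the bad event needs $Z_S \ge 2\mu$. The multiplicative Chernoff bound with $\delta = 1$ gives
\[
\Pr[Z_S \ge 2\mu] \le e^{-\mu/3} \approx e^{-166\log^2 n}.
\]
Here we should either compute with $\mu$ exactly equal to $0.1\binom{s}{2}$, or pad by the threshold $k = 1000\log^2 n$ directly, so that smaller $p$ only helps.

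\textbf{Step 3 (union bound).} There are at most $n^{s} = e^{100\log^2 n}$ choices of $S$, with logarithms taken in a consistent base. Hence
\[
\Pr[\calE_{\mathrm{bad}}] \le e^{100\log^2 n - 166\log^2 n} \le n^{-10}
\]
for large $n$. On the complement of $\calE_{\mathrm{bad}}$, Step 1 applies to every clique $S$ of $\widehat{G}_t$ at every scale $t$ simultaneously, which proves the lemma.

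\textbf{Main obstacle.} The only delicate point is the constant bookkeeping in Step 2: the Chernoff exponent must beat the $n^{s}$ union bound. This is exactly where $p \le 0.1$ is used. If the margin were too thin, the fallback is the sharper tail $\Pr[Z_S \ge k] \le \binom{\binom{s}{2}}{k} p^k \le (e\mu/k)^k$. With $k = 1000\log^2 n$ and $\mu/k \le 1/2$, this gives $(e/2)^{k}$, which is not small. So the multiplicative form is the one to use, and if needed the set sizes $100$ and $80$ can be retuned accordingly.
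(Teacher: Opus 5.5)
Your proposal is correct and follows essentially the same route as the paper. You pick a vertex of $S$ with at most $20\log n$ corrupted pairs inside $S$ and take it together with its uncorrupted neighbours as $Y$. You then rule out the bad event by a Chernoff bound on $\mathrm{Bin}(\binom{100\log n}{2},0.1)$ at twice its mean, union-bounded over all $n^{100\log n}$ sets. The only bookkeeping difference is that the paper takes $\log=\log_2$, so its union bound is $2^{100\log^2 n}\le e^{70\log^2 n}$ against the Chernoff term $e^{-165\log^2 n}$, which only widens the margin you computed.
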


\begin{proof}
Let \(L=\log n\), and write \(m=100L\).  We first prove the following event:
for every set \(S\subseteq X\) of size \(m\), there is a vertex \(z\in S\)
incident to at most \(20L\) corrupted pairs inside \(S\).

Fix a set \(S\) of size \(m\), and let \(Z_S\) be the number of corrupted pairs
inside \(S\).  Since the corruptions are independent and each pair is corrupted
with probability at most \(0.1\), \(Z_S\) is stochastically dominated by
\[
B\sim \mathrm{Bin}\left(\binom{m}{2},0.1\right).
\]
Let
\[
\mu:=\EE[B]
=0.1\binom{100L}{2}
=0.1\cdot \frac{100L(100L-1)}{2}
=500L^2-5L.
\]
For \(L\ge 1\), we have $495L^2\le \mu\le 500L^2$.
If every vertex of \(S\) is incident to more than \(20L\) corrupted pairs inside
\(S\), then summing corrupted degrees over all vertices of \(S\) gives
$2Z_S>m\cdot 20L=100L\cdot 20L=2000L^2$. Hence $Z_S>1000L^2$.

Since \(Z_S\) is stochastically dominated by \(B\), and since
\(1000L^2\ge 2\mu\), we have
\[
\Pr[Z_S>1000L^2]
\le \Pr[B>1000L^2]
\le \Pr[B\ge 2\mu].
\]
By the standard Chernoff bound, $\Pr[B\ge 2\mu]\le \exp(-\mu/3)$. 
Using \(\mu\ge 495L^2\), we get
\[
\Pr[Z_S>1000L^2]\le \exp(-165L^2).
\]

Now union bound over all choices of \(S\).  Since \(L=\log_2 n\), the number
of \(m\)-subsets of \(X\) is at most
\[
\binom{n}{m}\le n^m=n^{100L}=2^{100L^2}
=\exp(100\ln(2)L^2)\le \exp(70L^2).
\]
Thus the probability that there exists a set \(S\) of size \(100L\) for which
every vertex is incident to more than \(20L\) corrupted pairs is at most $\exp(70L^2)\cdot \exp(-165L^2)
=\exp(-95L^2)$.
Hence, with probability at least \(1-\exp(-95L^2)\), every set \(S\) of size
\(100L\) contains a vertex \(z\in S\) incident to at most \(20L\) corrupted
pairs inside \(S\).

Condition on this event.  Let \(S\) be any set of \(100L\) points that forms a
clique in \(\widehat G_t\).  Choose \(z\in S\) incident to at most \(20L\) corrupted
pairs inside \(S\), and define
\[
Y:=\{z\}\cup\{y\in S\setminus\{z\}:(z,y)\text{ is uncorrupted}\}.
\]
Since \(z\) has at most \(20L\) corrupted incident pairs inside \(S\), $|Y|\ge 1+(100L-1-20L)=80L$.
For every \(y\in Y\), the pair \((z,y)\) is uncorrupted.  Since \(S\) is a
clique in \(\widehat G_t\), the observed distance between \(z\) and \(y\) is at
most \(t\).  Because \((z,y)\) is uncorrupted, its true distance is also at
most \(t\): $d(z,y)\le t$.
Therefore, for any \(y,y'\in Y\), $d(y,y')\le d(y,z)+d(z,y')\le 2t$.
Thus \(Y\) has true diameter at most \(2t\), and \(|Y|\ge 80\log_2 n\).  This
proves the lemma with \(\alpha=2\).
\end{proof}

\begin{definition}[Distance from a point to a set]%
Let $S$ be a set with $|S| =  100\log n$, and let $p$ be a point. We say that the distance from $p$ to $S$ is $r$ if $r$ is the smallest value for which there are at least $80\log n$ edges in $\widehat{G}_r$ (the corrupted graph with edge lengths at most $r$) between $p$ and $S$. We denote this distance by $\widehat{d}(p,S)$.
\end{definition}

\begin{definition}[Distance between two sets]%
Let $S,S'$ be disjoint sets with $|S|,|S'| = 100\log n$. We say that the distance between $S$ and $S'$ is $r$ if $r$ is the smallest value for which there are at least $5000\log^2 n$ edges in $\widehat{G}_r$ between $S$ and $S'$. We denote this quantity by $\widehat{d}(S,S')$.\footnote{For convenience, we also define $\widehat{d}(S,S)=0$.}
\end{definition}
\begin{definition}[Admissible set]%
Fix some scale $t = 2^i$. A set of \(100\log n\) points \(S\) is admissible in \(\widehat G_t\) if it forms a
clique in \(\widehat G_t\), and furthermore, for all sets \(S'\) disjoint from
\(S\) that form a clique in \(\widehat G_{t'}\) for \(t' =  2^j\) for some $\ell \geq j \geq i$, and every
\(p\in S\),
\[
\widehat d(p,S')\le 10\alpha t+\widehat d(S,S')+\alpha t'.
\]
\end{definition}
Intuitively, admissible sets $S$ are points that are close to each other after corruption, but each point must also be close to other sets that are close to $S$. As we will see later, it is important that the definition of admissible sets performs this ``triangle-inequality test'' on the distances.

Algorithm~\ref{alg:hierarchical-partition} gives the algorithm to construct a good hierarchical partition. First, we explain what the algorithm does on a high level. In this overview we omit some details for ease of understanding.

\paragraph{Overview of the algorithm.} The algorithm works bottom up, across scales $t = 2^i$, starting from $i = 1$ till $i = \ell-1$. At level $i$, we will compute the partition $\mathcal{C}_i$ of $X$. In each level, we will create clusters $C \in \mathcal{C}_i$. Each cluster $C \in \mathcal{C}_i$ is created by first obtaining a seed set $S$ of size $100 \log n$, which forms a clique at scale $2^i$. Then, more points may be added to this cluster via the \textsc{Carve} operation. Intuitively, the \textsc{Carve} operation finds points that have many edges at scale $t$ to the set $S$: so that we expect most of them to be close to $S$. Once the \textsc{Carve} operation is completed, we obtain a cluster $C$ generated by the seed set $S$.

To make sure that the clustering is hierarchical, we operate as follows. Suppose we have already computed $\mathcal{C}_{i-1}$. Then to compute $\mathcal{C}_i$, we do not work with the entire point set, but only work with \emph{representatives} from $\mathcal{C}_{i-1}$. This set of representatives is captured by the set $\mathcal{L}$. It consists of two types of objects. First, it consists of seed sets of size $100\log n$ that generated the clusters of $\mathcal{C}_{i-1}$. Second, it consists of singleton points which form singleton clusters at level $i - 1$ (because they were far apart from other points at scale $2^{i-1}$). Then the candidates for seed sets $S$ at level $i$ are either the seed sets from level $i  - 1$, or a set of $100\log n$ fresh singletons. We pick such a candidate seed set $S$, carve out objects from $\mathcal{L}$, and then do a \emph{natural hierarchical extension}. This simply means that if we carved out a seed from level $i - 1$, we add the whole cluster of $\mathcal{C}_{i-1}$ that it represented to the current cluster. This ensures that $\mathcal{C}_i$ remains a partition of the whole point set, and $\mathcal{C}_{i-1}$ is a refinement of $\mathcal{C}_i$. 

Next, we describe the intuition behind the $\textsc{Carve}{}$ operation and the requirement that seed sets be admissible. Since $S$ is a clique in the corrupted graph at the appropriate scale $t$, by the core outlier lemma (\Cref{lemma:coreoutliereff}), $S$ contains many vertices which are truly close at scale $t$. The idea is, given a seed set $S$, we want to carve all the points $p$ which are truly close to $S$. Thus, if $p$ is close to $S$, we will see many edges from $p$ to $S$ at scale $t$. However, we do not want to select too many outliers. While via a simple probability analysis, it is clear that there will be at most $O(\log n)$ outliers which are far from $S$ but have many edges to $S$ after corruption, the danger is \emph{outlier accumulation}. When two clusters at a level merge, their outliers may merge, leading to an explosion of outliers as we go up the hierarchy. To avoid this, while carving with $S$, we enumerate all possible future seed candidates $S'$ at a scale $t' \geq t$, and demand that if $p$ is observed to be close to $S$, and $S$ is observed to be close to $S'$, then $p$ has many observed edges to $S'$ as well. This ensures that even at a higher level of the hierarchy, for any cluster created by a seed set $S'$, every point $p$, even coming from a lower level cluster, has many edges to $S'$. This will force $O(\log n)$ outliers for any cluster, as desired (see \Cref{lemma:diameterhier}.)

\begin{algorithm}[!htbp]
\caption{Building Hierarchical Partition}
\label{alg:hierarchical-partition}
\begin{algorithmic}
\State \textbf{Input:} The point set $X$.
\State \textbf{Output:} A good hierarchical partition.
\State Set \(\mathcal C_0\) to be the partition of \(X\) with each point as a singleton.
\State Initialize \(\mathcal L\) to be the set of all singleton points.
\For{scale $t = 2^i$, $i = 1,2,\ldots,\ell - 1$}
\State  Initialize \(\mathcal S_i\gets \emptyset\),  $\mathcal{C}_i \gets \emptyset$.
\While{there exists a seed in \(\mathcal L\), or an
admissible set of \(100\log n\) singleton points in \(\mathcal L\)}
    \State If there exists a seed \(S\in\mathcal L\),
    choose one arbitrarily. Otherwise choose an admissible set \(S\) of
    \(100\log n\) singleton points from \(\mathcal L\).
    \State Add \(S\) to \(\mathcal S_i\).
    \State $C \gets \textsc{Carve}(S,t,\mathcal{L})$
    \State  $C$ is now formed as a cluster with seed set $S$. 
    \State Mark all points in $C$ as carved.
    \State Remove all sets containing points of $C$ from $\mathcal{L}$.

    \State \textbf{(Natural hierarchical extension)} For every seed $S'$ that is in $C$, add the cluster of $S'$ from the previous level to $C$.
    \State Let $C$ be the final resulting cluster. We say that $C$ is the cluster of $S$. Add $C$ to $\mathcal{C}_i$.
    
    \EndWhile
    \State Add every remaining singleton in \(\mathcal L\) as a singleton cluster to
\(\mathcal C_i\).
    \State Update $\mathcal{L}$ to be the union of $\mathcal{S}_i$ (the set of seeds) and the unclustered singleton points.

\EndFor
\State Set \(\mathcal C_{\ell} \gets \{X\}\).
\end{algorithmic}

\end{algorithm}
\begin{algorithm}[!htbp]
\caption{\textsc{Carve}($S,t = 2^i,\mathcal{L}$)}
\begin{algorithmic}
\State \textbf{Input:} $S,t = 2^i,\mathcal{L}$
\State \textbf{Output:} A set $Y$ of points that represents the carved cluster.
\State Initialize \(Y\gets \emptyset\).
\State If \(S\in\mathcal L\) is an old representative seed, let $Y \gets Y \cup S$.
\State For all other seeds \(S'\in\mathcal L\), \(S'\neq S\), that are at
distance at most \(10\alpha t\) from \(S\), do $Y \gets Y \cup S'$.

\State Enumerate all pairs \((S',t')\), where $t' = 2^j$, and $\ell \geq j \geq i$ and \(S'\) is a set of size \(100\log n\), disjoint from \(S\), that
forms a clique in \(\widehat G_{t'}\).

\State For each singleton point $p \in \mathcal{L}$, set $Y \gets Y \cup \{p\}$ iff $\widehat{d}(p,S) \leq 10\alpha t$, and for every $S',t'$, $p$ satisfies  $\widehat{d}(p,S') \leq 10\alpha t + \widehat{d}(S,S') + \alpha t'$.
\State \Return $Y$

\end{algorithmic}

\end{algorithm}

Now we proceed with the proof that the algorithm computes a hierarchical partition with the properties that we need. By the core-outlier \Cref{lemma:coreoutliereff}, we have the following statement. 
\begin{lemma}
Every seed set $S$ at scale $t$ which forms a clique in $\widehat{G}_t$ contains a core of $80 \log n$ vertices whose true diameter is at most $\alpha t$, where $\alpha = 2$. The rest of the at most $20\log n$ vertices will be called outliers.
\end{lemma}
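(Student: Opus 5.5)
This statement is essentially a restatement of the Core-Outlier lemma (\Cref{lemma:coreoutliereff}) specialized to seed sets. I will derive it by invoking that lemma, conditioning throughout on its high-probability event. The one point to check is that every seed set satisfies the lemma's hypothesis.

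\textbf{Step 1: every seed is a clique of the right size.} I will verify that every seed set $S$ at scale $t=2^i$ is a set of exactly $100\log n$ points forming a clique in $\widehat G_t$. There are two ways a seed arises in \Cref{alg:hierarchical-partition}.
\begin{itemize}
\item A fresh seed is, by definition, an admissible set of $100\log n$ singletons, and admissibility requires being a clique in $\widehat G_t$.
\item An inherited seed was admissible at some earlier scale $t_0 = 2^{i_0} \le t$, so it is a clique in $\widehat G_{t_0}$. Since $\widehat G_{t_0}\subseteq \widehat G_t$ (thresholds are monotone), it is also a clique in $\widehat G_t$.
\end{itemize}
By induction over levels, every seed used at scale $t$ is a clique in $\widehat G_t$.

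\textbf{Step 2: apply the Core-Outlier lemma.} Condition on the high-probability event of \Cref{lemma:coreoutliereff}. That event quantifies over all $(100\log n)$-subsets and all scales simultaneously, so it does not depend on the algorithm's choices. It therefore yields, for each seed $S$ at scale $t$, a subset $Y\subseteq S$ with $|Y|\ge 80\log n$ and $\diam(Y)\le 2t=\alpha t$. Here $Y$ is the set consisting of a low-corrupted-degree vertex $z$ together with its uncorrupted neighbours in $S$.

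\textbf{Step 3: name the core and outliers.} Define the core of $S$ to be $Y$ (trimmed to exactly $80\log n$ points if desired). The remaining points of $S$, at most $100\log n-80\log n=20\log n$ of them, are the outliers.

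I expect no real obstacle. The only subtlety is the inherited case in Step 1, which is handled by the monotonicity of threshold graphs.
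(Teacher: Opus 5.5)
Your proposal is correct and follows the paper's approach: the paper likewise obtains this statement directly from the Core-Outlier lemma (\Cref{lemma:coreoutliereff}), conditioned on its high-probability event. Your Step 1 is not strictly needed, since the clique property in $\widehat G_t$ is already a hypothesis of the statement, but it is a valid check via monotonicity of threshold graphs that every seed the algorithm uses satisfies that hypothesis.
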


For a set $S$ of size $100\log n$, we fix any core of $S$ as defined above and call it $\mathrm{core}(S)$.

\begin{lemma}[Distance between cores lemma]\label{lemma:distancecores}
With high probability over the corruption, the following holds simultaneously.
Let \(S,S'\) be any two disjoint sets of size \(100\log n\), and let
\(A\subseteq S\), \(A'\subseteq S'\) satisfy $|A|,|A'|\ge 80\log n$. If
$\widehat d(S,S')\le x$,
then there exist points \(a\in A\) and \(a'\in A'\) such that $d(a,a')\le x$.

In particular, this applies when \(A=\mathrm{core}(S)\) and
\(A'=\mathrm{core}(S')\).
\end{lemma}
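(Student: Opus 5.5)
The plan is to reduce the statement to a purely combinatorial property of the random corruption pattern, fixed before any adversarial values are chosen, and then union-bound over all sets. The key observation is this: if $\widehat d(S,S')\le x$, then $\widehat G_x$ contains at least $5000\log^2 n$ edges between $S$ and $S'$. Since $|S|=|S'|=100\log n$, the total number of cross pairs is $10^4\log^2 n$, so at least half of all cross pairs are observed at scale $x$.

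Now let $A\subseteq S$ and $A'\subseteq S'$ with $|A|,|A'|\ge 80\log n$.
\begin{itemize}
\item The pairs in $A\times A'$ number at least $6400\log^2 n$.
\item At most $10^4\log^2 n - 5000\log^2 n = 5000\log^2 n$ cross pairs fail to be edges of $\widehat G_x$.
\item Hence at least $1400\log^2 n$ pairs $(a,a')\in A\times A'$ satisfy $\dhat(a,a')\le x$.
\end{itemize}
It suffices to show that one of these pairs is uncorrupted. For such a pair, $d(a,a')=\dhat(a,a')\le x$, which is exactly the conclusion. Note that in the decrease-only model we cannot use $\dhat\le x$ alone on corrupted pairs, so uncorruptedness is genuinely needed.

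So the event I will establish is $\calE_{\mathrm{bip}}$: for every pair of disjoint sets $S,S'$ of size $100\log n$, at most $1400\log^2 n$ (in fact, fewer) of the $10^4\log^2 n$ cross pairs are corrupted. On this event, the $1400\log^2 n$ observed-short pairs inside $A\times A'$ cannot all be corrupted, and we are done. This event does not depend on $x$, $A$, $A'$, or the adversary, which gives the required simultaneity.

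To bound $\Pr[\calE_{\mathrm{bip}}]$, fix $S,S'$. The number $Z$ of corrupted cross pairs is a sum of $10^4 L^2$ independent indicators, where $L=\log n$, each with probability $p\le 0.1$. So $Z$ is dominated by $\mathrm{Bin}(10^4L^2,0.1)$, which has mean $\mu=1000L^2$. We need $\Pr[Z\ge 1400L^2]$. The Chernoff bound with deviation $\delta=0.4$ gives $\exp(-\delta^2\mu/3)=\exp(-53L^2)$ roughly. The number of choices of $(S,S')$ is at most $n^{200L}=\exp(\Theta(L^2))$, and with $L=\log_2 n$ this is about $\exp(139L^2)$. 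That exceeds the Chernoff bound, so the constant check is the main obstacle.

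I would handle it as follows. First, tighten the tail estimate. The counts have slack: we only need corrupted pairs to number fewer than $1400L^2$, against a mean of at most $1000L^2$. Use the sharper bound $\Pr[Z\ge k]\le \binom{N}{k}p^k\le (eNp/k)^k$ with $N=10^4L^2$ and $k=1400L^2$. This gives $(e/1.4)^{1400L^2}=\exp(-c\cdot 1400L^2)$ with $c=\ln(1.4/e)<0$, which is useless. So the direct union bound at $p=0.1$ is too tight, and I would restructure instead.

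The better route is to union-bound over the corrupted subsets rather than over $(S,S')$ alone. Alternatively, sharpen the bad event using only that $A,A'$ each have $80L$ points, via the following dichotomy. If no pair in a set $Q$ of $1400L^2$ cross pairs is uncorrupted, then in particular there is a $k\times k$ fully corrupted "dense" structure. A K\H{o}v\'ari--S\'os--Tur\'an style count shows that some $a\in A$ has at least $17L$ corrupted cross pairs into $A'$ among short pairs. Since the intended density is $p\le 0.1$, I would argue this via the same degree-averaging trick used in \Cref{lemma:coreoutliereff}, combined with a Chernoff-plus-union bound over a single vertex and a $100L$-set. That union bound costs only $n^{100L+1}$ against a tail of $\exp(-\Omega(L^2))$ with a better constant.

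In summary, the structure is: the counting reduction, then a high-probability sparsity statement for corruptions between any two $100L$-sets, then the conclusion. The only real work is picking the sparsity event so that the Chernoff exponent beats the union-bound entropy at $p\le 0.1$.
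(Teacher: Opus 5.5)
Your counting reduction is correct and is the same as the paper's. With $N=|A||A'|\ge 6400L^2$, at least $N-5000L^2\ge 1400L^2$ pairs of $A\times A'$ have $\dhat\le x$, and any uncorrupted one among them gives $d(a,a')=\dhat(a,a')\le x$.

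\textbf{The gap.} You never establish a valid high-probability event. Your $\calE_{\mathrm{bip}}$ requires fewer than $1400L^2$ corruptions among all $10^4L^2$ pairs of $S\times S'$. That threshold is only $1.4$ times the mean $1000L^2$, and the problem is not a loose Chernoff constant. The binomial tail is $e^{-N'D(0.14\|0.1)}$ up to a polynomial factor, with $N'=10^4L^2$, i.e.\ about $e^{-80L^2}$. There are about $e^{138L^2}$ pairs $(S,S')$, so the expected number of violating pairs is $e^{\Omega(L^2)}$. No union bound over $(S,S')$ can give $\calE_{\mathrm{bip}}$, and there is no reason to believe it holds.

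Your fallback fails as well. The corrupted degree of a single vertex into a $100L$-set is a sum of only $O(L)$ indicators, so its tail is $e^{-O(L)}$, not $e^{-\Omega(L^2)}$. Worse, the uniform statement you would need (no vertex has about $17L$ corrupted pairs into any $80L$-set $A'$) is false outright. For $p>0$ every vertex has roughly $pn\gg 100L$ corrupted partners, and $A'$ can be chosen among them. The remarks about K\H{o}v\'ari--S\'os--Tur\'an and union-bounding over corrupted subsets do not amount to an argument.

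\textbf{The missing idea.} Count corruptions only inside $A\times A'$, and union-bound over $(A,A')$ rather than over $(S,S')$. The pairs in $(S\times S')\setminus(A\times A')$ play no role in the conclusion. Including them is exactly what raises the mean to $1000L^2$ against the same threshold $1400L^2$.

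The paper's event is: for all disjoint $A,A'$ with $80L\le|A|,|A'|\le 100L$, the number $Z_{A,A'}$ of corrupted pairs in $A\times A'$ is less than $|A||A'|-5000L^2$. Here $\mu=0.1N\ge 640L^2$, and the threshold equals $(10-50000L^2/N)\mu\ge\frac{35}{16}\mu$. Chernoff with $\delta=\frac{19}{16}$ then bounds the failure probability for one pair by $e^{-280L^2}$.

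The number of choices of $(A,A')$ is at most $(21L)^2n^{200L}\le e^{146L^2}$. This is the same entropy you were already paying for $(S,S')$, so letting the event depend on $A,A'$ costs nothing. Equivalently, you may shrink $A,A'$ to exactly $80L$ points each, where the threshold is $1400L^2$ against a mean of $640L^2$.

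On this event, the at least $N-5000L^2$ observed-short pairs in $A\times A'$ cannot all be corrupted, which completes the proof.
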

\begin{proof}
Let \(L=\log_2 n\).  We first prove a high-probability event that holds
simultaneously for all pairs of large disjoint subsets.

Fix two disjoint sets \(A,A'\subseteq X\) with $80L\le |A|,|A'|\le 100L$.
Let $N:=|A||A'|$. Then $6400L^2\le N\le 10000L^2$. Let \(Z_{A,A'}\) be the number of corrupted pairs between \(A\) and \(A'\).
By independence, \(Z_{A,A'}\) is stochastically dominated by $B\sim \mathrm{Bin}(N,0.1)$.
Let $\mu:=\EE[B]=0.1N$. We will bound the probability that $Z_{A,A'}\ge N-5000L^2$. First observe that
\[
\frac{N-5000L^2}{\mu}
=
\frac{N-5000L^2}{0.1N}
=
10-\frac{50000L^2}{N}.
\]
Since \(N\ge 6400L^2\), this ratio is at least $10-\frac{50000L^2}{6400L^2}
=
\frac{35}{16}$.
Thus $N-5000L^2\ge \frac{35}{16}\mu$. 
Equivalently, with \(\delta=19/16\), $N-5000L^2\ge (1+\delta)\mu$.
By the Chernoff bound,
\[
\Pr\!\left[Z_{A,A'}\ge N-5000L^2\right]
\le
\Pr[B\ge (1+\delta)\mu]
\le
\exp\left(-\frac{\delta^2}{2+\delta}\mu\right).
\]
Note that $\frac{\delta^2}{2+\delta}
=
\frac{(19/16)^2}{2+19/16}
=
\frac{361}{816}$, 
and $\mu=0.1N\ge 0.1\cdot 6400L^2=640L^2$.
Therefore
\[
\Pr\!\left[Z_{A,A'}\ge N-5000L^2\right]
\le
\exp\left(-\frac{361}{816}\cdot 640L^2\right)
\le
\exp(-280L^2).
\]

We now union bound over all choices of \(A,A'\) with sizes between
\(80L\) and \(100L\).  The number of ordered pairs \((A,A')\) of such sets is
at most
\[
\left(\sum_{r=80L}^{100L}\binom{n}{r}\right)^2
\le
(21L)^2 n^{200L}.
\]
Since \(L=\log_2 n\), this is $(21L)^2 2^{200L^2}
=
\exp\left(200\ln(2)L^2+2\ln(21L)\right)$. 
For every \(L\ge 1\),
$2\ln(21L)\le 7L^2$,
and also \(200\ln(2)<139\). Hence the number of choices is at most $\exp(146L^2)$.
Thus, with probability at least $1-\exp(146L^2)\exp(-280L^2)
=
1-\exp(-134L^2)$,
the following event holds simultaneously for all disjoint
\(A,A'\subseteq X\) with \(80L\le |A|,|A'|\le 100L\):
\[
Z_{A,A'}< |A||A'|-5000L^2.
\]

Condition on this event.  Let \(S,S'\subseteq X\) be sets of size \(100L\), and
let \(A\subseteq S\), \(A'\subseteq S'\) satisfy
$|A|,|A'|\ge 80L$.

Let $N:=|A||A'|$.
Suppose, for contradiction, that $d(a,a')>x
$, for every $a\in A,\ a'\in A'$.
Since \(\widehat d(S,S')\le x\), by definition there are at least
\(5000L^2\) edges between \(S\) and \(S'\) whose observed distance is at most
\(x\).  Among all possible such edges, at most
$|S||S'|-|A||A'|
=
10000L^2-N$
do not have one endpoint in \(A\) and the other endpoint in \(A'\).  Therefore at least $5000L^2-(10000L^2-N) = N-5000L^2$
edges with one endpoint in \(A\) and the other endpoint in \(A'\) have observed
distance at most \(x\).

By the contradiction assumption, every edge between \(A\) and \(A'\) has true
distance larger than \(x\).  Thus every edge between \(A\) and \(A'\) whose
observed distance is at most \(x\) must be corrupted.  Hence there are at least $N-5000L^2$
corrupted pairs between \(A\) and \(A'\), contradicting the high-probability
event.

Therefore there exist \(a\in A\) and \(a'\in A'\) such that $d(a,a')\le x$. This proves the lemma.
\end{proof}

\begin{lemma}[Chaining seeds]\label{lemma:chainingseeds}
Let \(S_1,S_2,\ldots,S_k\) be disjoint seed sets appearing in this
order along an ancestor chain, with scales
\(t_1 < t_2 < \cdots < t_k\) (that is, $S_1$ is a seed at scale $t_1$, $S_2$ is a seed at scale $t_2 > t_1$ and so on, and the corresponding clusters all lie on a leaf-to-root path in the hierarchical tree). Suppose $\widehat{d}(S_i, S_{i+1}) \leq u_i$ for all $i \in [k-1]$. Then $\widehat{d}(S_1, S_k) \leq \sum_{i \in [k-1]} u_i + 4\alpha t_k$.
\end{lemma}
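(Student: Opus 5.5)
The plan is to go from observed distances between seeds to true distances between their cores, chain those true distances using the triangle inequality in $d$, and then return to observed distances using the decrease-only property. I write $L=\log n$ throughout.

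\textbf{Step 1: extract true close pairs between consecutive cores.} Each $S_i$ is a seed at scale $t_i$, so it is a clique in $\widehat G_{t_i}$. By the Core-Outlier \Cref{lemma:coreoutliereff}, $S_i$ has a fixed core $\mathrm{core}(S_i)$ with $|\mathrm{core}(S_i)|\ge 80L$ and true diameter at most $\alpha t_i$. The sets $S_i,S_{i+1}$ are disjoint of size $100L$ and satisfy $\widehat d(S_i,S_{i+1})\le u_i$. Applying the Distance between cores \Cref{lemma:distancecores} with $A=\mathrm{core}(S_i)$ and $A'=\mathrm{core}(S_{i+1})$ gives points $b_i\in \mathrm{core}(S_i)$ and $a_{i+1}\in\mathrm{core}(S_{i+1})$ with $d(b_i,a_{i+1})\le u_i$. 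All of this holds simultaneously on the high-probability events of those two lemmas.

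\textbf{Step 2: chain in the true metric.} Fix arbitrary $x\in\mathrm{core}(S_1)$ and $y\in\mathrm{core}(S_k)$, and walk along
\[
x \to b_1 \to a_2 \to b_2 \to a_3 \to \cdots \to a_k \to y .
\]
The cross steps $b_i\to a_{i+1}$ cost at most $u_i$. The intra-core steps cost at most the core diameters:
\begin{itemize}
\item $x\to b_1$ costs at most $\alpha t_1$;
\item $a_i\to b_i$ costs at most $\alpha t_i$ for $2\le i\le k-1$;
\item $a_k\to y$ costs at most $\alpha t_k$.
\end{itemize}
The scales are distinct powers of two with $t_1<\cdots<t_k$, so $\sum_{i=1}^k \alpha t_i \le 2\alpha t_k$. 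Hence
\[
d(x,y)\le \sum_{i\in[k-1]} u_i + 2\alpha t_k \le \sum_{i\in[k-1]}u_i+4\alpha t_k .
\]
The factor $4$ leaves slack; for instance it absorbs a rounding of scales, or bounding each core-to-core hop by a full diameter on both sides.

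\textbf{Step 3: back to observed distances.} In the decrease-only model, $\widehat d(x,y)\le d(x,y)$ for every pair, corrupted or not. So every pair in $\mathrm{core}(S_1)\times\mathrm{core}(S_k)$ is an edge of $\widehat G_r$ for $r=\sum_i u_i+4\alpha t_k$. There are at least $80L\cdot 80L=6400L^2\ge 5000L^2$ such pairs, and they lie between the disjoint sets $S_1$ and $S_k$. By the definition of distance between two sets, $\widehat d(S_1,S_k)\le r$.

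\textbf{Main obstacle.} The delicate point is Step 1's bookkeeping rather than any calculation. Consecutive hops enter $\mathrm{core}(S_{i+1})$ at $a_{i+1}$ and leave at a possibly different point $b_{i+1}$, and this must be paid by the core diameter $\alpha t_{i+1}$. The geometric growth of the scales along the ancestor chain is what keeps the total additive loss at $O(\alpha t_k)$ instead of $O(k\,\alpha t_k)$. I would also check that the disjointness hypotheses of \Cref{lemma:distancecores} hold for each consecutive pair and for the pair $(S_1,S_k)$; they are given directly in the statement.
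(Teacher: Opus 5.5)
Your proposal is correct and follows essentially the same argument as the paper: apply \Cref{lemma:distancecores} to consecutive seeds, chain through the cores via the triangle inequality using $\sum_i t_i \le 2t_k$, and use the decrease-only property on the at least $6400\log^2 n$ core-to-core pairs to bound $\widehat d(S_1,S_k)$. Your bookkeeping is slightly tighter than the paper's: you pay $\alpha t_i$ per intermediate core instead of $2\alpha t_i$, which gives $2\alpha t_k$, comfortably within the stated $4\alpha t_k$.
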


\begin{proof}
The proof follows from the previous lemma, together with the observation that the cores at scale $t$ have small diameter (at most $\alpha t$). Apply the previous lemma to $S_i$ and $S_{i+1}$ - it follows that there exist two points $q_i,q_{i+1} \in \mathrm{core}(S_i), \mathrm{core}(S_{i+1})$ such that $d(q_i, q_{i+1}) \leq u_i$. Applying this for each $i$, and observing that $\mathrm{core}(S_i)$ has diameter $\alpha t_i$, we obtain points $q_1, q_k$ in the cores of $S_1,S_k$ which are at true distance at most $\sum_{i \in [k-1]} u_i + 2\alpha \sum_{i \neq 1,k} t_i$.

Now it follows that any two points in the cores of $S_1,S_k$ are at most $d: = \sum_{i \in [k-1]} u_i + 2\alpha \sum_{i \in [k]} t_i$ apart in true distance, since the diameters of these cores are at most $\alpha t_1$ and $\alpha t_k$ respectively.

This means there are at least $6400\log^2 n$ pairs of points in $S_1 \times S_k$ which are at true distance at most $d$, which in turn means that $\widehat{d}(S_1, S_k) \leq d$. Since scales are powers of $2$, we must have $t_1 + t_2 + \cdots + t_k \leq 2t_k$. This finishes the proof.   
\end{proof}
\begin{lemma}[Point-seed soundness]\label{lemma:pointseedsound}
There is an absolute constant \(c_{\mathrm{out}}\) such that, with high
probability over the corruption, the following holds simultaneously.  Let
\(S\) be a set of \(100\log n\) points, let \(A\subseteq S\) satisfy
$|A|\ge 80\log n$,
and let \(R\ge 0\). 
Then there are at most \(c_{\mathrm{out}}\log n\) points
\(p\in X\) such that $\widehat d(p,S)\le R$
but for every $a\in A$, $d(p,a)>R$.
\end{lemma}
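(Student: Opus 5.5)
The plan is a contradiction argument combined with a union bound over a bipartite ``corrupted block'' structure, in the style of \Cref{lemma:distancecores}. Take a bad point $p$: it satisfies $\widehat d(p,S)\le R$ but $d(p,a)>R$ for every $a\in A$. By the definition of $\widehat d(p,S)$, $p$ has at least $80\log n$ edges of observed length at most $R$ into $S$. At most $|S\setminus A|\le 20\log n$ of these leave $A$, so at least $|A|-20\log n\ge 60\log n$ of them land in $A$. Each such edge has true length greater than $R$ but observed length at most $R$, so it must be corrupted. Hence every bad point has at least $60\log n$ corrupted pairs into $A$.

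Now suppose there are more than $c_{\mathrm{out}}\log n$ bad points, and let $P$ be a set of exactly $k=c_{\mathrm{out}}\log n$ of them. We may shrink $A$ to a subset $A_0\subseteq A$ of size exactly $80\log n$. Each $p\in P$ still has at least $60\log n-20\log n=40\log n$ corrupted pairs into $A_0$. Points of $P$ that lie in $S$ only reduce the count by a constant, so we may restrict to $P\setminus S$, or simply absorb this into the constants. The result is two disjoint sets, $P$ of size $k$ and $A_0$ of size $80\log n$, with at least $k\cdot 40\log n$ corrupted pairs between them, i.e.\ a $1/2$ fraction of the $|P||A_0|$ pairs.

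Fix $P$ and $A_0$. The number of corrupted pairs between them is dominated by $\mathrm{Bin}(80k\log n,\,0.1)$, whose mean is $8k\log n$. A Chernoff bound then gives probability at most $\exp(-c\,k\log n)$ that the count reaches $40k\log n$; one can also bound it directly by $\binom{N}{N/2}(0.1)^{N/2}\le (4\cdot 0.1)^{N/2}$ with $N=80k\log n$, which gives $0.4^{40k\log n}$. Next we union bound over the choices. There are at most $n^{80\log n}$ choices of $A_0$ and at most $n^{k}$ choices of $P$. Note that neither $S$ nor $R$ needs to be enumerated, because the bad structure depends only on $(P,A_0)$. The total failure probability is therefore at most
\[
n^{80\log n}\,n^{k}\,0.4^{40k\log n}=\exp\bigl(80\log^2 n+k\log n-\Omega(k\log n)\cdot 40\cdot 0.9\bigr).
\]

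The main obstacle is that the $n^{80\log n}$ term from choosing $A_0$ is quasi-polynomial, so $k\log n$ must dominate $\log^2 n$. This holds precisely because $k=c_{\mathrm{out}}\log n$, giving an exponent of order $-c_{\mathrm{out}}\log^2 n$. Choosing $c_{\mathrm{out}}$ a large enough absolute constant (for example $c_{\mathrm{out}}=10$ already suffices with these numbers) makes the bound $\exp(-\Omega(\log^2 n))$, which is a high-probability event. Since this event does not refer to $S$, $A$, or $R$, the conclusion holds simultaneously for all of them. This completes the plan.
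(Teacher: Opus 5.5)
Your proposal is correct and follows essentially the same route as the paper. Both arguments fix a set of $\Theta(\log n)$ bad points and a large subset of $A$, bound the number of corrupted pairs between them by a binomial tail, and union bound over both sets, which gives an event independent of $S$ and $R$. Shrinking $A$ to exactly $80\log n$ points and using the crude $\binom{N}{N/2}(0.1)^{N/2}$ bound are only cosmetic departures from the paper, which applies a Chernoff bound over all $A$ with $|A|\in[80\log n,100\log n]$.

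The one muddled step is your treatment of bad points lying in $S$. There can be up to $20\log n$ of them, not a constant number, but no special handling is needed. A bad $p$ can never lie in $A$: taking $a=p$ gives $d(p,a)=0\le R$. Hence $P\cap A_0=\emptyset$ holds automatically, and you do not need to restrict to $P\setminus S$.
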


\begin{proof}
Let \(L=\log_2 n\).  We prove the statement with \(c_{\mathrm{out}}=11\).
Set $s:=\lceil 10L\rceil$.
We first prove a high-probability event.  Fix a set \(A\subseteq X\) with
$80L\le |A|\le 100L$ and a set \(P\subseteq X\setminus A\) with \(|P|=s\).  Let \(Z_{P,A}\) be the
number of corrupted pairs with one endpoint in \(P\) and the other endpoint in
\(A\). 
Let $N:=|P||A|=s|A|$. Then \(Z_{P,A}\) is stochastically dominated by
$B\sim \mathrm{Bin}(N,0.1)$. Let $\mu:=\EE[B]=0.1N$.
Since \(|A|\le 100L\), we have $\mu=0.1s|A|\le 10Ls$. Therefore $60Ls\ge 6\mu$.
By the Chernoff bound,
\[
\Pr[Z_{P,A}\ge 60Ls]
\le
\Pr[B\ge 60Ls]
\le
\Pr[B\ge 6\mu]
\le
\exp\left(-\frac{25}{7}\mu\right).
\]
Also, since \(|A|\ge 80L\), $\mu=0.1s|A|\ge 8Ls$.
Thus
\[
\Pr[Z_{P,A}\ge 60Ls]
\le
\exp\left(-\frac{25}{7}\cdot 8Ls\right)
=
\exp\left(-\frac{200}{7}Ls\right).
\]
Since \(s=\lceil 10L\rceil\ge 10L\), this gives $\Pr[Z_{P,A}\ge 60Ls]
\le
\exp\left(-\frac{2000}{7}L^2\right)$.
Now union bound over all choices of \(A\) and \(P\).  The number of possible
sets \(A\) is at most
$\sum_{r=80L}^{100L}\binom{n}{r}
\le
21L\cdot n^{100L}$.
The number of possible sets \(P\) of size \(s\) is at most $\binom{n}{s}\le n^s$.

Since \(L\ge 1\), we have \(s=\lceil 10L\rceil\le 11L\).  Hence the total
number of pairs \((A,P)\) is at most
$21L\cdot n^{100L+s}
\le
21L\cdot n^{111L}$.
Using \(L=\log_2 n\), this is
$21L\cdot 2^{111L^2}
=
\exp\left(111\ln(2)L^2+\ln(21L)\right)$.
For every \(L\ge 1\), \(\ln(21L)\le 4L^2\), and \(111\ln(2)<77\).  Therefore
the number of choices is at most $\exp(81L^2).$
Thus, with probability at least $1-\exp(81L^2)\exp\left(-\frac{2000}{7}L^2\right)
\ge
1-\exp(-200L^2)$,
the following event holds simultaneously for every such \(A\) and \(P\): $Z_{P,A}<60Ls$.

Condition on this event.  Let \(S\) be a set of \(100L\) points, let
\(A\subseteq S\) have size at least \(80L\), and let \(R\ge 0\).  Suppose for
contradiction that there are at least \(s=\lceil 10L\rceil\) points \(p\in X\)
such that $\widehat d(p,S)\le R$ but for every $a\in A$, $d(p,a)>R$.
Choose any set \(P\) of exactly \(s\) such points.
First note that \(P\cap A=\emptyset\).  Indeed, if \(p\in A\), then taking
\(a=p\) gives \(d(p,a)=0\le R\), contradicting the condition
\(d(p,a)>R\) for every \(a\in A\).
Now fix \(p\in P\).  Since \(\widehat d(p,S)\le R\), there are at least \(80L\)
points \(s'\in S\) such that the observed distance between \(p\) and \(s'\) is
at most \(R\).  At most
$|S\setminus A|\le 20L$ of these points lie outside \(A\). 
Hence there are at least $80L-20L=60L$
points \(a\in A\) whose observed distance from \(p\) is at most \(R\).  But by assumption \(d(p,a)>R\) for every \(a\in A\).  Therefore each of these at least
\(60L\) pairs \((p,a)\) must be corrupted.

Since this holds for every \(p\in P\), there are at least $60L\cdot |P|=60Ls$
corrupted pairs between \(P\) and \(A\).  This contradicts the high-probability event, which says that every such pair \(P,A\) has fewer than \(60Ls\) corrupted
pairs between them.
Therefore there are fewer than \(\lceil 10L\rceil\) bad points.  In particular,
for \(L\ge 1\), there are at most \(11L\) bad points. Thus the lemma holds
with \(c_{\mathrm{out}}=11\).
\end{proof}

\begin{lemma}\label{lemma:diameterhier}
For every cluster $C$ formed at scale $t'$, either $C$ has size at most $3c_{out} \log n$ or there is a core of $C$, obtained by removing at most $c_{out}\log n$ points from $C$, that has true diameter of at most $\beta t'$ for some constant $\beta = O(\alpha)$.

\end{lemma}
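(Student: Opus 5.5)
We define the core of a cluster $C$ with seed set $S$ at scale $t'$ to be the set of points of $C$ that are truly within distance $O(t')$ of $\mathrm{core}(S)$. We then show that at most $c_{\mathrm{out}}\log n$ points of $C$ fall outside it. The core of $S$ has diameter at most $\alpha t'$ by the Core-Outlier \Cref{lemma:coreoutliereff}, so a core defined this way has diameter $\beta t'$ with $\beta=O(\alpha)$, which is what \Cref{lemma:diameterhier} asks for. The small-cluster alternative handles clusters that are singletons or too small to contain a large seed. The natural tool is \Cref{lemma:pointseedsound}, applied once to the single set $A=\mathrm{core}(S)$ with a radius $R=O(t')$. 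It says that at most $c_{\mathrm{out}}\log n$ points $p$ in the whole of $X$ satisfy $\widehat d(p,S)\le R$ yet are truly farther than $R$ from every point of $\mathrm{core}(S)$.

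The proof therefore reduces to one claim: every point $p\in C$ satisfies $\widehat d(p,S)\le R$ with $R=c\,\alpha t'$ for a fixed constant $c$. Given this, every $p\in C$ outside the exceptional set has some $a\in\mathrm{core}(S)$ with $d(p,a)\le R$. Two such points are then at distance at most $2R+\alpha t'=O(t')$. We prove the claim by cases on how $p$ entered $C$.

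\emph{Case 1: $p$ is a singleton carved at scale $t'$.} Then \textsc{Carve} checked $\widehat d(p,S)\le 10\alpha t'$ directly.

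\emph{Case 2: $p\in S$, or $p$ was carved into the cluster of a seed $S''$ at some lower scale $t''<t'$.} Here we use the admissibility test. Consider the moment $p$ was first carved as a singleton, into a seed $S_1$ at scale $t_1$. Call the chain of seeds above it $S_1,S_2,\dots,S_k=S$, with scales $t_1<\cdots<t_k=t'$. Each consecutive seed was absorbed into the next, so $\widehat d(S_j,S_{j+1})\le 10\alpha t_{j+1}$. By \Cref{lemma:chainingseeds}, $\widehat d(S_1,S)\le \sum_j 10\alpha t_{j+1}+4\alpha t'\le 24\alpha t'$, using the geometric sum of the powers of two.

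When $p$ was carved into $S_1$, the enumeration ranged over all pairs $(S',t'')$ with $t''\ge t_1$. The set $S$ is a clique in $\widehat G_{t'}$, because seeds are either admissible cliques or inherited seeds that are cliques at a lower, hence also at this, scale. So that test gave $\widehat d(p,S)\le 10\alpha t_1+\widehat d(S_1,S)+\alpha t'=O(\alpha t')$.

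Two subcases need separate attention. If $S$ is disjoint from $S_1$, the test applies as stated. If $S=S_1$, meaning the seed persisted upward as the same set, Case 1 applies directly. Similarly, when $p$ itself lies in an admissible seed at scale $t_1$, we use the admissibility condition in place of the carving test.

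\textbf{Main obstacle.} The hard step is Case 2, for two reasons. First, \Cref{lemma:chainingseeds} needs disjoint consecutive seeds along an ancestor chain, and the paper's algorithm reuses seeds verbatim across levels. We collapse repeated identical seeds, which cost nothing, before chaining. Second, we must ensure that the bound in \Cref{lemma:pointseedsound} is applied once per cluster, not once per level. This is automatic, since the lemma holds simultaneously for all $S$, $A$ and $R$.

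Finally, a cluster with no seed is a singleton at scale $t'$ and hence small. Setting $\beta=2c+1$ times $\alpha$ finishes the proof.
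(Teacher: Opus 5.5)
Your proposal is correct and follows essentially the same route as the paper's proof: you show $\widehat d(p,S)=O(\alpha t')$ for every $p\in C$ using the seed chain, the chaining-seeds lemma, and the \textsc{Carve} test against the enumerated pair $(S,t')$, and then you apply point-seed soundness once with $A=\mathrm{core}(S)$. Your explicit handling of points that lie inside a germinated seed (via admissibility) and of the case $S=S_1$ spells out details the paper leaves implicit, and your constant $24\alpha t'$ for the chained seed distance is slightly tighter than the paper's $30\alpha t'$.
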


\begin{proof}
Let $S$ be the seed of cluster $C$. Consider $p \in C$. Let $S_p$ be the seed at the bottommost level in which $p$ was in a non-singleton cluster. Further, let the scale at this level be $t$.

By the algorithm description, since $p$ was accepted in the cluster of $S_p$, we must have $\widehat{d}(p,S) \leq 10\alpha t + \widehat{d}(S_p, S) + \alpha t'$ by the execution of the algorithm.

To bound $\widehat{d}(S_p, S)$, we look at the chain of seed vertices formed between $S_p$ and $S$, each going one level up. Let $S_p = S_1, S_2, \ldots, S_k = S$ be this chain. We assume that every $S_i$ is distinct: the only other case is that we may have indices $i$ for which $S_i = S_{i-1}$, however we can simply omit those indices and the rest of the proof follows verbatim. Let the corresponding scales be $t = t_1, \ldots, t_k = t'$. By the construction of the algorithm, we know that $\widehat{d}(S_i, S_{i +1})$ is at most $10 \alpha t_{i+1}$. Apply the chaining seeds lemma,~\Cref{lemma:chainingseeds} to $S_1,S_2 \ldots S_k$ with $u_i = 10\alpha t_{i+1}$. We obtain that $\widehat{d}(S_1, S_k) \leq 10\alpha(t_1 + t_2 + \cdots + t_k) + 4\alpha t_k \leq 30 \alpha t_k = 30\alpha t'$.  

Therefore we must have $\widehat{d}(p,S) \leq 10 \alpha t +  \widehat{d}(S_p, S) + \alpha t' \leq 10\alpha t + 30\alpha t' + \alpha t' \leq 50 \alpha t'$.

We have proved that for every $p \in C$, $\widehat{d}(p,S) \leq \alpha' t'$ where $\alpha' = 50 \alpha$. By Lemma~\ref{lemma:pointseedsound}, applied with
\(A=\mathrm{core}(S)\) and \(R=\alpha' t'\), all but
\(c_{\mathrm{out}}\log n\) points \(p\in C\) have true distance at most
\(\alpha't'\) from some point of \(\mathrm{core}(S)\). Finally, \(\mathrm{core}(S)\) itself has diameter at most \(\alpha t'\).
Thus, after removing the exceptional \(c_{\mathrm{out}}\log n\) points, any two
remaining points \(p,q\in C\) satisfy
\[
d(p,q)\le \alpha't' + \alpha t' + \alpha't'
= (2\alpha' + \alpha)t'.
\]
By taking \(\beta=200\alpha\), the remaining set
has true diameter at most \(\beta t'\).

Thus either $C$ has size at most $3c_{out} \log n$ or one can remove $c_{out}\log n$ points from $C$ to obtain a set with true diameter at most $\beta t'$ where $\beta = 200\alpha$.
\end{proof}

Now that we have proved the property about diameter, it remains to argue about the cost of the hierarchical partitioning. The following lemma will be the key to lower bound the cost of the true MST.

\begin{lemma}\label{lemma:sep}
Suppose there is a set $W$ of $m$ points in the metric space $(X,d)$ such that for any pair among these points, the true distance is at least $t$. Then the cost of the MST is at least $\frac{(m - 1)t}{2}$.
\end{lemma}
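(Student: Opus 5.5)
The plan is to use the same Euler-tour/path-shortcut argument as in \Cref{lemma:repair} and in the ball-carving lower bound from the technical overview. Fix a minimum spanning tree $T^*$ of $(X,d)$, so $\cost(T^*) = \MST(X)$. Take an Euler tour of $T^*$; it traverses every edge exactly twice, so its total length is $2\MST(X)$.

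Next, list the points of $W$ in the order in which they first appear on this tour, say $w_1, w_2, \ldots, w_m$. Shortcutting the tour to visit only these points in this order gives a walk $w_1 \to w_2 \to \cdots \to w_m$. By the triangle inequality for $d$, each step $d(w_j, w_{j+1})$ is at most the length of the tour segment between the first occurrences of $w_j$ and $w_{j+1}$. These segments are disjoint portions of the tour. Hence
\[
  \sum_{j=1}^{m-1} d(w_j,w_{j+1}) \;\le\; 2\,\MST(X).
\]

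Finally, every pair of points of $W$ is at true distance at least $t$. So each of the $m-1$ terms on the left is at least $t$, and the left side is at least $(m-1)t$. Combining the two bounds gives $(m-1)t \le 2\MST(X)$, which is exactly the claim $\MST(X) \ge \frac{(m-1)t}{2}$.

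No step here is a real obstacle. The only point needing care is that the shortcut segments are disjoint, so their lengths sum to at most the tour length. This holds because first occurrences along a single tour are strictly ordered.
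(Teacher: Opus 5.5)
Your proposal is correct and is essentially the paper's argument: take an Euler tour of a spanning tree (length twice its cost), cut it at the first occurrences of the points of $W$ into $m-1$ disjoint segments, each of length at least $t$ by the triangle inequality, and conclude $2\MST(X) \ge (m-1)t$. The only difference is that the paper runs this for an arbitrary spanning tree rather than a fixed MST, which changes nothing.
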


\begin{proof}
Consider any spanning tree $T$. Take an Euler tour of $T$, and shortcut the Euler tour by viewing it as paths between vertices of $W$. Each distance along such a path is at least $t$, and there are at least $(m - 1)$ such paths (in fact, except for the degenerate case $m = 1$, there are $m$ such paths). Thus the total length of the Euler tour is at least $(m-1)t$, and hence the cost of $T$ must be at least $\frac{(m-1)t}{2}$. 
\end{proof}

\begin{lemma}
Let $C_1, C_2$ be two distinct clusters produced by Algorithm~\ref{alg:hierarchical-partition} at scale $t$, with seed sets $S_1, S_2$. Then the true distance between any two points in the cores of $S_1$ and $S_2$ is at least $t$. 
\end{lemma}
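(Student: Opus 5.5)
We argue by contradiction. Suppose there are points $a \in \mathrm{core}(S_1)$ and $b \in \mathrm{core}(S_2)$ with $d(a,b) < t$. Without loss of generality, the while-loop of Algorithm~\ref{alg:hierarchical-partition} processes $S_1$ before $S_2$ at this scale. The plan is to show that $\textsc{Carve}(S_1,t,\mathcal L)$ would then have absorbed part of $S_2$, so $C_2$ could not be a separate cluster. Throughout we use the high-probability events of \Cref{lemma:coreoutliereff} and \Cref{lemma:distancecores}.

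\textbf{Step 1: the two cores are close.} Each seed in use at scale $t$ is either fresh at scale $t$ or was created at a lower scale $t_0 \le t$. In both cases it is a clique in $\widehat G_t$, and its core has true diameter at most $\alpha t$. For every $x \in \mathrm{core}(S_1)$ and $y \in \mathrm{core}(S_2)$, the triangle inequality through $a$ and $b$ gives
\[
d(x,y) \le \alpha t + t + \alpha t = (2\alpha+1)t = 5t.
\]
Corruptions only decrease distances, so all of these at least $6400\log^2 n \ge 5000\log^2 n$ pairs have observed distance at most $5t$. Hence $\widehat d(S_1,S_2) \le 5t \le 10\alpha t$. In the same way, every $p \in \mathrm{core}(S_2)$ has at least $80\log n$ observed edges of length at most $5t$ into $\mathrm{core}(S_1) \subseteq S_1$, so $\widehat d(p,S_1) \le 5t \le 10\alpha t$.

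\textbf{Step 2: case split on the type of $S_2$.} If $S_2$ is an old seed, it still lies in $\mathcal L$ when $S_1$ is carved, since sets leave $\mathcal L$ only when carved. Because $\widehat d(S_1,S_2) \le 10\alpha t$, \textsc{Carve} adds $S_2$ to the cluster of $S_1$ and removes it from $\mathcal L$. Then $S_2$ can never seed its own cluster, a contradiction.

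If instead $S_2$ is a fresh admissible set of singletons, each $p \in \mathrm{core}(S_2)$ is still an uncarved singleton in $\mathcal L$ at that moment. We show that $p$ passes the carving test for $S_1$. The first condition, $\widehat d(p,S_1) \le 10\alpha t$, is Step 1.

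\textbf{Step 3: the triangle-inequality test (main step).} Fix any enumerated pair $(S',t')$ with $t' \ge t$ and $S'$ a clique in $\widehat G_{t'}$ disjoint from $S_1$. Let $x = \widehat d(S_1,S')$. By \Cref{lemma:distancecores} applied to $\mathrm{core}(S_1)$ and $\mathrm{core}(S')$, there are $a_1 \in \mathrm{core}(S_1)$ and $a' \in \mathrm{core}(S')$ with $d(a_1,a') \le x$. By the core-outlier lemma, $\mathrm{core}(S')$ has diameter at most $\alpha t'$. Hence every $c' \in \mathrm{core}(S')$ satisfies
\[
d(p,c') \le d(p,a_1) + d(a_1,a') + d(a',c') \le 5t + x + \alpha t'.
\]
Since $|\mathrm{core}(S')| \ge 80\log n$ and corruptions only decrease distances, this yields
\[
\widehat d(p,S') \le 5t + \widehat d(S_1,S') + \alpha t' \le 10\alpha t + \widehat d(S_1,S') + \alpha t'.
\]
So every such $p$ is carved into $C_1$. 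But $p \in S_2 \subseteq C_2$ and $C_1 \ne C_2$, a contradiction.

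The main care needed is in Step 3. Its key point is to route through the fixed core points of $S_1$ via \Cref{lemma:distancecores}, rather than through $p$'s own observed distances. We must also check that these high-probability events hold simultaneously for all seeds, whatever scale they were created at.
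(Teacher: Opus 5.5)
Your proof is correct and follows essentially the same route as the paper's. Both argue by contradiction, use closeness of the two cores to get $\widehat d(S_1,S_2)\le 10\alpha t$, and split on whether $S_2$ is an old seed (absorbed by the seed-carving step of \textsc{Carve}) or a fresh set of singletons (whose core points pass the triangle-inequality test via \Cref{lemma:distancecores} and the $\alpha t'$ diameter of $\mathrm{core}(S')$); you are just more explicit in justifying $\widehat d(p,S_1)\le 10\alpha t$ and in using decrease-only corruption to turn true-distance bounds into observed-distance bounds.
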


\begin{proof}
Suppose this is not the case, so that there exist two points in the cores of $S_1,S_2$ that are at distance at most $t$. Then $\mathrm{core}(S_1) \cup \mathrm{core}(S_2)$ has a diameter of at most $t + 2\alpha t \leq 10\alpha t$. Note that this implies $\widehat{d}(S_1, S_2) \leq 10 \alpha t$.

Assume without loss of generality that $S_1$ was considered as a seed before $S_2$ by the algorithm. There are two cases. The first is that $S_2$ was already a seed from the previous level (so that it is an object in $\mathcal{L}$). If so, since $\widehat{d}(S_1, S_2) \leq 10\alpha t$, $S_2$ should have been carved while considering $S_1$, a contradiction.

The second case is that $S_2$ is a ``newly germinated'' seed at this level by the coming together of $100\log n$ singletons. Pick any $s_2 \in \mathrm{core}(S_2)$. We have $\widehat{d}(s_2, S_1) \leq 10 \alpha t$. Second, for any other seed $S'$ at scale $t'$ such that $\widehat{d}(S_1, S') = x$, we know by \Cref{lemma:distancecores} that there exist points $q_1 \in \mathrm{core}(S_1), q_2 \in \mathrm{core}(S')$ such that the true distance $d(q_1, q_2) \leq x$. Further since $s_2 \in \mathrm{core}(S_2)$ and $q_1 \in \mathrm{core}(S_1)$ and $\mathrm{core}(S_1) \cup \mathrm{core}(S_2)$ has diameter $10\alpha t$, we have $d(s_2, q_1) \leq 10\alpha t$. But this means that the distance of $s_2$ from any point in $\mathrm{core}(S')$ is at most $10\alpha t + x + \alpha t'$. This in turn would imply $\widehat{d}(s_2, S') \leq 10\alpha t + x + \alpha t'$, and hence $s_2$ would have been carved while considering $S_1$ as a seed, a contradiction.
\end{proof}

\begin{lemma}
Let $k_1$ be the number of non-singleton clusters at scale $t$. Then the cost of MST is at least $\frac{(k_1-1)t}{2}$.

\end{lemma}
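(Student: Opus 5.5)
The plan is to combine the two preceding lemmas directly. Let $C_1,\dots,C_{k_1}$ be the non-singleton clusters produced at scale $t$, with seed sets $S_1,\dots,S_{k_1}$. Every non-singleton cluster is generated by a seed set of size $100\log n$: either a fresh admissible clique of singletons, or a seed inherited from the previous level through $\mathcal L$. Seeds of distinct clusters are disjoint. This is because carving a cluster removes all sets containing its points from $\mathcal L$, and seeds in $\mathcal L$ come from distinct clusters of the previous level.

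For each $j$, I will pick a representative $w_j\in\mathrm{core}(S_j)$. This set is nonempty, since by the core-outlier lemma it has at least $80\log n$ vertices. Let $W=\{w_1,\dots,w_{k_1}\}$. By the preceding lemma, for $j\neq j'$ the true distance between any point of $\mathrm{core}(S_j)$ and any point of $\mathrm{core}(S_{j'})$ is at least $t$. In particular $d(w_j,w_{j'})\ge t$, so the $w_j$ are distinct and $|W|=k_1$. Applying \Cref{lemma:sep} to $W$ with separation $t$ then gives $\MST(X)\ge (k_1-1)t/2$.

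The main point to check is that the previous lemma really applies to every pair of non-singleton clusters at scale $t$. In particular, it must cover inherited seeds whose cores were defined at a lower scale. Such a seed $S$ is a clique in $\widehat G_{t_0}$ for some $t_0<t$, hence also a clique in $\widehat G_t$. Its fixed core has diameter at most $\alpha t_0\le\alpha t$, which is all that the previous lemma's argument uses. I will also note that the statement holds with high probability: it inherits the high-probability events of the core-outlier lemma and \Cref{lemma:distancecores} that the previous lemma relies on. If $k_1\le1$ the bound is trivial.
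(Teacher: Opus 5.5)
Your proposal is correct and follows the paper's proof. Like the paper, you pick one point from the core of each non-singleton cluster's seed, use the preceding separation lemma to get pairwise true distance at least $t$, and apply \Cref{lemma:sep} with $m=k_1$. Your extra checks are details the paper leaves implicit, and they are sound: seeds are disjoint, cores are nonempty, and an inherited seed's core has diameter at most $\alpha t_0\le\alpha t$.
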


\begin{proof}
Choose one point from the core of the seed of each non-singleton cluster at
scale \(t\).  By the previous separation lemma, these \(k_1\) points are
pairwise at true distance at least \(t\).  Applying Lemma~\ref{lemma:sep} with
\(m=k_1\) yields
$
\mathrm{MST}\ge \frac{(k_1-1)t}{2}$.
\end{proof}

\begin{lemma}
Let $k_2$ be the number of singleton clusters at scale $t$. Then the cost of MST is at least $\frac{1}{4}\left(\frac{k_2}{100\log n} - 1\right)t$.

\end{lemma}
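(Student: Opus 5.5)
We need to show: if $k_2$ singleton clusters remain at scale $t$, then $\MST(X)\ge \frac14\bigl(\frac{k_2}{100\log n}-1\bigr)t$. The plan is to run the greedy ball-carving argument from the overview in the true metric, restricted to the set $Q$ of the $k_2$ leftover singletons. Repeatedly pick an uncovered point $u\in Q$ and cover all points of $Q$ within true distance $t/2$ of $u$. This produces centers $u_1,\dots,u_x$ that are pairwise at true distance greater than $t/2$. Applying \Cref{lemma:sep} with separation $t/2$ gives $\MST(X)\ge \frac{(x-1)t}{4}$. It then suffices to show $x\ge k_2/(100\log n)$, i.e., that every ball contains fewer than $100\log n$ points of $Q$.

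The key step is to rule out a ball containing $100\log n$ leftover singletons. Such a set $B$ has true diameter at most $t$, so it is a clique in $G_t$. Since corruptions only decrease distances, $B$ is also a clique in $\widehat G_t$. The loop at scale $t$ terminates only when $\mathcal L$ contains no seed and no admissible set of $100\log n$ singletons. So we need to show $B$ is admissible, which gives a contradiction.

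\textbf{Main obstacle: admissibility.} For every disjoint $S'$ that is a clique in $\widehat G_{t'}$ with $t'\ge t$, and every $p\in B$, we must verify
\[
\widehat d(p,S')\le 10\alpha t+\widehat d(B,S')+\alpha t'.
\]
\begin{itemize}
\item Let $x'=\widehat d(B,S')$. Apply \Cref{lemma:distancecores} with $A=B$ (all of $B$, of size $\ge 80\log n$) and $A'=\mathrm{core}(S')$. This yields $b\in B$ and $a'\in\mathrm{core}(S')$ with $d(b,a')\le x'$.
\item By the triangle inequality, every $p\in B$ and every $q\in\mathrm{core}(S')$ satisfy $d(p,q)\le t+x'+\alpha t'$.
\item Since $|\mathrm{core}(S')|\ge 80\log n$, these are at least $80\log n$ pairs from $p$ to $S'$ with true distance at most $t+x'+\alpha t'$.
\item Decrease-only corruption makes observed distances no larger than true ones. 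Hence $\widehat d(p,S')\le t+x'+\alpha t'\le 10\alpha t+x'+\alpha t'$.
\end{itemize}
This holds on the high-probability events of \Cref{lemma:coreoutliereff} and \Cref{lemma:distancecores}, which we condition on.

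One subtlety needs checking: the points of $B$ must all still be singletons in $\mathcal L$ at termination. This holds by definition of $Q$, since they are exactly the remaining unclustered singletons. Given that, $B$ would be admissible and the loop could not have terminated, a contradiction. Therefore each ball holds fewer than $100\log n$ points of $Q$, so $x\ge k_2/(100\log n)$, and the bound follows.
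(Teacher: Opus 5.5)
Your proposal is correct and follows the paper's proof essentially step for step. Both arguments show that $100\log n$ leftover singletons inside a true ball of radius $t/2$ would form an admissible clique in $\widehat G_t$ (via \Cref{lemma:distancecores}, the triangle inequality, and decrease-only corruption), contradicting termination of the loop, and then greedily carve radius-$t/2$ balls and apply \Cref{lemma:sep}. Your choice $A=B$ in \Cref{lemma:distancecores}, rather than the core of $B$, is a harmless and slightly cleaner variant of the paper's step.
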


\begin{proof}
Consider the situation when the algorithm at scale $t$ stopped, and suppose that there exists some set $S$ of remaining $100\log n$ singleton points that form a clique  in $G_t$. We show that this cannot happen. This is because, first, they would form a clique in $\widehat{G}_t$ as well. Second, $S$ would be admissible. Fix any set $S'$ which forms a clique at scale $t' \geq t$, and suppose $\widehat{d}(S,S') = x$. By the distance between cores lemma~\Cref{lemma:distancecores}, we know that there exist two points $q_1,q_2$ in the cores of $S$ and $S'$ at true distance at most $x$. $S$ itself is a clique hence its diameter is at most $t$, and the diameter of the core of $S'$ is at most $\alpha t'$. Applying the triangle inequality, for any $p \in S$, we must have $\widehat{d}(p,S') \leq t + x + \alpha t' < 10\alpha t + \widehat{d}(S, S') + \alpha t'$, and hence, admissibility follows.

Now consider any such singleton point $u$. Let $B\left(u,\frac{t}{2}\right)$ denote the ball of radius $\frac{t}{2}$ around $u$ in the true metric restricted to only the remaining singleton points. Then this entire ball forms a clique in $G_t$, and hence $|B\left(u,\frac{t}{2}\right)| \leq 100 \log n$. Repeatedly removing such balls, we obtain a subset of singletons of size at least $\frac{k_2}{100 \log n}$ such that every pair is at distance at least $\frac{t}{2}$. Applying~\Cref{lemma:sep}, the proof is complete. 
\end{proof}

\begin{lemma}
The cost of the hierarchical partition is bounded by $O(MST \log n \log \Delta')$.
\end{lemma}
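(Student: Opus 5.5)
The plan is to bound each level's contribution $2^i n_i$ separately by $O(\MST(X)\log n)$, and then sum over the $\ell+1 = O(\log \Delta')$ levels. At a fixed level $i$ with scale $t=2^i$ for $1\le i\le \ell-1$, the clusters in $\mathcal{C}_i$ are of two kinds: the $k_1$ non-singleton clusters, each with its own seed set, and the $k_2$ singleton clusters. Thus $n_i = k_1 + k_2$.

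The two preceding lemmas bound these counts directly. They give $\MST(X) \ge \frac{(k_1-1)t}{2}$ and $\MST(X) \ge \frac14\bigl(\frac{k_2}{100\log n}-1\bigr)t$. Rearranging,
\[
t k_1 \le 2\MST(X) + t, \qquad t k_2 \le 100\log n\,\bigl(4\MST(X) + t\bigr).
\]
So $2^i n_i \le O(\log n)\,(\MST(X) + 2^i)$. Since $i \le \ell-1$ and $\Delta' \le 2\Delta$, we have $2^i \le 2^{\ell-1} \le \Delta' \le 2\Delta$. Also $\Delta \le \MST(X)$, because the MST must connect the two points realizing the diameter. Hence the additive $t$ terms are absorbed, and each such level contributes $O(\MST(X)\log n)$.

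The boundary levels need a short separate check. Level $0$ consists of $n$ singletons, so it contributes $n \cdot 1$. Every distance is at least $1$, so any spanning tree has $n-1$ edges each of length at least $1$; thus $n \le \MST(X)+1 \le 2\MST(X)$. Level $\ell$ is the single cluster $X$, contributing $2^\ell \le 2\Delta' \le 4\Delta \le 4\MST(X)$.

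Summing over all $\ell+1 = O(\log\Delta')$ levels then gives $\cost(\mathcal{H}) = O(\MST(X)\log n\log\Delta')$. The only point requiring care is that the per-level lemmas hold simultaneously over all levels with high probability. This follows because they rest on the high-probability events of \Cref{lemma:coreoutliereff} and \Cref{lemma:distancecores}, which are uniform over all sets and scales. The additive $+t$ slack is handled by the comparison $2^i \le \Delta' = O(\MST(X))$, so I do not expect a real obstacle; the proof is essentially bookkeeping on top of the two preceding lemmas.
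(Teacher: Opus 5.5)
Your proposal is correct and follows the paper's proof. You bound each level by $2^i n_i \le O(\log n)(\MST(X) + 2^i)$ using the two preceding lemmas, absorb the additive $2^i$ via $2^i \le \Delta' \le 2\Delta \le 2\MST(X)$, and sum over the $O(\log \Delta')$ levels; your explicit handling of the boundary levels $0$ and $\ell$, which the paper leaves implicit, is a welcome bit of extra care.
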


\begin{proof}
If there are $n_t$ sets at level $t = 2^i$, then the cost of the hierarchical partitioning is at most $tn_t$. From the previous two lemmas, $tn_t$ is at most $O(\log n)(MST + t)$. Adding across all scales $t$, and observing that $MST \geq \Delta$, we obtain the lemma.
\end{proof}

Finally, our approximation guarantee of $O(\log^2 n\log^2 \Delta')$ now follows by combining with~\Cref{alg:rounding}.

\subsection{Removing the assumptions on distances and reducing the aspect ratio.}
Now we show how to remove the assumption that the algorithm knows $\Delta'$, and obtain a $\mathrm{polylog}(n)$ approximation.

First we compute $U$, the smallest value $t$  such that $\widehat{G}_{t}$ is fully a clique. Equivalently, $U = \max_{v_1,v_2 \in X} \widehat{d}(v_1,v_2)$. Since we are in the decrease-only setting, clearly $U \leq \Delta$. But we can show $U \geq \frac{\Delta}{2}$ with high probability as follows. 

Let $(a,b)$ be a pair of points in $X$ that are at true distance $\Delta$ apart. Then for any other point $c$, either the pair of vertices $(a,c)$ are at least $\frac{\Delta}{2}$ apart or the pair of vertices $(b,c)$ are at least $\frac{\Delta}{2}$ apart. Thus to get $U < \frac{\Delta}{2}$, at least $n - 2$ edges incident to $a$ and $b$ must be corrupted. The probability of this is at most $p^{n-2}$, so for any $p < 1$, this does not happen with high probability. 

Next, we will reduce the aspect ratio to $n^2$ as follows. Let $\rho = \frac{2U}{n^2}$. From the given metric $d$, consider the metric $d_{\rho}$ by setting $d_{\rho}(u,v) = \max(d(u,v), \rho)$. The corresponding corrupted distances $\widehat{d}_{\rho}(u,v)$ are given as $\widehat{d}_{\rho}(u,v)= \max(\widehat{d}(u,v), \rho)$. Now we are guaranteed that the minimum distance is at least $\frac{2U}{n^2}$ and the maximum distance is at most $2U$. The value of the true MST increases by at most $\frac{2U}{n^2} \cdot (n - 1) \leq O(\MST(X)/n)$, so this preserves any approximation. Finally, scale each distance by $\frac{1}{\rho}$. The minimum distance is now at least $1$, and the maximum distance is at most $n^2$. We can now set $\Delta' = n^2$ to use our algorithm and obtain an $O(\mathrm{\polylog}(n))$ approximation, which we can again scale back by $\rho$.

Finally, we note that all high-probability events used in this section fail with probability $n^{-\omega(1)}$. Since any spanning tree has cost at most $(n-1)\Delta\leq (n-1)\MST(X)$, their failure contributes only $o(\MST(X))$ to the expected cost.

\section{Lower Bounds.}
\label{sec:lower}
In this section, we prove impossibility results, proving that some of our assumptions on parameters were necessary. 

\paragraph{General model.}
First, we show that the assumption $p < 1/2$ is necessary in the general model.%

\generalhardness*

\begin{proof}
Let $a = \lfloor n/2 \rfloor$ and $b = n-a$. 
We first describe a distribution over true metrics. Choose a uniformly random partition
\[
    X = A \sqcup B, \qquad |A|=a,\ |B|=b.
\]
Given this partition, define a metric $d=d_{A,B}$ by
\[
    d(u,v)=
    \begin{cases}
        0, & \text{if } u,v \text{ lie in the same part},\\
        1, & \text{if } u,v \text{ lie in different parts}.
    \end{cases}
\]

For this metric, an MST consists of a spanning tree inside $A$, a spanning tree inside $B$, and one edge crossing between $A$ and $B$. Hence
$
    \MST(X)=1.
$

We now describe the adversary.
When a pair is uncorrupted, its observed value is its true distance. When a pair is corrupted, the adversary acts as follows. If $u,v$ lie in the same part, the adversary sets
$
    \dhat(u,v)=1.
$
If $u,v$ lie in different parts, the adversary sets
$
    \dhat(u,v)=0$
with probability
$
    q=(1-p)/p
$
and sets $\dhat(u,v)=1$ otherwise. Since $p\geq 1/2$, we have $q \in [0,1]$.

Now consider any fixed pair $\{u,v\}$. If $u,v$ lie in the same part, then
\[
    \Pr[\dhat(u,v)=0]=1-p,
\]
because the pair has observed value $0$ exactly when it is uncorrupted. If $u,v$ lie in different parts, then
\[
    \Pr[\dhat(u,v)=0]=p \cdot q
    = p \cdot \frac{1-p}{p}
    = 1-p.
\]
Thus, regardless of whether $u,v$ lie in the same part or in different parts,
\[
    \Pr[\dhat(u,v)=0]=1-p,
    \qquad
    \Pr[\dhat(u,v)=1]=p.
\]
Moreover, these observations are independent across pairs. Therefore the distribution of the entire corrupted distance matrix $\dhat$ is independent of the hidden partition $A \sqcup B$.

It follows that the output tree $T=\mathcal A(\dhat)$ is independent of the hidden partition. Now fix any tree $T$ on $X$. The expected number of crossing edges of $T$ is at least $n/4$.
Hence, for every fixed tree $T$,
\[
    \EE_{A,B}[\cost(T)] \geq \frac{n}{4}.
\]
Since the random tree $\mathcal A(\dhat)$ is independent of the hidden partition, the same bound holds after averaging over the corrupted distance matrix and over the internal randomness of $\mathcal A$:
\[
    \EE_{A,B,\dhat,\mathcal A}\big[\cost(\mathcal A(\dhat))\big]
    \geq \frac{n}{4}.
\]

Therefore, by averaging, there exists a fixed partition $A \sqcup B$ such that the corresponding metric $d=d_{A,B}$ satisfies
\[
    \EE_{\dhat,\mathcal A}\big[\cost(\mathcal A(\dhat))\big]
    \geq \frac{n}{4}
\]
under the adversary above. Since $\MST(X) = 1$ for this metric, we get
\[
    \EE_{\dhat,\mathcal A}\big[\cost(\mathcal A(\dhat))\big]
    \geq \frac{n}{4}\cdot \MST(X).
\]
\end{proof}

\paragraph{Decrease-only.}

For the decrease-only model, 
we relate our problem to {\em refuting $q$-colorability of $G(n,1/2)$}. 
It is well known that $k = \Theta(\sqrt{n})$ is the threshold for the existence of a polynomial-time algorithm refuting that $G(n, 1/2)$ does not have a $k$-clique; more formally, when $k \geq n^{1/2 + \eps}$ for some constant $\eps > 0$, there is a polynomial-time algorithm $P$ such that $P(G) = 1$ whenever $G$ has a clique of size $k$ and $P(G) = 0$ with high probability when $G \sim G(n, 1/2)$, and if $k \leq n^{1/2 - \eps}$, such an algorithm is unlikely to exist, conjectured as the Planted Clique Hypothesis~\cite{Jerrum1992LargeCliques, Kucera1995ExpectedComplexity, AlonKrivelevichSudakov1998HiddenClique} and evidenced by the strong Sum-of-Squares (SoS) lower bound~\cite{barak2019nearly}. 

How about refuting $q$-colorability of $G(n, 1/2)$?
The chromatic number of $G(n, 1/2)$ is $\Theta(n/\log n)$ with high probability~\cite{bollobas1988chromatic}, and it is natural to suspect that $q = n/k = \sqrt{n}$ is again the threshold of polynomial-time refutability.
While this remains an interesting open problem, Kothari, Vempala, Wein, and Xu~\cite{kothari2023planted}, for $q = n^{2/3+\eps}$ for any constant $\eps > 0$, proposed the following {\em quietly planted} distribution $P_q(n)$ for $q$-colorable graphs. 

\begin{definition} [Definition 2.18 of~\cite{kothari2023planted}] Suppose $n, q$ are positive integers. To each of the $n$ vertices, independently assign a label $(a, b) \in [q] \times [q]$ uniformly at random. Conditioned on the labels, do the following independently for each pair of distinct vertices $\{u, v \}$: denote the two vertex labels by $(a_1, b_1)$ and $(a_2, b_2)$; if $a_1 = a_2$ then do not include the edge $(u, v)$; if $a_1 \neq a_2$ and $b_1 = b_2$ then include the edge $(u, v)$; otherwise include the edge $(u, v)$ with probability $1/2$. 
\end{definition}

Then they gave partial evidence that no polynomial-time algorithm will (weakly) separate $G(n,1/2)$ and $P_q(n)$ by ruling out algorithms based on degree-$o(\log n / \log \log n)^2$ polynomials. We refer the reader to \cite{kothari2023planted} for their exact definition of weak separation, but it includes a decision (i.e., outputs {\em accept} or {\em reject}) algorithm $A$ that accepts $G \sim P(n,q)$ with probability at least $0.7$ and accepts $G \sim G(n,1/2)$ with probability at most $0.3$. 
Such partial evidence based on low-degree polynomials, motivated by the fact that $O(\log n)$-degree polynomials capture the power of many polynomial-time algorithms for average-case tasks, has been the focus of active research~\cite{hopkins2018statistical, buhai2025quasi, wein2025computationalcomplexitystatisticsnew}. 

We prove the following lemma, showing that a $\widetilde{O}(n/q)$-approximation algorithm for our problem can be used to separate these two distributions, which suggests that achieving such an approximation likely requires superpolynomial time. 

\begin{theorem}
Let $p = 1/2$, $q = n^{2/3 + \eps}$ for any constant $\eps > 0$, and suppose there exists an $n/(q \log^3 n)$-approximation for the MST in the decrease-only model. Then there is an algorithm $\mathcal A$ that accepts $G \sim P(n, q)$ with probability $1-O(1/\log n)$ and rejects $G \sim G(n, 1/2)$ with high probability.  
\label{thm:hard-decrease}
\end{theorem}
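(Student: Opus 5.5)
The plan is to turn the planted distribution $P_q(n)$ of \cite{kothari2023planted} into a decrease-only corrupted instance in which an edge of $G$ means observed distance $1$ and a non-edge means observed distance $0$. On a planted instance, a good approximate tree will then cut the vertex set into few independent sets of $G$ with few deletions. On $G(n,1/2)$ this is impossible, because its chromatic number is $\Theta(n/\log n)$ \cite{bollobas1988chromatic}.

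\textbf{Step 1: embedding $P_q(n)$ as a decrease-only instance.} Fix the labels $(a_v,b_v)$. Let the true metric be $d(u,v)=0$ if $a_u=a_v$ and $d(u,v)=1$ otherwise. This is a metric whose MST costs (number of distinct $a$-labels) $-\,1 \le q-1$. Corrupt each pair independently with probability $p=1/2$, and let the adversary act as follows:
\begin{itemize}
\item on a corrupted pair with $a_u\ne a_v$ and $b_u\ne b_v$, lower the distance to $0$;
\item on every other corrupted pair, leave the value unchanged.
\end{itemize}
This adversary is admissible in the decrease-only model. The observed values are then:
\begin{itemize}
\item $\dhat(u,v)=0$ whenever $a_u=a_v$;
\item $\dhat(u,v)=1$ whenever $a_u\ne a_v$ and $b_u=b_v$;
\item otherwise $\dhat(u,v)$ is uniform on $\{0,1\}$, independently across pairs.
\end{itemize}
So the graph $\{(u,v):\dhat(u,v)=1\}$ is distributed exactly as $P_q(n)$. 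The distinguisher $\mathcal A$ therefore takes $G$, sets $\dhat=1$ on edges of $G$ and $\dhat=0$ on non-edges, and runs the assumed approximation algorithm to obtain a spanning tree $T$.

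\textbf{Step 2: planted case.} Conditioned on the labels, the approximation guarantee gives
\[
\EE[\cost(T)]\le \frac{n}{q\log^3 n}(q-1)\le \frac{n}{\log^3 n}.
\]
By Markov's inequality, with probability $1-O(1/\log n)$ the tree $T$ has at most $n/\log^2 n$ edges with $a_u\ne a_v$. Deleting these edges leaves components that each lie inside a single $a$-class. Since $a_u=a_v$ implies $(u,v)$ is not an edge of $G$, each such component is an independent set of $G$.

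The test in $\mathcal A$ asks whether $T$ admits a small set of edge deletions after which every component is independent in $G$. This is a multicut instance on the tree $T$, with one demand pair for each edge of $G$. The optimum is at most $n/\log^2 n$ in the planted case. The Garg--Vazirani--Yannakakis $2$-approximation for multicut in trees runs in polynomial time and returns a cut of size at most $2n/\log^2 n$. So $\mathcal A$ accepts iff the returned multicut has size at most $2n/\log^2 n$, and in the planted case it accepts with probability $1-O(1/\log n)$.

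\textbf{Step 3: null case.} Suppose $\mathcal A$ accepts on some $G$. Then deleting $k\le 2n/\log^2 n$ edges of $T$ leaves $k+1$ components, each independent in $G$, so $\chi(G)\le k+1 = o(n/\log n)$. But with high probability $\chi(G(n,1/2))\ge n/(2\log n)$, so $\mathcal A$ rejects with high probability. This holds whatever (possibly randomized) tree the approximation algorithm outputs, since the bound applies to every tree simultaneously.

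The main obstacle is making the acceptance test polynomial-time. The exact minimum number of deletions is a tree-multicut problem and is NP-hard in general. The constant-factor approximation resolves this: the planted and null cases are separated by a gap of roughly a $\log n$ factor ($O(n/\log^2 n)$ versus $\Omega(n/\log n)$), which comfortably absorbs the factor $2$.
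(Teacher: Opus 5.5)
Your proof is correct. It matches the paper on the embedding of $P_q(n)$ as a decrease-only instance (the same adversary), the Markov step, and the null case via $\chi(G(n,1/2))=\Omega(n/\log n)$. You diverge only in the polynomial-time test that turns $T$ into a coloring. You cast this as multicut on $T$, with the edges of $G$ as demand pairs, and invoke the Garg--Vazirani--Yannakakis $2$-approximation. The paper avoids multicut entirely: it takes a preorder traversal of $T$ and greedily cuts the resulting sequence into maximal prefixes that are independent in $G$. Greedy is exactly optimal for splitting a sequence into contiguous independent blocks. Linearizing loses only a factor of $2$, because a consecutive pair with different $a$-labels forces the Euler-tour segment between them to traverse a tree edge with different $a$-labels at its ends, and the tour traverses each tree edge twice. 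The number of colors is therefore at most $1$ plus twice the number of such edges. This is the same factor-$2$ slack you get from GVY, but it comes from a self-contained argument. The ``main obstacle'' you identify disappears once you are willing to work with contiguous pieces of a linear order rather than subtrees. Your version has the mild appeal of using the components of $T$ directly as color classes, but it relies on an external approximation algorithm for an NP-hard problem where a one-line greedy suffices.
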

\begin{proof}
Given an $n/(q \log^3 n)$ approximation for the MST, the algorithm $\mathcal A$, given a graph $G = (V, E)$, does the following: 

\begin{enumerate}
    \item Construct $\hat d$ such that $\hat d(u, v) = 1$ if $(u, v) \in E$ and $0$ otherwise. 
    \item Run the $n/(q \log^3 n)$ approximation algorithm on $\hat d$ to obtain a spanning tree $T$ of $K_n$.
    \item Let $S = (v_1, \dots, v_n)$ be the preorder traversal of $T$. 
    \item Perform the following {\em greedy coloring} of $G$:  compute the longest prefix of $S$ that forms an independent set, color it using a new color, and remove it from $S$, until $S$ becomes empty. 
    \item If the number of colors used is at most $n/\log^2 n$, accept. Otherwise, reject. 
\end{enumerate}
The fact that $G \sim G(n,1/2)$ is rejected with high probability is clear, because the chromatic number of $G \sim G(n, 1/2)$ is $\Omega(n/\log n)$ with high probability~\cite{bollobas1988chromatic}. 

Now we claim that $G \sim P(n, q)$ will be accepted with probability $1-O(1/\log n)$. One can view $\hat d$ as the corrupted metric generated by the following procedure. From $P(n, q)$, after sampling label $(a, b)$ for every vertex, the true metric $d$ is a $q$-point metric where two vertices have distance $0$ if their $a$-label is equal and $1$ otherwise. Then, each pair of vertices is corrupted with probability $1/2$ independently, and the adversary lowers the distance to $0$ if and only if their $b$-labels are different. 

Since $d$ has the MST cost $q-1$, the approximation algorithm returns a tree $T$ whose expected cost in $d$ is $O(q \cdot n/(q \log^3 n)) = O(n/\log^3 n)$, so, by Markov's inequality, with probability $1-O(1/\log n)$, the cost of $T$ in $d$, which is the number of edges of $T$ connecting vertices with different $a$-values (or planted colors), is at most $n/(2 \log^2 n)$. Then, in the preorder traversal $S$, there are at most $n / \log^2 n$ pairs of consecutive vertices with different $a$-values, so the greedy coloring will color $G$ with at most $n / \log^2 n$ colors. 
\end{proof}

\appendix
\section{Hardness of Computing a Minimum Dominating Tree.}
\label{app:dominating-tree-hardness}

In this appendix, we show that computing a minimum dominating tree from the
observed distances is at least as hard as computing the chromatic number of a
graph. The hardness holds even when every observed distance belongs to
$\{0,1\}$.

Let $X$ be a finite point set and let
$\widehat d \colon \binom{X}{2} \to \{0,1\}$ be a symmetric observed-distance
function. We do not require $\widehat d$ to be a metric. For a spanning tree
$T$ on $X$, define its observed cost by
\[
    \widehat c(T)
    :=
    \sum_{\{u,v\}\in E(T)} \widehat d(u,v),
\]
and let $\widehat d_T(u,v)$ denote the sum of the observed edge lengths along
the unique $u$--$v$ path in $T$. Recall that $T$ is dominating if
\[
    \widehat d_T(u,v) \geq \widehat d(u,v)
    \qquad
    \text{for every distinct }u,v\in X.
\]

\begin{theorem}[Graph-coloring hardness]%
\label{thm:dominating-tree-hardness}%
Computing a minimum-cost dominating tree is NP-hard, even when all observed
distances belong to $\{0,1\}$.

More precisely, given a graph $G=(V,E)$, define the observed distances
$\widehat d_G$ on $V$ by
\[
    \widehat d_G(u,v)
    :=
    \begin{cases}
        1, & \text{if }\{u,v\}\in E,\\
        0, & \text{if }\{u,v\}\notin E.
    \end{cases}
\]
Then
\[
    \min_{\substack{T\text{ spanning tree}\\T\text{ dominating}}}
        \widehat c(T)
    =
    \chi(G)-1,
\]
where $\chi(G)$ is the chromatic number of $G$.

Consequently, for every integer $k\geq 1$, the graph $G$ is $k$-colorable if
and only if there exists a dominating tree of observed cost at most $k-1$.
In particular, the corresponding decision problem is NP-complete, and inherits the approximation hardness of graph coloring.
\end{theorem}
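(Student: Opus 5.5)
We prove the identity $\min_T \widehat c(T)=\chi(G)-1$ by two inequalities. The key observation is a characterization of domination for $\{0,1\}$-valued distances: the only way a pair can violate domination is if $\widehat d_G(u,v)=1$ (an edge of $G$) while $\widehat d_T(u,v)=0$. Hence $T$ is dominating if and only if every $G$-edge $\{u,v\}$ has at least one cost-$1$ tree edge on its $T$-path. Equivalently, delete the cost-$1$ edges of $T$, leaving a forest whose components are joined by cost-$0$ tree edges. Then $T$ is dominating iff no $G$-edge lies inside a single component, i.e.\ iff every component of this forest is an independent set of $G$.

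\textbf{Upper bound.} Given a proper coloring with $k=\chi(G)$ color classes $V_1,\dots,V_k$, we build $T$ as follows. Inside each class $V_i$, connect the vertices by an arbitrary spanning path; every such pair is a non-edge of $G$, so these tree edges have cost $0$. Then join the $k$ paths into one tree using $k-1$ additional edges, each of observed cost at most $1$. The zero-cost components of $T$ are contained in (unions of) color classes joined only through cost-$0$ edges. To keep the characterization clean, we want the components of $T$ minus its cost-$1$ edges to be exactly the color classes. Any connecting edge that happens to have cost $0$ merges two classes into one component, which could create a violation. To avoid this we can argue in one of two ways. One option is to note that in a minimum coloring this merging is harmless only when the union remains independent, and then simply count. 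The simpler option is to observe that merging can only lower the cost, so we instead take $T$ to be exactly the class paths plus $k-1$ connectors and verify domination directly via the characterization. If a cost-$0$ connector merges two classes into a single component, that component contains a $G$-edge only if the union is not independent, but a cost-$0$ connector between $V_i$ and $V_j$ does not by itself guarantee independence of $V_i\cup V_j$. The cleanest fix is therefore to use the lower-bound argument in reverse: pick a dominating tree directly as follows. Choose a coloring minimizing the number of classes; if some connector has cost $0$, re-choose connectors so that they are $G$-edges where possible. Failing that, we bound the cost by $k-1$ anyway and apply the lower bound to conclude equality, since a dominating tree of cost $<k-1$ would contradict minimality. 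We expect this bookkeeping to be the one delicate step. The fallback is to define $T$'s zero-components as color classes and to note that every cost-$0$ connector we might be forced to use can be replaced by a cost-$1$ edge choice that preserves the count $k-1$, giving cost exactly at most $\chi(G)-1$.

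\textbf{Lower bound.} Let $T$ be any dominating tree with cost $c$. Removing its $c$ cost-$1$ edges leaves exactly $c+1$ components. By the characterization each component is independent in $G$, so these components form a proper coloring with $c+1$ colors. Hence $\chi(G)\le c+1$, i.e.\ $c\ge \chi(G)-1$.

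\textbf{Consequences.} The "consequently" clause follows immediately: $G$ is $k$-colorable iff a dominating tree of cost at most $k-1$ exists. Membership in NP holds because domination is checkable in polynomial time, so NP-completeness follows from the NP-completeness of $k$-coloring. Since the reduction is exact and cost-preserving up to the additive $1$, inapproximability of chromatic number transfers to approximating the minimum dominating tree cost.
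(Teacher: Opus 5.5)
Your characterization of domination and your lower bound are correct, and they match the paper's argument: delete the cost-$1$ edges, and each of the resulting $c+1$ zero-components must be independent in $G$.

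\textbf{The gap is in the upper bound.} You correctly see the obstacle: a cost-$0$ connector between $V_i$ and $V_j$ merges them into one zero-component and may create a violated pair. None of your proposed fixes resolves it.
\begin{itemize}
\item ``Bound the cost by $k-1$ anyway and apply the lower bound'' is circular. The lower bound only says that every \emph{dominating} tree has cost at least $\chi(G)-1$. It gives no dominating tree of cost at most $\chi(G)-1$, and bounding the cost of a tree that may not be dominating proves nothing.
\item ``Re-choose connectors so that they are $G$-edges where possible'' leaves open what happens when this is not possible.
\item The fallback claim, that every cost-$0$ connector ``can be replaced by a cost-$1$ edge choice,'' is exactly the statement that needs proof. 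You give no reason why a $G$-edge should exist between the relevant classes.
\end{itemize}
As written, the inequality $\min_T \widehat c(T)\le \chi(G)-1$ is therefore not established.

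\textbf{The missing idea is a one-line use of the minimality of the coloring.} Let $V_1,\dots,V_k$ be a proper coloring with $k=\chi(G)$. For every $i\neq j$ there is at least one edge of $G$ between $V_i$ and $V_j$; otherwise $V_i\cup V_j$ would be independent, and merging the two classes would give a proper $(k-1)$-coloring.

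So you can choose all $k-1$ connectors to be $G$-edges, for instance a star joining $V_1$ to each $V_i$ ($i\ge 2$) by some edge $e_i\in E$. After contracting the zero-cost class paths, these connectors form a star, so $T$ is a spanning tree. Every connector has cost exactly $1$, so the zero-components of $T$ are exactly the color classes, which are independent. By your own characterization, every $G$-edge has its endpoints in different classes and hence a cost-$1$ edge on its tree path. Thus $T$ is dominating with $\widehat c(T)=k-1$.

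This is precisely how the paper completes the upper bound. With this observation in place of your bookkeeping paragraph, the rest of your proof goes through.
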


\begin{proof}
We first show how a dominating tree gives a proper coloring. Let $T$ be any
dominating tree, and write
\[
    r := \widehat c(T).
\]
Because every edge of $T$ has observed length either $0$ or $1$, the value
$r$ is exactly the number of unit-length edges of $T$. Delete these
unit-length edges. Since $T$ is a tree, the remaining zero-length forest has
exactly $r+1$ connected components.

We claim that each such component is an independent set in $G$. Indeed, let
$u$ and $v$ belong to the same component. The entire $u$--$v$ path in $T$
then consists of zero-length edges, and hence
\[
    \widehat d_T(u,v)=0.
\]
Since $T$ is dominating,
\[
    \widehat d_G(u,v)
    \leq
    \widehat d_T(u,v)
    =
    0.
\]
It follows that $\widehat d_G(u,v)=0$, and therefore
$\{u,v\}\notin E$. Thus, assigning one color to each zero-length component
gives a proper coloring of $G$ with $r+1$ colors. Consequently,
\[
    \chi(G)\leq r+1,
\]
and hence every dominating tree satisfies
\[
    \widehat c(T)\geq \chi(G)-1.
\]

For the reverse inequality, let
\[
    V=C_1\mathbin{\dot\cup}C_2\mathbin{\dot\cup}\cdots
      \mathbin{\dot\cup}C_q,
    \qquad q=\chi(G),
\]
be an optimal proper coloring of $G$. Every pair of vertices in the same
color class is a non-edge of $G$, and therefore has observed distance zero.
For each $i$, connect the vertices of $C_i$ by an arbitrary spanning tree
$F_i$. Every edge of $F_i$ has observed length zero.

Moreover, any two distinct color classes $C_i$ and $C_j$ have at least one
edge of $G$ between them. Otherwise, $C_i\cup C_j$ would be an independent
set, and the two classes could be merged, contradicting the optimality of
the coloring. Thus, for every $i\in\{2,\ldots,q\}$, we may choose an edge
\[
    e_i\in E
\]
with one endpoint in $C_1$ and the other endpoint in $C_i$.

Let
\[
    T
    :=
    \left(\bigcup_{i=1}^q F_i\right)
    \cup
    \{e_2,\ldots,e_q\}.
\]
After contracting each $F_i$, the selected edges form a star centered at
$C_1$. Therefore $T$ is a spanning tree. Its only unit-length edges are
$e_2,\ldots,e_q$, so
\[
    \widehat c(T)=q-1=\chi(G)-1.
\]

It remains to verify that $T$ is dominating. If
$\widehat d_G(u,v)=0$, then the domination inequality is immediate because
all tree distances are nonnegative. If $\widehat d_G(u,v)=1$, then
$\{u,v\}\in E$, so $u$ and $v$ lie in different color classes. Their path
in $T$ must therefore cross at least one of the unit-length edges
$e_2,\ldots,e_q$. Hence
\[
    \widehat d_T(u,v)\geq 1=\widehat d_G(u,v).
\]
Thus $T$ is dominating, proving
\[
    \min_{\substack{T\text{ spanning tree}\\T\text{ dominating}}}
        \widehat c(T)
    \leq
    \chi(G)-1.
\]
Combining the two inequalities proves the claimed equality.

The reduction clearly takes polynomial time. Membership of the decision
problem in NP follows because, given a tree, one can compute its cost and
verify all domination inequalities in polynomial time.
\end{proof}

The proof also gives an explicit correspondence: deleting the unit-length
edges of any optimal dominating tree recovers an optimal coloring of the
original graph.

\section{AI Disclosure.}

GPT 5.4/5.5 Thinking/Pro was used to prove an earlier version of \Cref{lemma:dominatinglog} (the authors earlier proved an $O(\log^2 n)$ bound). We also used it to write proofs of elementary probability lemmas (\Cref{lemma:coreoutliereff,lemma:pointseedsound,lemma:distancecores}), adapt hardness results to the precise setting (given input from the authors), show the hardness of computing dominating trees (\Cref{thm:dominating-tree-hardness}) and to create figures and give feedback on parts of the paper. 

All the algorithmic approaches in this paper were discovered by the authors, and all AI-assisted proofs were verified by the authors.

\bibliographystyle{alpha}
\bibliography{ref}

@article{Jerrum1992LargeCliques,
  author  = {Jerrum, Mark},
  title   = {Large Cliques Elude the Metropolis Process},
  journal = {Random Structures \& Algorithms},
  volume  = {3},
  number  = {4},
  pages   = {347--359},
  year    = {1992},
  doi     = {10.1002/rsa.3240030402}
}

@article{Kucera1995ExpectedComplexity,
  author  = {Ku{\v{c}}era, Lud{\v{e}}k},
  title   = {Expected Complexity of Graph Partitioning Problems},
  journal = {Discrete Applied Mathematics},
  volume  = {57},
  number  = {2--3},
  pages   = {193--212},
  year    = {1995},
  doi     = {10.1016/0166-218X(94)00103-K}
}

@inproceedings{devlin2019bert,
  title={Bert: Pre-training of deep bidirectional transformers for language understanding},
  author={Devlin, Jacob and Chang, Ming-Wei and Lee, Kenton and Toutanova, Kristina},
  booktitle={Proceedings of the 2019 conference of the North American chapter of the association for computational linguistics: human language technologies, volume 1 (long and short papers)},
  pages={4171--4186},
  year={2019}
}

@inproceedings{he2016deep,
  title={Deep residual learning for image recognition},
  author={He, Kaiming and Zhang, Xiangyu and Ren, Shaoqing and Sun, Jian},
  booktitle={Proceedings of the IEEE conference on computer vision and pattern recognition},
  pages={770--778},
  year={2016}
}

@article{mikolov2013efficient,
  title={Efficient estimation of word representations in vector space},
  author={Mikolov, Tomas and Chen, Kai and Corrado, Greg and Dean, Jeffrey},
  journal={arXiv preprint arXiv:1301.3781},
  year={2013}
}

@article{AlonKrivelevichSudakov1998HiddenClique,
  author  = {Alon, Noga and Krivelevich, Michael and Sudakov, Benny},
  title   = {Finding a Large Hidden Clique in a Random Graph},
  journal = {Random Structures \& Algorithms},
  volume  = {13},
  number  = {3--4},
  pages   = {457--466},
  year    = {1998},
  doi     = {10.1002/(SICI)1098-2418(199810/12)13:3/4<457::AID-RSA14>3.0.CO;2-W}
}

@article{bollobas1988chromatic,
  title={The chromatic number of random graphs},
  author={Bollob{\'a}s, B{\'e}la},
  journal={Combinatorica},
  volume={8},
  number={1},
  pages={49--55},
  year={1988},
  publisher={Springer}
}

@article{buhai2025quasi,
  title={The quasi-polynomial low-degree conjecture is false},
  author={Buhai, Rares-Darius and Hsieh, Jun-Ting and Jain, Aayush and Kothari, Pravesh K},
  journal={arXiv preprint arXiv:2505.17360},
  year={2025}
}

@misc{wein2025computationalcomplexitystatisticsnew,
      title={Computational Complexity of Statistics: New Insights from Low-Degree Polynomials}, 
      author={Alexander S. Wein},
      year={2025},
      eprint={2506.10748},
      note={arXiv:2506.10748},
      archivePrefix={arXiv},
      primaryClass={math.ST},
      url={https://arxiv.org/abs/2506.10748}, 
}

@article{barak2019nearly,
  title={A nearly tight sum-of-squares lower bound for the planted clique problem},
  author={Barak, Boaz and Hopkins, Samuel and Kelner, Jonathan and Kothari, Pravesh K and Moitra, Ankur and Potechin, Aaron},
  journal={SIAM Journal on Computing},
  volume={48},
  number={2},
  pages={687--735},
  year={2019},
  publisher={SIAM}
}

@book{hopkins2018statistical,
  title={Statistical inference and the sum of squares method},
  author={Hopkins, Samuel},
  year={2018},
  publisher={Cornell University}
}

@inproceedings{kothari2023planted,
  title={Is planted coloring easier than planted clique?},
  author={Kothari, Pravesh and Vempala, Santosh S and Wein, Alexander S and Xu, Jeff},
  booktitle={The Thirty Sixth Annual Conference on Learning Theory},
  pages={5343--5372},
  year={2023},
  organization={PMLR}
}

@inproceedings{bateni2024metric,
  title={Metric clustering and mst with strong and weak distance oracles},
  author={Bateni, MohammadHossein and Dharangutte, Prathamesh and Jayaram, Rajesh and Wang, Chen},
  booktitle={The thirty seventh annual conference on learning theory},
  pages={498--550},
  year={2024},
  organization={PMLR}
}

@article{braverman2025relative,
  title={Relative Error Fair Clustering in the Weak-Strong Oracle Model},
  author={Braverman, Vladimir and Dharangutte, Prathamesh and Jiang, Shaofeng HC and Nguyen, Hoai An and Wang, Chen and Zhang, Yubo and Zhou, Samson},
  journal={Proceedings of Machine Learning Research},
  volume={267},
  pages={5418--5453},
  year={2025}
}

@article{pradhan2026improved,
  title={Improved Algorithms for Clustering with Noisy Distance Oracles},
  author={Pradhan, Pinki and Bhattacharya, Anup and Jaiswal, Ragesh},
  journal={arXiv preprint arXiv:2602.18389},
  year={2026}
}

@article{raychaudhury2026metric,
  title={Metric $k$-clustering using only Weak Comparison Oracles},
  author={Raychaudhury, Rahul and Esmailpour, Aryan and Galhotra, Sainyam and Sintos, Stavros},
  journal={arXiv preprint arXiv:2601.19333},
  year={2026}
}

@inproceedings{silwal2023kwikbucks,
  title={Kwikbucks: Correlation clustering with cheap-weak and expensive-strong signals},
  author={Silwal, Sandeep and Ahmadian, Sara and Nystrom, Andrew and McCallum, Andrew and Ramachandran, Deepak and Kazemi, Mehran},
  booktitle = {Proceedings of the 11th International Conference on Learning Representations (ICLR)},
  year      = {2023},
}

@article{blum1995coloring,
  title={Coloring random and semi-random k-colorable graphs},
  author={Blum, Avrim and Spencer, Joel},
  journal={Journal of Algorithms},
  volume={19},
  number={2},
  pages={204--234},
  year={1995},
  publisher={Elsevier}
}

@article{feige2001heuristics,
  title={Heuristics for semirandom graph problems},
  author={Feige, Uriel and Kilian, Joe},
  journal={Journal of Computer and System Sciences},
  volume={63},
  number={4},
  pages={639--671},
  year={2001},
  publisher={Elsevier}
}

@article{lykouris2021competitive,
  title={Competitive caching with machine learned advice},
  author={Lykouris, Thodoris and Vassilvitskii, Sergei},
  journal={Journal of the ACM (JACM)},
  volume={68},
  number={4},
  pages={1--25},
  year={2021},
  publisher={ACM New York, NY}
}

@inproceedings{kumar2018improving,
  title={Improving online algorithms via ML predictions},
  author={Kumar, Ravi and Purohit, Manish and Svitkina, Zoya},
  booktitle={Proceedings of the 32nd International Conference on Neural Information Processing Systems},
  pages={9684--9693},
  year={2018}
}

@article{mitzenmacher2020algorithms,
  title={Algorithms with Predictions},
  author={Mitzenmacher, Michael and Vassilvitskii, Sergei},
  journal={arXiv preprint arXiv:2006.09123},
  year={2020}
}

@article{diakonikolas2019robust,
  title={Robust estimators in high-dimensions without the computational intractability},
  author={Diakonikolas, Ilias and Kamath, Gautam and Kane, Daniel and Li, Jerry and Moitra, Ankur and Stewart, Alistair},
  journal={SIAM Journal on Computing},
  volume={48},
  number={2},
  pages={742--864},
  year={2019},
  publisher={SIAM}
}

@inproceedings{lai2016agnostic,
  title={Agnostic estimation of mean and covariance},
  author={Lai, Kevin A and Rao, Anup B and Vempala, Santosh},
  booktitle={2016 IEEE 57th Annual Symposium on Foundations of Computer Science (FOCS)},
  pages={665--674},
  year={2016},
  organization={IEEE}
}

@inproceedings{charikar2017learning,
  title={Learning from untrusted data},
  author={Charikar, Moses and Steinhardt, Jacob and Valiant, Gregory},
  booktitle={Proceedings of the 49th annual ACM SIGACT symposium on theory of computing},
  pages={47--60},
  year={2017}
}

@incollection{Feige20,
  author       = {Uriel Feige},
  editor       = {Tim Roughgarden},
  title        = {Introduction to Semirandom Models},
  booktitle    = {Beyond the Worst-Case Analysis of Algorithms},
  pages        = {189--211},
  publisher    = {Cambridge University Press},
  year         = {2020},
  url          = {https://doi.org/10.1017/9781108637435.013},
  doi          = {10.1017/9781108637435.013},
  bibsource    = {dblp computer science bibliography, https://dblp.org}
}
\end{document}